\newcommand{\hide}[1]{}
\newcommand{\spara}[1]{\smallskip\noindent{\bf #1}}
\newtheorem{mydefinition}{Definition}
\newtheorem{mytheorem}{Theorem}
\newtheorem{mycorollary}{Corollary}
\newtheorem{mylemma}{Lemma}
\newtheorem{problem}{Problem}
\newtheorem{fact}{Fact}
\DeclareMathOperator*{\argmax}{arg\,max}
\newcommand{\NPhard}{$\mathbf{NP}$-hard}
\renewcommand{\vec}[1]{\mathbf{#1}}
\newcommand{\mlkcore}{\mbox{\ensuremath{\vec{k}}-core}}
\newcommand{\mlkcores}{{\mlkcore}s}
\renewcommand{\deg}{\mbox{\ensuremath{deg}}}
\newcommand{\squishlist}{
 \begin{list}{$\bullet$}
  {  \setlength{\itemsep}{0pt}
     \setlength{\parsep}{3pt}
     \setlength{\topsep}{3pt}
     \setlength{\partopsep}{0pt}
     \setlength{\leftmargin}{2em}
     \setlength{\labelwidth}{1.5em}
     \setlength{\labelsep}{0.5em}
} }
\newcommand{\squishlisttight}{
 \begin{list}{$\bullet$}
  { \setlength{\itemsep}{0pt}
    \setlength{\parsep}{0pt}
    \setlength{\topsep}{0pt}
    \setlength{\partopsep}{0pt}
    \setlength{\leftmargin}{2em}
    \setlength{\labelwidth}{1.5em}
    \setlength{\labelsep}{0.5em}
} }
\newcommand{\squishdesc}{
 \begin{list}{}
  {  \setlength{\itemsep}{0pt}
     \setlength{\parsep}{3pt}
     \setlength{\topsep}{3pt}
     \setlength{\partopsep}{0pt}
     \setlength{\leftmargin}{1em}
     \setlength{\labelwidth}{1.5em}
     \setlength{\labelsep}{0.5em}
} }
\newcommand{\squishend}{
  \end{list}
}
\newcommand{\mlcoredecomposition}{\mbox{\textsc{Multilayer Core Decomposition}}}
\newcommand{\mldensestsubgraph}{\mbox{\textsc{Multilayer Densest Subgraph}}}
\newcommand{\corenessvec}{coreness vector}
\newcommand{\kcorealg}{\mbox{$\vec{k}$-\textsf{core}}}
\newcommand{\kcorespathalg}{\mbox{$\vec{k}$-\textsf{coresPath}}\xspace}
\newcommand{\bfs}{\textsc{bfs-ml}-\textsf{cores}}
\newcommand{\dfs}{\textsc{dfs-ml}-\textsf{cores}}
\newcommand{\hybrid}{\textsc{hybrid-ml}-\textsf{cores}}
\newcommand{\rmaxsub}{\textsc{rim-ml}-\textsf{cores}}
\newcommand{\rmax}{\textsc{im-ml}-\textsf{cores}}
\newcommand{\rmaxcore}{\textsf{Inner-mostCore}}
\newcommand{\coresset}{\ensuremath{\mathbf{C}}}
\newcommand{\imcoresset}{\ensuremath{\mathbf{I}}}
\newcommand{\rightimcoresset}{\ensuremath{\mathbf{I}_r}}
\newcommand{\density}{\ensuremath{\delta}}
\newcommand{\densestalg}{\mbox{\textsc{ml}-\textsf{densest}}}
\newcommand{\SLinnermost}{\ensuremath{C^{(\mu)}}}
\newcommand{\SLdensest}{\ensuremath{S^{*}_{\mbox{\textsc{sl}}}}}
\newcommand{\eqncomment}[2]{\ensuremath{#1\{\mbox{#2}\}}}
\newcommand{\dmax}{\ensuremath{deg_{max}}}
\newcommand{\revision}[1]{#1}
\begin{document}
\title[Core Decomposition in Multilayer Networks: Theory, Algorithms, and Applications]{Core Decomposition in Multilayer Networks:\\ Theory, Algorithms, and Applications}

\author{Edoardo Galimberti}
\affiliation{
 \institution{ISI Foundation}
 \city{Turin}
  \country{Italy}}
  \affiliation{
 \institution{University of Turin}
 \city{Turin}
  \country{Italy}}
\email{edoardo.galimberti@isi.it}

\author{Francesco Bonchi}
\affiliation{
 \institution{ISI Foundation}
 \city{Turin}
  \country{Italy}}
  \affiliation{
 \institution{Eurecat}
 \city{Barcelona}
  \country{Spain}}
\email{francesco.bonchi@isi.it}

\author{Francesco Gullo}
\affiliation{
 \institution{UniCredit, R\&D Department}
  \city{Rome}
  \country{Italy}}
\email{gullof@acm.org}

\author{Tommaso Lanciano}
\affiliation{
 \institution{Sapienza University of Rome}
  \city{Rome}
  \country{Italy}}
\email{lanciano@diag.uniroma1.it}

\begin{abstract}
Multilayer networks are a powerful paradigm to model complex systems, where multiple relations occur between the same entities.
Despite the keen interest in a variety of tasks, algorithms, and analyses in this type of network, the problem of extracting  dense subgraphs has remained largely unexplored so far.

As a first step in this direction, in this work
we study the problem of \emph{core decomposition of a multilayer network}.
Unlike the single-layer counterpart in which cores are all nested into one another and can be computed in linear time, the multilayer context is much more challenging as no total order exists among multilayer cores;
rather, they form a lattice whose size is exponential in the number of layers.
In this setting we devise three algorithms which differ in the way they visit the core lattice and in their pruning techniques.
We assess time and space efficiency of the three algorithms on a large variety of real-world multilayer networks.

We then move a step forward and study the problem of extracting the \emph{inner-most} (also known as \emph{maximal}) cores, i.e., the cores that are not dominated by any other core in terms of their core index in all the layers.
Inner-most cores are typically orders of magnitude less than all the cores.
Motivated by this, we devise an algorithm  that effectively exploits the maximality property and extracts inner-most cores directly, without first computing a complete decomposition. This allows for a consistent speed up over a na\"ive method that simply filters out non-inner-most ones from all the cores.

Finally, we showcase the multilayer core-decomposition tool in a variety of scenarios and problems.
We start by considering the problem of \emph{densest-subgraph extraction in multilayer networks}.
We introduce a definition of multilayer densest subgraph that trades-off between high density and number of layers in which the high density holds, and exploit multilayer core decomposition to approximate this problem with quality guarantees.
As further applications, we show how to utilize multilayer core decomposition to speed-up the extraction of \emph{frequent cross-graph quasi-cliques} and to generalize the \emph{community-search} problem to the multilayer setting.
\end{abstract}

\begin{CCSXML}
<ccs2012>
<concept>
<concept_id>10002950.10003624.10003633.10010917</concept_id>
<concept_desc>Mathematics of computing~Graph algorithms</concept_desc>
<concept_significance>500</concept_significance>
</concept>
<concept>
<concept_id>10002950.10003624.10003633</concept_id>
<concept_desc>Mathematics of computing~Graph theory</concept_desc>
<concept_significance>300</concept_significance>
</concept>
<concept>
<concept_id>10002950.10003624.10003633.10010918</concept_id>
<concept_desc>Mathematics of computing~Approximation algorithms</concept_desc>
<concept_significance>300</concept_significance>
</concept>
<concept>
<concept_id>10002951.10003227.10003351</concept_id>
<concept_desc>Information systems~Data mining</concept_desc>
<concept_significance>100</concept_significance>
</concept>
</ccs2012>
\end{CCSXML}

\ccsdesc[500]{Mathematics of computing~Graph algorithms}
\ccsdesc[300]{Mathematics of computing~Graph theory}
\ccsdesc[300]{Mathematics of computing~Approximation algorithms}
\ccsdesc[100]{Information systems~Data mining}

\keywords{Graph mining, Multilayer networks, Core decomposition, Dense-subgraph extraction, Cliques and quasi-cliques, Community search}

\maketitle
\sloppy

\section{Introduction}
\label{sec:intro}
%!TEX root = tkdd_revision.tex

%In social media and social networks, as well as in several other real-world contexts -- such as biological and financial networks, transportation systems and critical infrastructures -- there might be multiple types of relation among entities. Data in these domains is typically modeled as a \emph{multilayer network} (also known as multidimensional network), i.e., a graph\footnote{Throughout the paper we use the terms ``network'' and ``graph'' interchangeably.} where multiple edges of different types may exist between any pair of vertices~\cite{DickisonMagnaniRossi2016,cai2005community, lee2015towards}.
In several real-world contexts, such as social media, biological networks, financial networks, transportation systems, 
it is common to encounter multiple relations among the objects of the underlying domain.
Data in these scenarios is therefore modeled as a graph\footnote{Throughout the paper we use the terms ``network'' and ``graph'' interchangeably.} composed of a superimposition of different layers, i.e., where 
multiple edges of different types exist between a pair of vertices~\cite{DickisonMagnaniRossi2016,cai2005community, lee2015towards}.
\revision{
In the literature different terminologies have been used for graphs of this kind: \emph{multilayer networks}, \emph{multiplex networks}, \emph{multidimensional networks}, \emph{multirelational networks}, \emph{multislice networks}, and more.
No uniformity exists in this regard, and the various terms may also refer to slightly different concepts.
In this work we deal with networks composed of multiple layers, \emph{with no inter-layer links}, and hereinafter use the term ``\emph{multilayer networks}'' to refer to them.\footnote{\revision{Another popular terminology  for networks with multiple layers and no inter-layer links is ``\emph{multiplex networks}''}.}
}

Extracting dense structures from large graphs has emerged as a key graph-mining primitive in a variety of scenarios~\cite{aggarwal},
ranging from web mining~\cite{gibson}, to biology~\cite{fratkin,langston}, and
finance~\cite{Du}.
Although the literature on multilayer graphs has grown fast in the last years, the problem of extracting dense subgraphs in this type of graph has remained, surprisingly, largely unexplored.

In single-layer graphs, among the many definitions of a dense structure, \emph{core decomposition} plays a central role~\cite{bonchi18core}.
The $k$-\emph{core} of a  graph is defined as a maximal subgraph in which every vertex has at least $k$ neighbors within that subgraph.
The set of all $k$-cores of a graph $G$ forms the \emph{core decomposition} of $G$~\cite{Seidman1983k-cores}.
The appeal of core decomposition lies in the fact that it can be computed very fast, in linear time~\cite{MatulaB83,batagelj2011fast},
while, at the same time, being easily used to speed-up/approximate dense-subgraph extraction according to various other definitions.
For instance, core decomposition may speed up maximal-clique finding~\cite{EppsteinLS10}, as a
$k$-clique is guaranteed to be contained into a $(k\!-\!1)$-core, which can be significantly smaller than the original graph.
 Moreover, core decomposition can be used to approximate betweenness centrality~\cite{HealyJMA06}, and
to design linear-time approximation algorithms for the densest-subgraph~\cite{KortsarzP94} and the densest at-least-$k$-subgraph~\cite{AndersenC09} problems.

In this work we study the problem of \emph{core decomposition in multilayer networks}.
A major challenge with respect to the single-layer setting is that the number of multilayer cores is exponential in the number of layers.
We achieve this challenge by devising a number of pruning techniques to avoid from-scratch computation of all the cores and exploit such techniques to devise efficient algorithms.

Nevertheless, computing multilayer core decomposition efficiently is not enough.
Due the potentially high number of multilayer cores, a further major desideratum is to provide a data analyst with additional tools to browse through the output, and ultimately select the patterns of interest. The situation resembles that of the classic \emph{association-rule} and \emph{frequent-itemset} mining: a potentially exponential output, efficient algorithms to extract all the patterns, the need to have
concise summaries of the extracted knowledge, and the opportunity of using the extracted patterns as building blocks for more sophisticated analyses.
Following this direction, we present a number of problems and applications built on top of the multilayer core-decomposition tool.
First we focus on the problem of extracting only the \emph{maximal} or, as we call them in this work, the \emph{inner-most cores}, i.e., cores that are not ``dominated'' by any other core.
%in terms of their index on all the layers. 
As experimentally observed, inner-most cores are orders of magnitude less than all the cores. 
Therefore, it is desirable to design algorithms that effectively exploit the maximality property and extract inner-most cores directly, without first computing a complete decomposition. Then, we show how multilayer core decomposition finds application to the problem of \emph{densest-subgraph extraction in multilayer networks} \cite{jethava2015finding,charikar2018finding}. As a further application, we exploit multilayer core decomposition to speed-up the extraction of \emph{frequent cross-graph quasi-cliques}~\cite{jiang2009mining}. Finally, we exploit multilayer core decomposition  to generalize the \emph{community-search} problem \cite{Sozio} to the multilayer setting.

\subsection{Background and related work}
\label{sec:related}
%!TEX root = tkdd_revision.tex
\spara{Core decomposition.}
Given a single-layer graph $G = (V,E)$ and a vertex $u \in V$, let $deg(u)$ and $deg_S(u)$ denote the degree of $u$ in $G$ and in a subgraph $S$ of $G$, respectively.
Also, given a subset  $C \subseteq V$ of vertices, let $E[C]$ denote the subset of edges induced by $C$.
%The classic notion of core decomposition in a single-layer graph is as follows. 
%, i.e., $E[C] = \{(u,v) \in E \mid u \in C, v \in C\}$.
%
%\begin{mydefinition}[core decomposition]\label{def:kcores}
%  The $k$\emph{-core} of $G$ is a
%  \emph{maximal} subgraph $G[C] = (C,E[C])$ such that $\forall v \in C:
%  deg_C(v) \geq k$.  The \emph{core number} (or \emph{core index}) $c(v)$ of
%  a vertex $v$ is the highest order of a core that
%  contains $v$.  The set of all $k$-cores $G = C_0 \supseteq C_1 \supseteq \cdots \supseteq C_{k^*}$ ($k^* = \max_{v \in V} c(v)$) is the
%  \emph{core decomposition} of $G$.\hfill\qed
%\end{mydefinition}
%%%\vspace{-0mm}
\begin{mydefinition}[core decomposition]\label{def:kcores}
  The $k$\emph{-core} (or core of order $k$) of a single-layer graph $G = (V,E)$ is a
  \emph{maximal} subgraph $G[C_k] = (C_k,E[C_k])$ such that $\forall u \in C_k: deg_{C_k}(u) \geq k$.
  %The \emph{core number} (or \emph{core index}) $c(v)$ of a vertex $v$ is the highest order of a core that contains $v$.
  The set  $G = C_0 \supseteq C_1 \supseteq \cdots \supseteq C_{k^*}$ of all $k$-cores ($k^* = \arg\max_{k} C_k \neq \emptyset$) is the
  \emph{core decomposition} of $G$.%\hfill\qed
\end{mydefinition}
%%%\vspace{-0mm}
Core decomposition can be computed in linear time by iteratively removing the smallest-degree vertex and setting its core number as its degree at the time of removal~\cite{batagelj2011fast}.
Core decomposition has established itself as an important tool for network analysis and visualization~\cite{DBLP:conf/gd/BatageljMZ99,Alvarez-HamelinDBV05}
%has been employed for analyzing/visualizing complex networks \cite{Alvarez-HamelinDBV05}
in several domains, e.g., bioinformatics~\cite{DBLP:journals/bmcbi/BaderH03,citeulike:298147},
%\cite{citeulike:298147},
software engineering~\cite{DBLP:journals/tjs/ZhangZCLZ10},
and social networks~\cite{Kitsak2010,GArcia2013}.
%\cite{Kitsak2010}.
It has been studied under various settings, such as distributed~\cite{AksuDistributed2014,KhaouidKcore2015,DistributedCores1,DistributedCores2}, streaming~\cite{LiEfficient2014,StreamingCores,ZhangFast2017}, and external-memory~\cite{DiskCores,WenIO2016}, and for various types of graph, such as uncertain~\cite{bonchi14cores}, directed~\cite{DirectedCores}, weighted~\cite{WeigthedCores}, and attributed~\cite{zhang2017engagement} graphs.
Core decomposition has been studied also for temporal networks:~\cite{wu2015core} defines the $(k,h)$-core, where $h$ accounts for the number of multiple temporal edges between two vertices of degree at least $k$, while~\cite{galimberti2018mining} introduces the concept of (maximal) span-core, i.e., a core structure assigned with clear temporal collocation.
See~\cite{bonchi18core} for a comprehensive survey.

\revision{
In this paper we adopt the definition of a multilayer core by Azimi-Tafreshi~{\em et~al.}~\cite{MultiplexCores}, 
that is a core identified by an $|L|$-dimensional integer vector $\vec{k}$ (with $|L|$ being the number of layers of the given multilayer network), where every component of $\vec{k}$ refers to the minimum-degree constraint  in the corresponding layer, i.e., $\vec{k}[i]$  states the minimum degree required for that core in the $i$-th layer, for all $i \in [1..|L|]$.
Apart from introducing the multilayer-core definition,  Azimi-Tafreshi~{\em et~al.} study the core-percolation problem from a physics standpoint, with no algorithmic contribution: they characterize cores of 2-layer Erd\H{o}s-R\'{e}nyi and  scale-free networks, and observe how this characterization fits real-world air-transportation networks.
}
To the best of our knowledge, \emph{no prior work has studied how to efficiently compute the complete core decomposition of multilayer networks}.

\spara{Densest subgraph.}
Several notions of \emph{density} exist in the literature, each of which leading to a different version of the dense-subgraph-discovery problem.
While most variants are \NPhard\ and/or inapproximable, extracting dense subgraphs according to the \emph{average-degree density}
%(i.e., two times the number of edges divided by the number of vertices)
is solvable in polynomial time~\cite{Goldberg84}.
As a result, such a density has attracted most of the research in the field, so that the subgraph maximizing the average degree is commonly referred to as the \emph{densest subgraph}.
Goldberg~\cite{Goldberg84} provides an exact algorithm for finding the densest subgraph which is based on iteratively solving ad-hoc-defined minimum-cut problem instances.
Although principled, the Goldberg's algorithm cannot scale to large graphs.
Asahiro~\emph{et~al.}~\cite{AITT00} devises a linear-time greedy algorithm that has been shown to achieve $\frac{1}{2}$-approximation guarantee by Charikar~\cite{Char00}.
Such a greedy algortihm iteratively removes the smallest-degree vertex, and,
among all the subgraphs yielded during this vertex-removal process, the densest one is ultimately output.
%It is important to note the similarity to the core-decomposition algorithm.
Note that this algorithm resembles the one used for core decomposition.
In fact, it can be proved that the inner-most core of a graph is itself a $\frac{1}{2}$-approximation of the densest subgraph.

%In the classic definition of densest subgraph there is no size restriction.
Variants of the densest-subgraph problem with size constraints turn out to be \NPhard.
For these variants, approximation algorithms and other theoretical results have been presented~\cite{AHI02,FPK01,AKK95,AndersenC09}.
A number of works  focus on extracting a subgraph maximizing densities other than the average degree.
For instance, Tsourakakis~\emph{et~al.}~\cite{BabisKDD13} resort to the notion of quasi-clique, while Tsourakakis~\cite{Tsourakakis15a} and Wang~\emph{et~al.}~\cite{TriangleDensePVLDB10} focus on notions of density based on k-cliques and/or triangles.
The densest-subgraph problem has also been studied in different settings, such as streaming/dynamic context~\cite{DensestStreaming,BhattacharyaHNT15,EpastoEfficient2015}, and top-$k$ fashion~\cite{Balalau15,GalbrunGT16,nasir2017fully}.
%However, to the best of our knowledge, \emph{the problem of finding the densest subgraph in multilayer networks has never been studied before}.

%The densest-subgraph problem has been studied in a streaming setting~\cite{DensestStreaming,BhattacharyaHNT15}, and a top-$k$ fashion \cite{Balalau15,GalbrunGT16}.
%Also, a number of works depart from the classic average-degree density and focus on, e.g., quasi-clique-based density~\cite{BabisKDD13}, or triangle density~\cite{TriangleDensePVLDB10,Tsourakakis15a}.
%However, to the best of our knowledge, \emph{the problem of finding the densest subgraph in multilayer networks has never been studied so far}.

\spara{Dense structures in multilayer networks.}
%\textcolor[rgb]{0.00,0.00,1.00}{
%Recently, increasing attention has been paid to the problem of extracting dense subgraphs from multilayer networks.
%Jethava~\emph{et~al.}~\cite{jethava2015finding} generalize the densest-subgraph problem to the multilayer context by formulating the \emph{densest common subgraph} problem, i.e., find a subgraph maximizing the minimum average degree over \emph{all} layers.
%The authors devise a linear-programming formulation of the problem, as well as a heuristic greedy algorithm.
%However, no approximation algorithms with provable quality guarantees are provided.
%A subsequent work by Andersson~\emph{et~al.}~\cite{reinthal2016finding} shows how to shorten the running time of the linear-programming formulation of~\cite{jethava2015finding}, also comparing it with a greedy algorithm and a Lagrangian relaxation.
%Furthermore, the work enclosed in~\cite{semertzidis2016best} explores four different variants of the densest common subgraph problem corresponding to different choices of the aggregation function over the layers (i.e., minimum minimum degree, minimum average degree, average minimum degree, and average average degree).
%It shows simple exact algorithms for the minimum-minimum and average-average cases, and approximated algorithms for the others.
%Finally,~\cite{charikar2018finding} shows \NPhard ness, hardness of approximation, additional approximation algorithms for the densest common subgraph problem and its variants, and an integrality gap for a natural relaxation.}
Several recent works have dealt with the problem of extracting dense subgraphs from a set of multiple graphs sharing the same vertex set, which is a setting equivalent to the multilayer one we study in this work.
Jethava~and~Beerenwinkel~\cite{jethava2015finding} define the \emph{densest common subgraph} problem, i.e., find a subgraph maximizing the minimum average degree over \emph{all} input graphs, and devise a linear-programming formulation and a greedy heuristic for it.
Reinthal~\emph{et~al.}~\cite{reinthal2016finding} provide a Lagrangian relaxation of the Jethava~and~Beerenwinkel's linear program, which can be solved more efficiently.
Semertzidis~\emph{et~al.}~\cite{semertzidis2016best} introduce three more variants of the problem, whose goal is to maximize the average average degree, the minimum minimum degree, and the average minimum degree, respectively. They show that the average-average variant  reduces to the traditional densest-subgraph problem, and  the minimum-minimum variant is polynomial-time solvable by a simple adaptation of the algorithm for core decomposition. They also devise heuristics for the remaining two variants.
Charikar~\emph{et~al.}~\cite{charikar2018finding} further focus on the minimum-average and average-minimum formulations, by providing several theoretical findings, including \NPhard{ness}, hardness of the approximation (for both minimum-average and average-minimum), an integrality gap for the linear-programming relaxation introduced in~\cite{jethava2015finding,reinthal2016finding} (for minimum-average), and a characterization in terms of parameterized complexity (for average-minimum).

Other contributions in this area, less directly related to our work, deal with specific cases of 2-layer networks~\cite{wu2015finding, shen2015forming} and with the \emph{community-detection} problem~\cite{berlingerio2011findingredundant, mucha2010community, papalexakis2013more,cai2005community, TagarelliEnsemble2017, tang2010community, yin2013multi}.
Boden~\emph{et~al.}~\cite{boden2012mining} study \emph{subspace clustering} for multilayer graphs, i.e., find clusters of vertices that are densely connected by edges with similar labels for all possible label sets.
Yan~\emph{et~al.}~\cite{yan2005mining} introduce the problem of mining \emph{closed relational graphs}, i.e., frequent subgraphs of a multilayer graph exhibiting large minimum cut.
Jiang~\emph{et~al.}~\cite{jiang2009mining} focus on extracting \emph{frequent cross-graph quasi-cliques}, i.e., subgraphs that are quasi-cliques in at least a fraction of layers equal to a certain minimum support and have size larger than a given threshold.
Interdonato~\emph{et~al.}~\cite{InterdonatoTISP17} are the first to study the problem of \emph{local community detection in multilayer networks}, i.e., when a seed vertex is given and the goal is to reconstruct its community by having only a limited local view of the network.
Finally, Zhu~\emph{et~al.}~\cite{zhu2018diversified} address the problem of finding the $k$ most diversified $d$-coherent cores, i.e., the $k$ subgraphs having minimum degree at least $d$ that maximize the coverage of the vertices.

In this work, in  Section~\ref{sec:densest}, we introduce a formulation of the densest-subgraph problem in multilayer networks
that trades off between high density and number of layers where the high density holds.
We apply multilayer core decomposition  to provide provable approximation guarantees for this problem.
We also show that our formulation generalizes the minimum-average densest-common-subgraph problem studied in~\cite{charikar2018finding,jethava2015finding,reinthal2016finding,semertzidis2016best}, and our method achieves  approximation guarantees for that problem too.
Furthermore, in Section~\ref{sec:quasicliques}, we show how to exploit multilayer core decomposition to speed-up the problem of finding frequent cross-graph quasi-cliques~\cite{jiang2009mining}.

\spara{Community search.}
Given a (single-layer) graph and a set of query vertices, the \emph{community search} problem aims at finding a cohesive subgraph containing the query vertices.
Community search has received a great deal of attention in the data-mining community in the last few years (see e.g., a recent tutorial \cite{HuangLX17}).  
Sozio~and~Gionis~\cite{Sozio} are the first to introduce the community-search problem, by employing the minimum degree as a cohesiveness measure.
Their formulation can be solved by a simple (linear-time) greedy algorithm, which is very similar to the one proposed in~\cite{Char00} for the densest-subgraph problem.
More recently, Cui~\emph{et~al.}~\cite{SozioLocalSIGMOD14} devise a local-search approach to improve the efficiency of the method defined in~\cite{Sozio}, but only for the special case of a single query vertex.
The minimum-degree-based problem has been further studied in~\cite{BarbieriBGG15}, by exploiting core decomposition as a preprocessing step to allow more efficient and effective solutions.

Several formulations of the community search have also been studied under different names and in slightly different settings.
Andersen~and~Lang~\cite{Andersen1} and Kloumann~and~Kleinberg~\cite{Kloumann} study \emph{seed set expansion} in social graphs, in order to find communities with small conductance or that are well-resemblant of the characteristics of the query vertices, respectively.
Other works define \emph{connectivity subgraphs} based on electricity analogues~\cite{connect}, random walks~\cite{CenterpieceKDD06}, the minimum-description-length principle~\cite{akoglu2013mining}, the Wiener index~\cite{ruchansky2015minimum},
%, i.e., the sum of all pairwise shortest-path distances between the vertices of a subgraph, 
and  network efficiency~\cite{RuchanskyBGGK17}.
%, a graph-theoretic notion that measures how efficiently a network can exchange information.
Community search has been formalized for attributed~\cite{huang2017attribute,fang2017attributed} and spatial graphs~\cite{fang2017spatial} as well.

In this work, in Section~\ref{sec:communitysearch}, we formulate the community-search problem for multilayer graphs, by adapting the early definition by Sozio~and~Gionis~\cite{Sozio}, and show how our algorithms for multilayer core decomposition can be exploited to obtain optimal solutions to this problem.

\subsection{Challenges, contributions, and roadmap}\label{sec:contributions}
\revision{
Let $G = (V,E,L)$ be a multilayer graph, where $V$ is a set of vertices, $L$ is a set of layers, and $E \subseteq V \times V \times L$ is a set of edges.
Given an $|L|$-dimensional integer vector $\vec{k} = [k_{\ell}]_{\ell \in L}$, the \emph{multilayer} $\vec{k}$-\emph{core} of $G$ is  a maximal subgraph whose vertices have at least degree $k_{\ell}$ in that subgraph, for all layers $\ell \in L$~\cite{MultiplexCores}.
}
Vector $\vec{k}$ is dubbed \emph{\corenessvec} of that core.
The set of all \emph{non-empty} and \emph{distinct} multilayer cores constitutes the \emph{multilayer core decomposition} of $G$.
A major challenge of computing the core decomposition of a multilayer network is that \emph{the number of multilayer cores are exponential in the number of layers}.
This makes the problem inherently hard, as the exponential size of the output clearly precludes the existence of polynomial-time algorithms in the general case.
%as the potentially exponential size of the output precludes the existence of polynomial-time algorithms in the general case.
In fact, unlike the single-layer case where cores are all nested into each other, no total order exists among multilayer cores.
Rather, they form a \emph{core lattice} defining partial containment.
As a result, 
%multilayer core decomposition cannot be solved in linear time like in single-layer graphs:
algorithms in the multilayer setting must be crafted carefully to handle this exponential blowup, and avoid, as much as possible, the computation of unnecessary (i.e., empty or non-distinct) cores.

A na\"{\i}ve way of computing a multilayer core decomposition consists in generating all  \corenessvec{s}, run for each vector $\vec{k}$, an algorithm that iteratively removes vertices whose degree in a layer $\ell$ is less than the $\ell$-th component of $\vec{k}$, and filter out empty and duplicated cores. 
This method has evident efficiency issues, as every core is computed  from the whole input graph, and it does not avoid generation of empty or non-distinct cores at all.
As our first contribution, we devise three more efficient algorithms that exploit effective pruning rules during the visit of the  lattice.
%, thus being much more efficient than the na\"{\i}ve counterpart.
The first two methods are based on a \textsc{bfs} and a \textsc{dfs} strategy, respectively: the \textsc{bfs} method exploits the fact that a core is contained into the intersection of all its fathers in the lattice, while the \textsc{dfs} method iteratively performs a single-layer core decomposition to compute, one-shot, all cores along a path from a non-leaf lattice core to a leaf.
The third method adopts a \textsc{hybrid} strategy embracing the main pros of \textsc{bfs} and \textsc{dfs}, and equipped with a \emph{look-ahead} mechanism to skip non-distinct cores.

We then shift the attention to the problem of computing \emph{all and only the inner-most cores}, i.e., the cores that are not dominated by any other core in terms of their index on all the layers.
A straightforward way of approaching this problem would be to first compute the complete core decomposition, and then filter out the non-inner-most cores. However, as the inner-most cores are usually much less than the overall cores, it would be desirable to have a method that effectively exploits the maximality property and extracts the inner-most ones directly, without computing a complete decomposition. The design of an algorithm of this kind
is an interesting challenge, as it contrasts the intrinsic conceptual
properties of core decomposition, based on which a core of order $k$ (in one layer)
can be efficiently computed from the core of order $k - 1$, of which it is
a subset, thus naturally suggesting a bottom-up discovery. For this reason, at first glance, the computation of the core
of the highest order would seem as hard as computing the overall
core decomposition. In this work we show that, by means of a clever core-lattice visiting strategy, we can prune huge portions of the search space, thus achieving higher efficiency than computing the whole decomposition.

As a major application of multilayer core decomposition, we then focus on the problem of \emph{extracting the densest subgraph from a multilayer network}.
%A na\"{\i}ve approach to this problem could be that of flattening the input multilayer graph and resorting to single-layer densest-subgraph extraction. This would mean disregarding the different semantics of the layers, incurring in a severe information loss.
%Another simplistic approach corresponds extract a subgraph that is dense in \emph{all layers}.
%This way, even the noisy/insignificant layers would contribute to selecting the output subgraph, which would be not really dense, thus preventing us from finding a subgraph being dense in a still relatively large subset of layers.
\revision{
As already discussed in Section~\ref{sec:related}, a number of works aim at extracting a subgraph that maximizes the minimum average degree over \emph{all} layers~\cite{charikar2018finding,jethava2015finding,reinthal2016finding,semertzidis2016best}.
A major limitation of that formulation is that, considering all layers, even the noisy/insignificant ones would contribute to selecting the output subgraph, which might prevent us from finding a subgraph being dense in a still large subset of layers.
Another simplistic approach at the other end of the spectrum corresponds to flattening the input multilayer graph and resorting to single-layer densest-subgraph extraction. However, this would mean disregarding the different semantics of the layers, incurring in a severe information loss.
Within this view, in this work we generalize the problem studied in~\cite{charikar2018finding,jethava2015finding,reinthal2016finding,semertzidis2016best} by introducing a formulation that accounts for a trade-off between high density  and number of layers exhibiting the high density.
}
%Let $G = (V,L,E)$ be a multilayer graph where $V$ is the set of $n$ vertices, $L$ is a set of $h$ layers, and $E: V \times V \times L \rightarrow \{0,1\}$ is a function defining for each pair of vertices the existence of an edge in a given layer.
%Given a subset $S \subseteq V$ of vertices, its average-degree density in a layer $\ell \in L$ is
%$f(S,\ell) = 2 e_{\ell}[S] / |S|$, where $e_{\ell}[S]$ denotes the cardinality of $E_{\ell}[S] = \{ (u,v) \mid u,v \in V \land E(u,v,\ell) = 1 \}$, i.e., the subset of edges induced by $S$ in the layer $\ell$.
Specifically, given a multilayer graph $G=(V,E,L)$, the average-degree density of a subset of vertices $S$ in a layer $\ell$ is defined as the number of edges induced by $S$ in $\ell$ divided by the size of $S$, i.e., $\frac{|E_{\ell}[S]|}{|S|}$.
%
%Our problem requires to find the set of vertices $\hat{S}\subseteq V$ and the set of layers $\hat{L}\subseteq L$ that maximize
%$$\min_{\ell\in \hat{L}} f(\hat{S},\ell) |\hat{L}|^\beta$$
%where $\beta \in \mathbb{R}^+$ is the unique parameter of our problem and it controls the importance of the minimum density with respect to the number of layers where such a density is exhibited. Note the natural trade-off: the larger the number of layers considered ($|\hat{L}|$), the smaller the minimum density registered among them
%$\min_{\ell\in \hat{L}} f(\hat{S},\ell)$.
%
We define the \emph{multilayer densest subgraph} as the subset of vertices $S^*$ such that the function
$$
  \max_{\hat{L} \subseteq L} \min_{\ell\in \hat{L}} \frac{\textstyle|E_{\ell}[S^*]|}{\textstyle|S^*|}{\textstyle |\hat{L}|^\beta}
$$
is maximized.
Parameter $\beta \in \mathbb{R}^+$ controls the importance of the two problem ingredients, i.e., high density and number of high-density layers.
This problem statement naturally achieves the aforementioned desired trade-off: the larger the subset $\hat{L}$ of selected layers, the smaller the minimum density $\min_{\ell\in \hat{L}} \frac{|E_{\ell}[S]|}{|S|}$ in those layers.
%For $\beta=0$ this formulation corresponds to the one studied by Jethava~\emph{et~al.}~\cite{jethava2015finding}.
%\enlargethispage{\baselineskip}
%, and, as such, the smaller the overall objective-function value of subgraph $S$.
%
Similarly to the single-layer case where core decomposition provides a $\frac{1}{2}$-approximation of the densest subgraph, in this work we show that
computing the multilayer core decomposition of the input graph and selecting the core maximizing the proposed multilayer density function achieves a $\frac{1}{2|L|^\beta}$-approximation for the general multilayer-densest-subgraph problem formulation, and a $\frac{1}{2}$-approximation for the all-layer variant in~\cite{charikar2018finding,jethava2015finding,reinthal2016finding,semertzidis2016best}.
%The latter one is a per-se noteworthy advancement to the state of the art, as, to the best of our knowledge, no approximation algorithms with provable quality guarantees have been so far devised for the Jethava~\emph{et~al.}'s problem.

As a further application of  multilayer core decomposition, we show how it can speed up  \emph{frequent cross-graph quasi-clique} extraction~\cite{jiang2009mining}.
We prove that searching for frequent cross-graph quasi-cliques in restricted areas of the graph -- corresponding to multilayer cores complying with the quasi-clique condition -- is still sound and complete, while also being much more efficient.
%, regardless of the  algorithm.

Finally, we also provide a generalization of the \emph{community-search} problem~\cite{Sozio} to the multilayer setting, and show how to exploit multilayer core decomposition to optimally solve this problem.

\medskip

Summarizing, this work has the following contributions:
\begin{enumerate}
\item
We define the problem of \emph{core decomposition in multilayer networks},
and characterize it in terms of relation to other problems, and complexity.
We devise three algorithms that solve multilayer core decomposition efficiently (Section~\ref{sec:algorithms}).

\item We devise further algorithms to compute the \emph{inner-most cores} only (Section~\ref{sec:innermost}).

\item We study %for the first time
the problem of \emph{densest-subgraph in multilayer networks}.
We introduce a formulation that trades-off between high density and number of layers exhibiting high density, and 
exploit multilayer core decomposition to solve it with approximation guarantees (Section~\ref{sec:densest}).

\item We show how the multilayer core-decomposition tool can be exploited to speed up the extraction of \emph{frequent cross-graph quasi-cliques} (Section~\ref{sec:quasicliques}).

\item We formulate the \emph{multilayer community-search} problem and show that multilayer core decomposition provides an optimal solution to this problem (Section~\ref{sec:communitysearch}).
\end{enumerate}

We also provide extensive experiments, on numerous real datasets, to assess the performance of our proposals.
For each  aforementioned context, experiments are provided within the corresponding section.
A preliminary version of this work, covering Sections~\ref{sec:algorithms}~and~\ref{sec:densest} only, was presented in~\cite{galimberti2017core}.
%Sections~\ref{sec:innermost},~\ref{sec:quasicliques}, and~\ref{sec:communitysearch} introduce novel contributions.

\smallskip
\spara{Reproducibility.}
For the sake of reproducibility all our code and some of the datasets used in this paper are available at \url{https://github.com/egalimberti/multilayer_core_decomposition}

%\subsection{Roadmap}
%\textcolor[rgb]{0.00,0.00,1.00}{The next section introduces the needed preliminaries and notation, provides some fundamental properties of multilayer cores, and then formally defines all the problems studied in this work.
%Section~\ref{sec:algorithms} presents the algorithms to compute the complete core decomposition.
%Section~\ref{sec:innermost} offers the algorithms to compute the inner-most cores only.
%The remaining sections focus on the various applications: Section~\ref{sec:densest} covers our results for the densest-subgraph problem,
%Section~\ref{sec:quasicliques} discusses the approach for frequent cross-graph quasi-cliques, and Section~\ref{sec:communitysearch} presents the application to the community-search problem. 

\section{Preliminaries and problem statements}
\label{sec:problems}
%!TEX root = tkdd_revision.tex

In this section we introduce the needed preliminaries and notation, we provide some fundamental properties of multilayer cores, and then formally define all the problems studied in this work.

\subsection{Multilayer core decomposition}
We are given an undirected  multilayer graph $G = (V,E,L)$, where $V$ is a set of vertices, $L$ is a set of layers, and $E \subseteq V \times V \times L$ is a set of edges.
Let $E_{\ell}$ denote the subset of edges in layer $\ell \in L$.
%, and $n = |V|$, $m_{\ell} = |E_{\ell}|$, $m = |E|$.
For a vertex $u \in V$ we denote by $\deg(u, \ell)$ and $\deg(u)$  its degree in layer $\ell$ and over all layers, respectively, i.e., $\deg(u,\ell) = |\{e = (u,v,\ell) : e \in E_{\ell}\}|$, $\deg(u) = |\{e = (u,v,\ell)  : e \in E\}| = \sum_{\ell \in L} \deg(u,\ell)$.

For a subset of vertices $S \subseteq V$ we denote by $G[S]$ the subgraph of $G$ induced by $S$, i.e., $G[S] = (S,E[S],L)$, where $E[S] = \{ e = (u,v,\ell) \mid e \in E, u \in S, v \in S \}$.
For a vertex $u \in V$ we denote by $\deg_S(u, \ell)$ and $\deg_S(u)$  its degree in subgraph $S$ considering layer $\ell$ only and all layers, respectively, i.e., $\deg_S(u,\ell) = |\{e = (u,v,\ell) : e \in E_{\ell}[S]\}|$, $\deg_S(u) = |\{e = (u,v,\ell) : e \in E[S]\}| = \sum_{\ell \in L} \deg_S(u,\ell)$.
Finally, let $\mu(\ell)$ and $\mu(\hat{L})$ denote the minimum degree of a vertex in layer $\ell$ and in a subset $\hat{L} \subseteq L$ of layers, respectively. Let also $\mu(S, \ell)$ and $\mu(S, \hat{L})$ denote the corresponding counterparts of $\mu(\ell)$ and $\mu(\hat{L})$ for a subgraph (induced by a vertex set) $S$.

%\todo{{\bf MINOR:} Notations $\mu(\ell)$, $\mu(\hat{L})$, $\mu$, $\mu_H(\ell)$, $\mu_H(\hat{L})$, and $\mu_H$ can probably be omitted.}

A core of a multilayer graph is characterized by an $|L|$-dimensional integer vector $\vec{k} = [k_{\ell}]_{\ell \in L}$, termed \emph{\corenessvec}, whose components $k_{\ell}$ denote the minimum degree allowed in layer $\ell$.
\revision{
This corresponds to the notion of \mlkcore\ introduced by Azimi-Tafreshi~\emph{et al.}~\cite{MultiplexCores}.% for the multilayer core-percolation problem.
As discussed in Section~\ref{sec:related}, Azimi-Tafreshi~\emph{et al.} do not study (or devise any algorithm for) the problem of computing the entire multilayer core decomposition. They study core percolation by analyzing \emph{a single} core of interest, computed with the simple iterative-peeling algorithm (Algorithm~\ref{alg:core}).
}
Formally:
\revision{
\begin{mydefinition}[multilayer core \mbox{\normalfont and} \corenessvec~\cite{MultiplexCores}]\label{def:mlkcores}
	Given a multilayer graph $G = (V, E, L)$ and an $|L|$-dimensional integer vector $\vec{k} = [k_{\ell}]_{\ell \in L}$, the multilayer $\vec{k}$\emph{-core} of $G$ is a
	\emph{maximal} subgraph $G[C] = (C \subseteq V, E[C], L)$ such that $\forall \ell \in L:
	\mu(C, \ell) \geq k_{\ell}$.
	The vector $\vec{k}$ is referred to as the \emph{\corenessvec} of $G[C]$.
\end{mydefinition}
}

Given a \corenessvec\ $\vec{k}$, we denote by $C_{\vec{k}}$ the corresponding core.
Also, as a \mlkcore\ is fully identified by the vertices belonging to it, we hereinafter refer to it by its vertex set $C_{\vec{k}}$ and the induced subgraph $G[C_{\vec{k}}]$ interchangeably.
It is important noticing that a set of vertices $C \subseteq V$ may correspond to multiple cores.
For instance, in the graph in Figure~\ref{fig:mlcores} the set $\{\mbox{\textsf{A,B,D,E}}\}$ corresponds to both $(3,0)$-core and $(3,1)$-core.
In other words, a multilayer core can be described by more than one \corenessvec. However,
as formally shown next, among such multiple \corenessvec{s} there exists one and only one that is not dominated by any other. We call this vector the \emph{maximal \corenessvec} of $C$.
In the example in Figure~\ref{fig:mlcores} the maximal \corenessvec\ of $\{\mbox{\textsf{A,B,D,E}}\}$ is $(3,1)$.

\begin{mydefinition}[maximal \corenessvec]\label{def:maximalcorenessvec}
	Let $G = (V, E, L)$ be a multilayer graph, $C \subseteq V$ be a core of $G$, and $\vec{k} = [k_{\ell}]_{\ell \in L}$ be a \corenessvec\ of $C$.
	$\vec{k}$ is said \emph{maximal} if there does not exist any \corenessvec\ $\vec{k'} = [k'_{\ell}]_{\ell \in L}$ of $C$ such that $\forall \ell \in L : k'_{\ell} \geq k_{\ell}$ and $\exists \hat{\ell} \in L : k'_{\hat{\ell}} > k_{\hat{\ell}}$.%\hfill\qed
\end{mydefinition}

\begin{figure}[t]
	\centering
	\begin{tikzpicture} [scale=0.20, every node/.style={circle, draw, scale=0.65}]
	% \node (name) at (horizontal,vertical) {label}; --> (0,0) is the lower left corner
	\node (A) at (0,10) {\textsf{A}};
	\node (B) at (10,10) {B};	
	\node (C) at (20,10) {C};
	\node (D) at (0,0) {D};
	\node (E) at (10,0) {E};
	\node (F) at (20,0) {F};
	
	\draw (A) edge (B);
	\draw (A) edge (D);
	\draw (A) edge (E);
	\draw (B) edge (C);
	\draw (B) edge (D);
	\draw (B) edge (E);
	\draw (B) edge (F);
	\draw (D) edge (E);
	\draw (E) edge (F);
	
	\draw [dashed, bend left=10] (A) edge (B);
	\draw [dashed, bend left=10] (B) edge (C);
	\draw [dashed, bend right=10] (B) edge (D);
	\draw [dashed, bend left=10] (B) edge (E);
	\draw [dashed, bend left=10] (B) edge (F);
	\draw [dashed] (C) edge (E);
	\draw [dashed] (C) edge (F);
	\draw [dashed, bend right=10] (E) edge (F);
\end{tikzpicture}
	\caption{\label{fig:mlcores}{  Example 2-layer graph (solid edges refer to the first layer, while dashed edges to the second layer) with the following \mlkcores:
			$(0,0) = (1,0) = (0,1) = (1,1) = \{ \mbox{\textsf{A,B,C,D,E,F}} \},
			(2,0) = (2,1) = \{ \mbox{\textsf{A,B,D,E,F}} \},
			(3,0) = (3,1) = \{ \mbox{\textsf{A,B,D,E}} \},
			(0,2) = (1,2) = (0,3) =  (1,3) = \{ \mbox{\textsf{B,C,E,F}} \},
			(2,2) = \{ \mbox{\textsf{B,E,F}} \}$
			.}}
\end{figure}

\begin{mytheorem}\label{th:maximalcorenessvecuniqueness}
	Multilayer cores have a unique maximal \corenessvec.
\end{mytheorem}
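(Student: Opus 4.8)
The plan is to show that the set of coreness vectors of a fixed core $C$ is closed under coordinate-wise maximum; once this is established, uniqueness of the maximal element follows immediately, since a finite set that is closed under pairwise $\max$ and is bounded (every coordinate is at most $|C|-1$) has a unique greatest element, which is then the unique vector not dominated by any other. So the crux is the closure claim: if $\vec{k}$ and $\vec{k'}$ are both coreness vectors of the same vertex set $C$, then so is $\vec{k''} := [\max(k_\ell, k'_\ell)]_{\ell \in L}$.

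First I would unpack what ``$\vec{k}$ is a coreness vector of $C$'' means via Definition~\ref{def:mlkcores}: it says $G[C]$ is the \emph{maximal} subgraph satisfying $\mu(C,\ell) \ge k_\ell$ for all $\ell$. This has two parts — a \emph{feasibility} part ($\forall \ell: \mu(C,\ell) \ge k_\ell$) and a \emph{maximality} part (no superset of $C$ satisfies those same degree constraints). For $\vec{k''}$, feasibility is trivial: for each $\ell$, $\mu(C,\ell) \ge k_\ell$ and $\mu(C,\ell) \ge k'_\ell$, hence $\mu(C,\ell) \ge \max(k_\ell,k'_\ell) = k''_\ell$. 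The work is in the maximality part: I must argue that no $C'' \supsetneq C$ satisfies $\forall \ell: \mu(C'',\ell) \ge k''_\ell$.

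For the maximality argument I would reason by contradiction: suppose such a $C'' \supsetneq C$ exists. Since $k''_\ell \ge k_\ell$ for every $\ell$, the subgraph $G[C'']$ also satisfies $\mu(C'',\ell) \ge k_\ell$ for all $\ell$; but $C$ is the \emph{maximal} such subgraph (because $\vec{k}$ is a coreness vector of $C$), so $C'' \subseteq C$ — contradicting $C'' \supsetneq C$. This actually only uses that $\vec{k}$ is a coreness vector, which is a little suspicious, so let me double check the feasibility direction is what is really doing the balancing work: indeed, it is the fact that $C$ is feasible for \emph{both} $\vec{k}$ and $\vec{k'}$ simultaneously that makes $C$ feasible for $\vec{k''}$, while maximality of $C$ with respect to $\vec{k}$ alone already blocks any strict superset from being feasible for the larger vector $\vec{k''}$. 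So the argument goes through: $\vec{k''}$ is a coreness vector of $C$.

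Finally I would close the loop. The set $\mathcal{K}_C$ of coreness vectors of $C$ is nonempty (every core has at least one by definition) and finite, and by the above it is closed under coordinate-wise $\max$; iterating $\max$ over all of $\mathcal{K}_C$ produces a vector $\vec{k}^* \in \mathcal{K}_C$ with $\vec{k}^* \ge \vec{k}$ for all $\vec{k} \in \mathcal{K}_C$. This $\vec{k}^*$ is maximal in the sense of Definition~\ref{def:maximalcorenessvec} — no coreness vector dominates it, since it dominates them all — and it is unique with this property, because any other maximal $\vec{k}$ would satisfy $\vec{k} \le \vec{k}^*$, and maximality of $\vec{k}$ forces $\vec{k} = \vec{k}^*$. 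I expect the main obstacle to be stating the maximality half of the closure claim cleanly — i.e., being careful that ``maximal subgraph satisfying the $\vec{k}$-constraints'' is genuinely what pins down $C$ and that monotonicity of the constraints in $\vec{k}$ is what transfers maximality from $\vec{k}$ to the larger $\vec{k}''$; the feasibility half and the final lattice-theoretic wrap-up are routine.
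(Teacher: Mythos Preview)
Your proposal is correct and follows essentially the same route as the paper: both arguments hinge on the observation that the coordinate-wise maximum of two coreness vectors of $C$ is again a coreness vector of $C$, from which uniqueness of the maximal vector is immediate (the paper phrases this as a direct contradiction from two distinct maximal vectors, you phrase it as closure under $\max$ plus a lattice argument). If anything, your treatment is more careful than the paper's own proof, which checks only the feasibility half ($\mu(C,\ell) \ge k^*_\ell$) and tacitly assumes the maximality half, whereas you explicitly verify that no strict superset of $C$ can satisfy the stronger constraints $\vec{k}''$ by monotonicity.
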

\begin{proof}
We prove the theorem by contradiction.
Assume two maximal \corenessvec{s} $\vec{k} = [k_{\ell}]_{\ell \in L} \neq \vec{k}' = [k'_{\ell}]_{\ell \in L}$ exist for a multilayer core $C$.
As $\vec{k} \neq \vec{k}'$ and they are both maximal, there exist two layers $\hat{\ell}$ and $\bar{\ell}$ such that $k_{\hat{\ell}} > k'_{\hat{\ell}}$ and $k'_{\bar{\ell}} > k_{\bar{\ell}}$.
By definition of multilayer core (Definition~\ref{def:mlkcores}), it holds that $\forall \ell \in L : \mu(C,\ell) \geq k_\ell, \mu(C,\ell) \geq k'_\ell$.
This means that the vector $\vec{k}^* = [k^*_{\ell}]_{\ell \in L}$, with $k^*_{\ell} = \max\{k_{\ell}, k'_{\ell}\}, \forall \ell \in L$, is a further \corenessvec\ of $C$.
For this vector it holds that $\forall \ell \neq \hat{\ell}, \ell \neq \bar{\ell} : k^*_{\ell} \geq k'_{\ell}$, $k^*_{\hat{\ell}} > k'_{\hat{\ell}}$, and $k^*_{\bar{\ell}} > k_{\bar{\ell}}$.
Thus, $\vec{k}^*$ dominates both $\vec{k}$ and $\vec{k}'$, which contradicts the hypothesis of maximality of $\vec{k}$ and $\vec{k}'$.
The theorem follows.
\end{proof}

The first (and main) problem we tackle in this work is the computation of the complete multilayer core decomposition, i.e., the set of all non-empty multilayer cores.

\begin{problem}[\mlcoredecomposition]\label{prob:mlcoredecomposition}
	Given a multilayer graph $G = (V, E, L)$, find the set of all \emph{non-empty} and \emph{distinct} cores of $G$, along with their corresponding maximal \corenessvec{s}.
	Such a set forms what we hereinafter refer to as the  \emph{multilayer core decomposition} of $G$.
\end{problem}

\subsection{Inner-most multilayer cores}
Cores of a single-layer graph are all nested one into another.
This makes it possible to define the notions of  $(i)$ \emph{inner-most core}, as the core of highest order, and $(ii)$ \emph{core index} (or \emph{core number}) of a vertex $u$, which is the highest order of a core containing $u$.
In the multilayer setting the picture is more complex, as multilayer cores are not all nested into each other.
As a result, the core index of a vertex is not unambiguously defined, while there can exist multiple inner-most cores:

\begin{mydefinition}[inner-most multilayer cores]\label{def:innermostcores}
	The \emph{inner-most cores} of a multilayer graph are all those cores with maximal \corenessvec\ $\vec{k} = [k_{\ell}]_{\ell \in L}$ such that there does not exist any other core with \corenessvec\  $\vec{k}' = [k'_{\ell}]_{\ell \in L}$ where $\forall \ell \in L : k'_{\ell} \geq k_{\ell}$ and $\exists \hat{\ell} \in L : k'_{\hat{\ell}} > k_{\hat{\ell}}$.%\hfill\qed
\end{mydefinition}

To this purpose, look at the example in Figure~\ref{fig:mlcores}.
It can be observed that: ($i$) cores are not nested into each other, ($ii$) $(3,1)$-core, $(1,3)$-core and $(2,2)$-core are the inner-most cores, and ($iii$) vertices \textsf{B} and \textsf{E} belong to (inner-most) cores $(3,1)$, $(1,3)$, and $(2,2)$, thus making their core index not unambiguously defined.

The second problem we tackle in this work is the development of smart algorithms  to compute all the inner-most cores, without the need of computing the complete multilayer core decomposition.

 \begin{problem}[Inner-most cores computation]\label{prob:innermost}
	Given a multilayer graph $G = (V, E, L)$, find all \emph{non-empty} and inner-most cores of $G$, along with their corresponding maximal \corenessvec{s}.
	\end{problem}

\subsection{Multilayer densest subgraph}
As anticipated in Section~\ref{sec:contributions}, the densest subgraph of a multilayer graph should provide a good trade-off between large density and the number of layers where such a large density is exhibited.
We achieve this intuition by means of the following optimization problem:

\begin{problem}[\mldensestsubgraph]\label{prob:mldensestsubgraph}
Given a multilayer graph $G=(V,E,L)$, a positive real number $\beta$, and a real-valued function $\density : 2^{V} \rightarrow \mathbb{R}^+$ defined as:
\begin{equation}\label{eq:densestfunction}
\density(S) = \max_{\hat{L} \subseteq L} \min_{\ell \in \hat{L}} \frac{|E_{\ell}[S]|}{|S|}  |\hat{L}|^{\beta},
\end{equation}
find a subset $S^* \subseteq V$ of vertices that maximizes function $\density$, i.e.,
$$
S^* = \arg\max_{S \subseteq V}  \density(S).
$$
\end{problem}

Parameter $\beta$ controls the importance of the two ingredients of the objective function $\density$, i.e., density %\enlargethispage*{2\baselineskip}
and number of layers exhibiting such a density: the smaller $\beta$, the better the focus on the former aspect (density), and vice versa.
Also, as a nice side effect, solving \mldensestsubgraph\  allows for automatically finding  a set of layers of interest for the densest subgraph $S^*$. In Section~\ref{sec:densest} we will show how to exploit it to devise an algorithm with approximation guarantees for \mldensestsubgraph, thus extending to the multilayer case the intuition at the basis of the well-known $\frac{1}{2}$-approximation algorithm~\cite{AITT00,Char00} for single-layer densest subgraph.

\subsection{Frequent cross-graph quasi-cliques}
Another interesting insight into the notion of multilayer cores is about their relationship with (quasi-)cliques.
In single-layer graphs it is well-known that cores can speed-up clique finding, as a clique of size $k$ is  contained in the $(k-1)$-core.
Interestingly, a similar relationship holds in the multilayer context too.
Given a multilayer graph $G = (V,E,L)$, a layer $\ell \in L$, and a real number $\gamma \in (0,1]$, a subgraph $G[S] = (S \subseteq V, E[S], L)$ of $G$ is said to be a $\gamma$\emph{-quasi-clique} in layer $\ell$ if all its vertices have at least $\gamma (|S| -1)$ neighbors in layer $\ell$ within $S$, i.e., $\forall u \in S : deg_S(u, \ell) \geq \gamma (|S| -1)$.
Jiang~\emph{et~al.}~\cite{jiang2009mining} study the problem of extracting \emph{frequent cross-graph quasi-cliques}, defined next.

 \begin{problem}[Frequent cross-graph quasi-cliques mining~\cite{jiang2009mining}]\label{prob:cgqc}
Given a multilayer graph $G = (V, E, L)$,	a function $\Gamma : L \rightarrow (0,1]$ assigning a real value to every layer in $L$, a real number $\mbox{\emph{min\_sup}} \in (0,1]$, and an integer $\mbox{\emph{min\_size}} \geq 1$, find all maximal subgraphs $G[S]$ of $G$ of size larger than  \emph{min\_size} such that there exist at least $\mbox{\emph{min\_sup}}\times|L|$ layers $\ell$ for which $G[S]$ is a $\Gamma(\ell)$-quasi-clique.
	\end{problem}

In Section~\ref{sec:quasicliques} we will prove that a frequent cross-graph quasi-clique of size $K$ is necessarily contained into a $\vec{k}$-core described by a maximal coreness vector $\vec{k}=[k_{\ell}]_{\ell \in L}$ such that there exists a fraction of at least \emph{min\_sup} layers $\ell$ where $k_{\ell} = \lfloor \Gamma(\ell)(K - 1)\rfloor$. Based on this, exploiting multilayer core decomposition as a preprocessing step allows for speeding up any algorithm for Problem~\ref{prob:cgqc}.

\subsection{Multilayer community search}
The last application we study is the problem of  \emph{community search}.
Given a graph $G=(V,E)$ and a set $V_Q \subseteq V$ of query vertices, a very wide family of problem requires to find a connected subgraph $H$ of $G$, which contains all query vertices $V_Q$ and exhibits an adequate degree of cohesiveness, compactness, or density. This type of problem has been termed in the literature in different ways, e.g., \emph{community search}~\cite{Sozio,SozioLocalSIGMOD14,BarbieriBGG15},
\emph{seed set expansion} \cite{Andersen1,Kloumann}, \emph{connectivity subgraphs} \cite{connect,CenterpieceKDD06,ruchansky2015minimum,akoglu2013mining,RuchanskyBGGK17}, just to mention a few: see~\cite{HuangLX17} for a recent survey.
In this work we adopt the early definition by Sozio and Gionis~\cite{Sozio} which measures the cohesiveness of the resulting subgraph by means of the minimum degree inside the subgraph, and we adapt it to the multilayer setting as follows.

\begin{problem}[Multilayer Community Search]\label{prob:mlcs}
Given a multilayer graph $G=(V,E,L)$, a set  $S \subseteq V$ of vertices, and a set $\hat{L} \subseteq L$ of layers, the minimum degree in the subgraph induced by $S$~and~$\hat{L}$~is:
$$
\varphi(S,\hat{L}) = \min_{\ell \in \hat{L}} \min_{u \in S} deg_S(u,\ell).
$$

\noindent Given a positive real number $\beta$, we define a real-valued density function $\vartheta : 2^{V} \rightarrow \mathbb{R}^+$ as:
$$
\vartheta(S) = \max_{\hat{L} \subseteq L}  \varphi(S,\hat{L})  |\hat{L}|^{\beta}.
$$

Given a set $V_Q \subseteq V$ of query vertices, find a subgraph containing all the query vertices and maximizing the density function, i.e.,
\begin{equation}
S^* = \argmax_{V_Q \subseteq S \subseteq V}  \vartheta(S).
\end{equation}
\end{problem}

In Section~\ref{sec:communitysearch} we will show how to adapt multilayer core decomposition to solve Problem~\ref{prob:mlcs}.

\section{Algorithms for multilayer core decomposition}
\label{sec:algorithms}
%!TEX root = tkdd_revision.tex

A major challenge of  \mlcoredecomposition\ is that the number of multilayer cores may be exponential in the number of layers. Specifically, denoting by $K_{\ell}$ the maximum order of a core for layer $\ell$, the number of multilayer cores is $\mathcal{O}(\prod_{\ell \in L}K_{\ell})$. This makes \mlcoredecomposition\ intrinsically hard: \emph{in the general case, no polynomial-time algorithm can exist}.
The challenge hence lies in  handling this exponential blowup by early recognizing and skipping unnecessary portions of the search space, such as  non-distinct and/or empty cores.

Given a multilayer graph $G = (V,E,L)$ and a \corenessvec\ $\vec{k} = [k_{\ell}]_{\ell \in L}$, finding the corresponding core can easily be solved in $\mathcal{O}(|E| + |V|\times|L|)$ time by iteratively removing a vertex $u$ having $\deg_{G'}(u, \ell) < k_{\ell}$ in some layer $\ell$, where $G'$ denotes the current graph resulting from all previous vertex removals (Algorithm~\ref{alg:core}, where the set $S$ of vertices to be considered is set to $S = V$). Hence, a na\"{\i}ve way of computing a multilayer core decomposition consists of generating all possible \corenessvec{s}, run the multilayer core-detection algorithm just described for each vector, and retain only non-empty and distinct cores.
This na\"{\i}ve method requires all vectors $[k_{\ell}]_{\ell \in L}$, where each $k_{\ell}$ component is varied within the interval $[0..K_{\ell}]$.\footnote{$K_{\ell}$ values can be derived beforehand by computing a single-layer core decomposition in each layer $\ell$. This process overall takes $\mathcal{O}(|E|)$ time.}
This corresponds to a $\Theta(\prod_{\ell \in L}K_{\ell})$ number of vectors.
As a result, the overall time complexity of the method is $\mathcal{O}\big((|E| + |V|\times|L|) \times \prod_{\ell \in L}K_{\ell}\big)$.

This approach has two major weaknesses: ($i$) each core is computed starting from the whole input graph, and ($ii$) by enumerating all possible \corenessvec{s} beforehand a lot of non-distinct and/or empty (thus, unnecessary) cores may be computed.
In the following we present three methods that solve  \mlcoredecomposition\ much more efficiently.

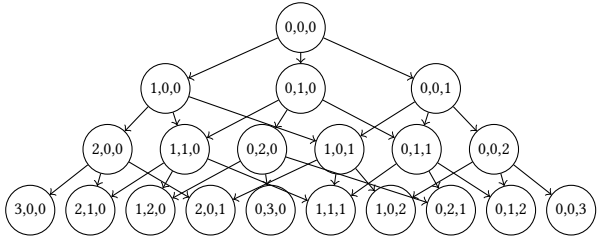
\begin{figure}[t]
	\centering
	\begin{tikzpicture} [scale=0.08, every node/.style={circle, draw, scale=0.65}]
	% \node (name) at (horizonal,vertical) {label}; --> (0,0) is the lower left corner
	\node (300) at (0,0) {3,0,0};
	\node (030) at (40,0) {0,3,0};	
	\node (003) at (90,0) {0,0,3};
	\node (210) at (10,0) {2,1,0};
	\node (201) at (30,0) {2,0,1};
	\node (012) at (80,0) {0,1,2};
	\node (102) at (60,0) {1,0,2};
	\node (120) at (20,0) {1,2,0};
	\node (021) at (70,0) {0,2,1};
	\node (111) at (50,0) {1,1,1};
	
	\node (200) at (12.86,10) {2,0,0};
	\node (110) at (25.72,10) {1,1,0};
	\node (020) at (38.58,10) {0,2,0};
	\node (101) at (51.44,10) {1,0,1};
	\node (011) at (64.3,10) {0,1,1};
	\node (002) at (77.16,10) {0,0,2};
	
	\node (100) at (22.5,20) {1,0,0};
	\node (010) at (45,20) {0,1,0};
	\node (001) at (67.5,20) {0,0,1};
	
	\node (000) at (45,30) {0,0,0};
	
	\draw [->] (000) edge (100);
	\draw [->] (000) edge (010);
	\draw [->] (000) edge (001);
	
	\draw [->] (100) edge (200);
	\draw [->] (100) edge (110);
	\draw [->] (100) edge (101);
	\draw [->] (010) edge (110);
	\draw [->] (010) edge (020);
	\draw [->] (010) edge (011);
	\draw [->] (001) edge (101);
	\draw [->] (001) edge (011);
	\draw [->] (001) edge (002);
	
	\draw [->] (200) edge (300);
	\draw [->] (200) edge (210);
	\draw [->] (200) edge (201);
	\draw [->] (110) edge (210);
	\draw [->] (110) edge (120);
	\draw [->] (110) edge (111);
	\draw [->] (020) edge (120);	
	\draw [->] (020) edge (030);
	\draw [->] (020) edge (021);
	\draw [->] (101) edge (201);
	\draw [->] (101) edge (111);
	\draw [->] (101) edge (102);	
	\draw [->] (011) edge (111);
	\draw [->] (011) edge (021);
	\draw [->] (011) edge (012);
	\draw [->] (002) edge (102);
	\draw [->] (002) edge (012);
	\draw [->] (002) edge (003);
\end{tikzpicture}
	\caption{\label{fig:dagstr} Core lattice of a 3-layer graph.}
\end{figure}

\subsection{Search space}
Although multilayer cores are not all nested into each other, a notion of partial containment can still be defined.
Indeed, it can easily be observed that a $\vec{k}$-core with \corenessvec\ $\vec{k} = [k_{\ell}]_{\ell \in L}$ is contained into any $\vec{k}'$-core described by a \corenessvec\ $\vec{k}' = [k'_{\ell}]_{\ell \in L}$ whose components $k'_{\ell}$ are all no more than components $k_{\ell}$, i.e., $k'_{\ell} \leq k_{\ell}$, $\forall \ell \in L$.
This result is formalized next:

\begin{fact} \label{fact:nestedmlcores}
	Given a multilayer graph $G = (V,E,L)$ and two cores $C_{\vec{k}}$ and $C_{\vec{k'}}$ of $G$ with \corenessvec{s} $\vec{k} = [k_{\ell}]_{\ell \in L}$ and $\vec{k}' = [k'_{\ell}]_{\ell \in L}$, respectively, it holds that if $\forall \ell \in L : k'_{\ell} \leq k_{\ell}$, then $C_{\vec{k}} \subseteq C_{\vec{k'}}$.%\hfill\qed
	%a parameter $x \in \mathbb{N}^+$, two \corevector s $\vec{k} = (k_{\ell_1}, k_{\ell_2},...,k_{\ell_{\left| L \right|}})$ and $\vec{z} = (z_{\ell_1}, z_{\ell_2},...,z_{\ell_{\left| L \right|}})$ such that $\exists \ell \in L \ z_{\ell} = k_{\ell} + x$ and $z_{r} = k_{r}\ \forall r \neq \ell$, and their corresponding cores $C_{\vec{k}}$ and $C_{\vec{z}}$. Then $C_{\vec{z}} \subseteq C_{\vec{k}}$.
\end{fact}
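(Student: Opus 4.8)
The plan is to prove Fact~\ref{fact:nestedmlcores} by a direct monotonicity argument on the peeling process, mirroring the classical single-layer proof that $k$-cores are nested. First I would recall that, by Definition~\ref{def:mlkcores}, the core $C_{\vec{k'}}$ is the \emph{maximal} vertex subset $C$ such that $\mu(C,\ell) \ge k'_\ell$ for all $\ell \in L$. So the whole statement reduces to showing that the set $C_{\vec{k}}$ itself satisfies this defining property for $\vec{k'}$, i.e.\ that $\mu(C_{\vec{k}},\ell) \ge k'_\ell$ for every layer $\ell$; maximality of $C_{\vec{k'}}$ then immediately forces $C_{\vec{k}} \subseteq C_{\vec{k'}}$.

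The key chain of inequalities is the easy part. For each vertex $u \in C_{\vec{k}}$ and each layer $\ell \in L$ we have, by Definition~\ref{def:mlkcores} applied to $C_{\vec{k}}$, that $\deg_{C_{\vec{k}}}(u,\ell) \ge k_\ell$, and by hypothesis $k_\ell \ge k'_\ell$, so $\deg_{C_{\vec{k}}}(u,\ell) \ge k'_\ell$. Taking the minimum over $u \in C_{\vec{k}}$ gives $\mu(C_{\vec{k}},\ell) \ge k'_\ell$ for all $\ell$, which is exactly the property required of a $\vec{k'}$-core. Hence $C_{\vec{k}}$ is \emph{a} subgraph satisfying the $\vec{k'}$-degree constraints, and since $C_{\vec{k'}}$ is by definition the unique maximal such subgraph (uniqueness of the maximal subgraph with these constraints is the standard argument: the union of two subgraphs each meeting the per-layer minimum-degree bounds again meets them, so a maximum exists and it contains every other such subgraph), we conclude $C_{\vec{k}} \subseteq C_{\vec{k'}}$.

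The only subtlety worth spelling out — and the place I would be most careful — is the implicit claim that ``the maximal subgraph with $\mu(\cdot,\ell) \ge k'_\ell$ for all $\ell$'' is well defined, i.e.\ that it is the \emph{unique largest} such subgraph rather than merely a maximal one under inclusion. This needs the closure-under-union observation: if $C_1$ and $C_2$ both satisfy $\mu(C_i,\ell) \ge k'_\ell$ for all $\ell$, then every vertex of $C_1 \cup C_2$ retains at least all of its neighbours it had in its own $C_i$ (adding vertices only adds edges), so $\mu(C_1 \cup C_2,\ell) \ge k'_\ell$ as well. Consequently the family of vertex sets meeting the $\vec{k'}$-constraints is closed under union, hence has a unique maximum element, which is $C_{\vec{k'}}$. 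With that in hand the proof is just the one-line degree comparison above. I would present it as: (1) note $C_{\vec{k}}$ satisfies the $\vec{k'}$-constraints via $k_\ell \ge k'_\ell$; (2) invoke the definition/uniqueness of $C_{\vec{k'}}$ as the maximal such set; (3) conclude containment.
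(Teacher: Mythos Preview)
Your proof is correct and follows essentially the same approach as the paper's: both show that $\mu(C_{\vec{k}},\ell)\ge k_\ell\ge k'_\ell$ for all $\ell$, so $C_{\vec{k}}$ satisfies the $\vec{k}'$-core constraints, and then invoke maximality of $C_{\vec{k}'}$ to obtain containment. The paper's version is a two-line sketch that omits your closure-under-union justification for the uniqueness of the maximal $\vec{k}'$-core; your added care on that point is sound but not strictly necessary for the level of rigor the paper adopts.
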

\begin{proof}
Combining the definition of multilayer core (Definition~\ref{def:mlkcores}) and the hypothesis on vectors $\vec{k}$ and $\vec{k}'$, it holds that $\forall \ell \in L : \mu(C_{\vec{k}}, \ell) \geq k_\ell \geq k'_{\ell}$.
This means that $C_{\vec{k}}$ satisfies the definition of $\vec{k}'$-core, thus implying that all vertices in $C_{\vec{k}}$ are part of $C_{\vec{k}'}$ too.
The fact follows.
\end{proof}

%\begin{proof}
%Every vertex of $C_{\vec{z}}$ has at least degree $z_\ell$ in each layer $\ell$ of the subgraph $H_{\vec{z}} = G[C_{\vec{z}}]$ induced by $C_{\vec{z}}$, i.e. $\delta_{H_{\vec{z}}} (v,\ell) \geq z_{\ell} \ \forall v \in C_{\vec{z}}, \forall \ell \in L$.
%According to the hypothesis of Property~\ref{prp:subset}, $z_\ell \geq k_\ell \ \forall \ell \in L$ therefore it is also true that $\delta_{H_{\vec{z}}} (v,\ell) \geq k_\ell \ \forall v \in C_{\vec{z}}, \forall \ell \in L$.
%This means that $H_{\vec{z}}$ is subgraph of $H_{\vec{k}} = G[C_{\vec{k}}]$ that is, by Definition~\ref{def:core}, the maximal subgraph such that $\delta_{H_{\vec{k}}} (v,\ell) \geq k_\ell \ \forall v \in C_{\vec{k}}, \forall \ell \in L$.
%$C_{\vec{z}} \subseteq C_{\vec{k}}$ follows by definition of subgraph.
%
%Since every vertex of $C_{\vec{z}}$ meets the conditions of belonging to $C_{\vec{k}}$, if a vertex $v \in C_{\vec{z}}$ exists such that $v \notin C_{\vec{k}}$, then the maximality condition of the subgraph induced by $C_{\vec{k}}$ would be violated.
%Therefore every vertex of $C_{\vec{z}}$ has to be necessary part of $C_{\vec{k}}$. $\blacksquare$
%\end{proof}

Based on Fact~\ref{fact:nestedmlcores},
the search space of our problem can be represented as a lattice defining a partial order among all cores (Figure~\ref{fig:dagstr}).
%where all cores are properly organized (Figure~\ref{fig:dagstr}).
Such a lattice, which we call the \emph{core lattice}, corresponds to a \textsc{dag} where nodes represent cores,\footnote{Throughout the paper we use the term ``node'' to refer to elements of the core lattice, and ``vertex'' for the elements of the multilayer graph.}
and links represent relationships of containment between cores (a ``father'' node contains all its ``child'' nodes).
We assume the core lattice keeping track of non-empty and not necessarily distinct cores: a core is present in the lattice as many times as the number of its \corenessvec{s}.
%We assume that the core lattice keeps track of non-empty cores only, while a child node and its father are not necessarily distinct.
%
Each level $i$ of the lattice represents the children of cores at lattice level $i-1$.
In particular, level $i$ contains all those cores whose \corenessvec\ results from increasing one and only one component of its fathers' \corenessvec\ by one.
Formally, a lattice level $i$ contains all \mlkcores\ with \corenessvec\ $\vec{k} = [k_{\ell}]_{\ell \in L}$ such that there exists a core at lattice level $i-1$ with \corenessvec\ $\vec{k}' = [k'_{\ell}]_{\ell \in L}$ where: $\exists \ell \in L : k_{\ell} =  k'_{\ell} + 1$, and $\forall \hat{\ell} \neq \ell : k_{\hat{\ell}} =  k'_{\hat{\ell}}$.
As a result, level 0 contains the root only, which corresponds to the whole input graph (i.e., the $[0]_{|L|}$-core), the leaves correspond to inner-most cores, and any non-leaf node has at least one and at most $\left| L \right|$ children.
Moreover, every level $i$ contains all cores whose coreness-vector components sum to $i$.
%Also, note that a child node and its father are not necessarily distinct.
%Moreover, it is guaranteed that every lattice node has \emph{at most} $\left| L \right|$ children (they can be less as some children may be empty), and \emph{exactly} as many father nodes as the number of non-zero components of its \corenessvec\ (according to Fact~\ref{fact:nestedmlcores}, no father of a non-empty core can be empty, otherwise that core would be empty too).

%The goal of any algorithm aimed at solving our \mlcoredecomposition\ problem is hence equivalent to building the core lattice of the input multigraph.
%Denoting by $K_{\ell}$ the maximum order of a core for layer $\ell$, the size of the core lattice is $\mathcal{O}(\prod_{\ell \in L}K_{\ell})$, i.e., exponential in the number of layers.
%This makes the \mlcoredecomposition\ problem very challenging.
%In the next paragraph we discuss a na\"ive core-lattice building strategy, and highlight its limitations.
%Then, in the subsequent paragraphs we present three methods that solve  \mlcoredecomposition\ much more efficiently.

Solving the \mlcoredecomposition\ problem is hence equivalent to building the core lattice of the input graph.
The efficient methods we present next are all based on smart core-lattice building strategies that  extract cores from smaller subgraphs, while also attempting to minimize the visit/computation of unnecessary (i.e., empty/non-distinct) cores.

\begin{algorithm}[t]
	\caption{\kcorealg} \label{alg:core}
	\begin{algorithmic}[1]
		%\small
		\REQUIRE A multilayer graph $G = (V,E,L)$, a set $S \subseteq V$ of vertices, an $|L|$-dimensional integer vector $\vec{k} = [k_{\ell}]_{\ell \in L}$.
		\ENSURE The $\vec{k}$-core $C_{\vec{k}}$ of $G$.
		
		\WHILE{$\exists u \in S, \exists \ell \in L : \deg_S(u, \ell) < k_{\ell}$}
		\STATE $S \gets S \setminus \{u\}$
		%, \  $E[] \gets E \setminus \{e = (u,v,\ell) \mid e \in E\}$
		\ENDWHILE
		\STATE $C_{\vec{k}} = S$
		
	\end{algorithmic}
\end{algorithm}

\subsection{Breadth-first algorithm}
%Moreover, it is guaranteed that every lattice node has \emph{at most} $\left| L \right|$ children (they can be less as some children may be empty), and \emph{exactly} as many father nodes as the number of non-zero components of its \corenessvec\ (according to Fact~\ref{fact:nestedmlcores}, no father of a non-empty core can be empty, otherwise that core would be empty too).
Two interesting corollaries can be derived from Fact \ref{fact:nestedmlcores}. First, any non-empty \mlkcore\ is necessarily contained in the intersection of all its father nodes of the core lattice.
Second, any non-empty \mlkcore\ has \emph{exactly} as many fathers as the number of non-zero components of its \corenessvec\ $\vec{k}$:
% (as no father of a non-empty core can be empty, otherwise that core would be empty too):

\begin{algorithm}[t]
	\caption{\bfs} \label{alg:bfs}
	\begin{algorithmic}[1]
		%\small
		\REQUIRE A multilayer graph $G = (V,E,L)$.
		\ENSURE The set \coresset\ of all non-empty multilayer cores of $G$.
		
		\STATE $\coresset \!\leftarrow\! \emptyset$, $\mathbf{Q} \!\leftarrow\! \{[0]_{|L|}\}$, $\mathcal{F}([0]_{|L|}) \!\leftarrow\! \emptyset$ \COMMENT{$\mathcal{F}$ keeps track of father nodes}
		
        \WHILE {$\mathbf{Q} \neq \emptyset$}
            \STATE dequeue $\vec{k} = [k_{\ell}]_{\ell \in L}$ from $\mathbf{Q}$

            \IF [Corollary~\ref{cor:fathernumber}]{$|\{k_{\ell} : k_{\ell} > 0\}| = |\mathcal{F}(\vec{k})|$}
                	\STATE $F_{\cap} \leftarrow \bigcap_{F \in \mathcal{F}(\vec{k})} F$ \COMMENT{Corollary~\ref{cor:coreintersection}}
            		\STATE $C_{\vec{k}} \leftarrow \kcorealg(G,F_{\cap},\vec{k})$ \COMMENT{Algorithm~\ref{alg:core}}
            		
                    \IF {$C_{\vec{k}} \neq \emptyset$} \label{line:bfs:exists}
                        %\STATE\hspace{\algorithmicindent} \hspace{\algorithmicindent}\hspace{\algorithmicindent} $\coresset \! \leftarrow \!\coresset  \cup  \{ C_{\vec{k}} \} \setminus \{F \!\in\! \mathcal{F}(\vec{k}) \mid F \!= \!C_{\vec{k}}\}$\hfill$\{$add current core and\\
                        %\hfill \ \ remove non-distinct cores$\}$
                        \STATE $\coresset \! \leftarrow \!\coresset  \cup  \{ C_{\vec{k}} \}$
                        \FORALL [enqueue child nodes]{$\ell \in L$}
                            \STATE $\vec{k}' \leftarrow [k_1, \ldots, k_{\ell} + 1, \ldots, k_{|L|}]$
                            \STATE enqueue $\vec{k}'$ into $\mathbf{Q}$
                            \STATE $\mathcal{F}(\vec{k}') \leftarrow \mathcal{F}(\vec{k}') \cup \{C_{\vec{k}}\}$
                            %			\STATE \hspace{\algorithmicindent}\hspace{\algorithmicindent}\hspace{\algorithmicindent} {\bf end for}
                            %		\STATE \hspace{\algorithmicindent}\hspace{\algorithmicindent}  {\bf end if} \label{line:bf:step2end}
                            %	\STATE\hspace{\algorithmicindent} {\bf end if}
                            %\STATE {\bf end while} \label{line:bf:extend}
                        \ENDFOR
                    \ENDIF
            \ENDIF
        \ENDWHILE
	\end{algorithmic}
\end{algorithm}

\begin{mycorollary}\label{cor:coreintersection}
	Given a multilayer graph $G$, let $C$ be a core of $G$ and $\mathcal{F}(C)$ be the set of fathers of $C$ in the core lattice of $G$.
	It holds that $C \subseteq \bigcap_{\hat{C} \in \mathcal{F}(C)} \hat{C}$. %\hfill\qed
\end{mycorollary}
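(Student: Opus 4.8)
The plan is to obtain the statement as an immediate consequence of Fact~\ref{fact:nestedmlcores} combined with the definition of the father relation in the core lattice. First I would recall how fathers are defined: a node of the lattice at level $i$ with \corenessvec\ $\vec{k} = [k_{\ell}]_{\ell \in L}$ is a child of a node at level $i-1$ whose \corenessvec\ $\vec{k}' = [k'_{\ell}]_{\ell \in L}$ is obtained from $\vec{k}$ by decreasing exactly one (non-zero) component by one and leaving all the others unchanged. Hence, for every father $\hat{C} = C_{\vec{k}'}$ of $C = C_{\vec{k}}$ we have $k'_{\ell} \leq k_{\ell}$ for all $\ell \in L$.

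Next I would invoke Fact~\ref{fact:nestedmlcores} on the pair $\vec{k}, \vec{k}'$: since $k'_{\ell} \leq k_{\ell}$ in every layer $\ell$, the fact yields $C_{\vec{k}} \subseteq C_{\vec{k}'}$, i.e., $C \subseteq \hat{C}$. As this containment holds separately for each father $\hat{C} \in \mathcal{F}(C)$, the set $C$ is contained in every member of $\mathcal{F}(C)$, and therefore in their intersection $\bigcap_{\hat{C} \in \mathcal{F}(C)} \hat{C}$, which is exactly the claim.

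The only point deserving a word of care is that a core may admit several \corenessvec{s}, so ``the fathers of $C$'' must be read with respect to the occurrence of $C$ in the lattice labeled by a specific \corenessvec\ $\vec{k}$; once that vector is fixed, the father relation and the componentwise inequality $\vec{k}' \leq \vec{k}$ above are unambiguous and the argument goes through verbatim. I do not expect any genuine obstacle here: the corollary is essentially Fact~\ref{fact:nestedmlcores} specialized to single edges of the core lattice and then intersected over the (finitely many) incoming edges of a node.
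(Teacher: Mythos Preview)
Your proposal is correct and follows essentially the same approach as the paper: both arguments observe that each father's \corenessvec\ is componentwise dominated by that of $C$, invoke Fact~\ref{fact:nestedmlcores} to get $C \subseteq \hat{C}$ for every father $\hat{C}$, and conclude containment in the intersection. Your direct argument is in fact slightly cleaner than the paper's, which phrases the last step as an (unnecessary) proof by contradiction.
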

\begin{proof}
By definition of core lattice, the \corenessvec\ of all father cores $\mathcal{F}(C)$ of $C$ is dominated by the \corenessvec\ of $C$.
Fact \ref{fact:nestedmlcores} ensures that $C \subseteq C'$, $\forall C' \in \mathcal{F}(C)$.
Assume $u \in C$, $u \notin \bigcap_{\hat{C} \in \mathcal{F}(C)} \hat{C}$ exists.
This implies existence of a father core $C' \in \mathcal{F}(C)$ s.t. $C \not\subseteq C'$, which is a contradiction.
\end{proof}

\begin{mycorollary}\label{cor:fathernumber}
	Given a multilayer graph $G$, let $C$ be a core of $G$ with \corenessvec\ $\vec{k} = [k_{\ell}]_{\ell \in L}$, and $\mathcal{F}(C)$ be the set of fathers of $C$ in the core lattice of $G$.
	It holds that $|\mathcal{F}(C)| = |\{k_{\ell} : \ell \in L, k_{\ell} > 0\}|$.
\end{mycorollary}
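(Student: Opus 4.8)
The plan is to argue directly from the definition of the core lattice together with Fact~\ref{fact:nestedmlcores}, in two complementary directions (``at most'' and ``at least''). First I would recall how fathers are defined: a node of the lattice is identified by a coreness vector, and $\vec{k}'$ is a father of the node with coreness vector $\vec{k} = [k_\ell]_{\ell \in L}$ precisely when $\vec{k}'$ is obtained from $\vec{k}$ by decreasing exactly one component by one, i.e. $\exists \hat{\ell} \in L : k'_{\hat{\ell}} = k_{\hat{\ell}} - 1$ and $k'_\ell = k_\ell$ for all $\ell \neq \hat{\ell}$. For $\vec{k}'$ to be a legal (non-negative) coreness vector we need $k_{\hat{\ell}} \geq 1$, so the candidate father vectors are in bijection with the layers $\hat{\ell}$ for which $k_{\hat{\ell}} > 0$; moreover two distinct such layers yield vectors differing in the corresponding coordinate, hence the candidates are pairwise distinct. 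This already gives $|\mathcal{F}(C)| \leq |\{k_\ell : \ell \in L, k_\ell > 0\}|$.

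For the reverse inequality I would show that every candidate father vector $\vec{k}'$ actually occurs as a node of the lattice, i.e. that the corresponding core $C_{\vec{k}'}$ is non-empty. This is exactly where Fact~\ref{fact:nestedmlcores} enters: for each candidate $\vec{k}'$ we have $k'_\ell \leq k_\ell$ for all $\ell \in L$, hence $C_{\vec{k}} \subseteq C_{\vec{k}'}$. Since $C = C_{\vec{k}}$ is, by hypothesis, a (non-empty) core of $G$, it follows that $C_{\vec{k}'} \supseteq C \neq \emptyset$, so $\vec{k}'$ indeed indexes a non-empty core and is therefore present in the core lattice as a father of $C$. Combining the two directions yields $|\mathcal{F}(C)| = |\{k_\ell : \ell \in L, k_\ell > 0\}|$.

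The main point to be careful about — the one where a sloppy argument would go wrong — is the bookkeeping of lattice nodes versus cores: a single vertex set may carry several coreness vectors and thus appear as several distinct lattice nodes (recall the lattice keeps non-empty and \emph{not necessarily distinct} cores), so ``father'' must be read at the level of lattice nodes (coreness vectors), not of vertex sets; with that convention the father count is exactly the number of admissible decrement directions, and the counting argument above is clean. A secondary sanity check is that the hypothesis ``$C$ is a core of $G$'' must be understood to exclude the empty core — otherwise the statement would fail for a coreness vector whose core is empty — which is consistent with the convention used throughout the paper, where the decomposition and the lattice consist of non-empty cores.
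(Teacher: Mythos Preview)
Your proof is correct and follows essentially the same approach as the paper: both establish the upper bound by observing that fathers arise only from decrementing a positive component, and the matching lower bound by arguing that every such candidate core is non-empty because it contains $C$. The only cosmetic difference is that you invoke Fact~\ref{fact:nestedmlcores} directly for the containment, whereas the paper routes through Corollary~\ref{cor:coreintersection} (itself an immediate consequence of Fact~\ref{fact:nestedmlcores}); your added remarks on reading ``father'' at the level of coreness vectors rather than vertex sets make the bookkeeping cleaner than the paper's own version.
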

\begin{proof}
By definition of core lattice, a core $C$ at level $i$ has a \corenessvec\  whose components sum to $i$, while the fathers $\mathcal{F}(C)$ of $C$ have \corenessvec\ whose components sum to $i-1$. 
The \corenessvec\ of a father of $C$ can be obtained by decreasing a non-zero component of the \corenessvec\ of $C$ by one. This means that the number of fathers of $C$ is upper-bounded by the non-zero components of its \corenessvec.
More precisely, the number of fathers of $C$ is exactly equal to this number, as, according to Corollary~\ref{cor:coreintersection}, no father of $C$ can be empty, otherwise $C$ would be empty too and would not be part of the core lattice.
\end{proof}

The above corollaries pave the way to a breadth-first search building strategy of the core lattice, where cores are  generated level-by-level by properly exploiting the rules in the two corollaries (Algorithm~\ref{alg:bfs}).
Although the worst-case time complexity of  this \bfs\  method remains unchanged with respect to the na\"{\i}ve algorithm, the \textsc{bfs} method is expected to be much more efficient in practice, due to the following main features: ($i$) cores are not computed from the initial graph every time, but from a much smaller subgraph given by the intersection of all their fathers; ($ii$) in many cases, i.e., when the rule in Corollary 2 (which can be checked in constant time) arises, no overhead due to the intersection among father cores is required; ($iii$) the number of empty cores computed is limited, as no empty core may be generated from a core that has already been recognized as empty.

\subsection{Depth-first algorithm}%%\enlargethispage*{\baselineskip}
\label{sec:alg:dfs}
Although being much smarter than the na\"ive method, \bfs\ still has some limitations.
First, it visits every core as many times as the number of its fathers in the core lattice.
Also, as a second limitation, consider a path $\mathcal{P}$ of the  lattice connecting a non-leaf node to a leaf by varying the same $\ell$-th component of the corresponding \corenessvec{s}.
It is easy to see that the computation of all cores within $\mathcal{P}$ with \bfs\ takes $\mathcal{O}(|\mathcal{P}| \times (|E| + |V|\times|L|))$ time, as the core-decomposition process is re-started at every level of the lattice.
This process can in principle be performed more efficiently, i.e., so as to take $\mathcal{O}(|\mathcal{P}| + |E| + |V|\times|L|)$ time, as it actually corresponds to (a simple variant of) a single-layer core decomposition.

\begin{algorithm}[t]
	\caption{\dfs} \label{alg:dfs}
	\begin{algorithmic}[1]
		%\small
		\REQUIRE A multilayer graph $G = (V,E,L)$.
		\ENSURE The set \coresset\ of all non-empty multilayer cores of $G$.
		
		\STATE $\coresset  \leftarrow \{V\}$, \ \ $R \leftarrow L$, \ \ $\mathbf{Q} \leftarrow \{[0]_{|L|}\}$, \ \ $\mathbf{Q}' \leftarrow \emptyset$
		%\hfill\COMMENT{$\mathbf{Q}$ and $\mathbf{Q}'$ are multisets}
		
		\WHILE {$R \neq \emptyset$}
    		\STATE remove a layer from $R$
    		%\STATE $\mathbf{Q}' \leftarrow \emptyset$
    		\FORALL {$\vec{k} \in \mathbf{Q}$}
        		%\FORALL {$\ell : \ell \in R, \ k_\ell = 0$} \label{line:df:step1start}
        		\STATE $\forall \ell \!\in\! R$ s.t. $k_{\ell} = 0: \mathbf{Q}' \leftarrow \mathbf{Q}'  \cup \{\vec{k}' \mid C_{\vec{k}'} \in \kcorespathalg(G,C_{\vec{k}},\vec{k},\ell)\}$ \label{line:dfs:queue}
        		%\ENDFOR \label{line:df:step1end}
        		%\FORALL {$\ell : \ell \in L \setminus R, \ k_\ell = 0$} \label{line:df:step2start}
        		\STATE $\forall \ell \!\in\! L\setminus R$ s.t. $k_{\ell} = 0: \coresset \leftarrow \coresset \cup \kcorespathalg(G,C_{\vec{k}},\vec{k},\ell)$ \label{line:dfs:cores}
        		%\ENDFOR \label{line:df:step2end}
    		\ENDFOR
    		\STATE $\coresset  \leftarrow \coresset  \cup  \{C_{\vec{k}} \mid \vec{k} \in \mathbf{Q}'\}$, \ \ $\mathbf{Q} \leftarrow \mathbf{Q}'$, \ \ $\mathbf{Q}' \leftarrow \emptyset$
		\ENDWHILE
	\end{algorithmic}
\end{algorithm}

To address the two above cons, we propose a method performing a depth-first search on the core lattice.
The method, dubbed \dfs\  (Algorithm~\ref{alg:dfs}), iteratively picks a non-leaf core $\vec{k} = [k_1, \ldots, k_{\ell}, \ldots, k_{|L|}]$ and a layer $\ell$ such that $k_\ell = 0$, and computes all cores $[k_1, \ldots, k_{\ell} + 1, \ldots, k_{|L|}], \ldots, [k_1, \ldots, K_{\ell}, \ldots, k_{|L|}]$ with a run of
the $\kcorespathalg(G,C_{\vec{k}},\vec{k},\ell)$ subroutine. %, which returns a single-layer core decomposition in layer $\ell$ starting from core $C_{\vec{k}}$.
\revision{
Specifically, such a subroutine returns the cores corresponding to all \corenessvec{s} obtained by varying the $\ell$-th component of $\vec{k}$ within $[0,\ldots,K_{\ell}]$.
Also, it discards vertices violating the coreness condition specified by vector $\vec{k}$, i.e., vertices whose degree in some layer $\hat{\ell} \neq \ell$ is less than the $\hat{\ell}$-th component of $\vec{k}$.
The pseudocode of \kcorespathalg is reported as Algorithm~\ref{alg:corespath}: it closely resembles the traditional core-decomposition algorithm for single-layer graphs, except for the addition of the cycle starting at Line~\ref{line:corespath:otherlayers}, which identifies the aforementioned vertices to be discarded.
% for violating the coreness condition specified by $\vec{k}$.
}

A side effect of this strategy is that the same core may be computed multiple times. As an example, in Figure~\ref{fig:dagstr} the $(1,2,0)$-core is computed by core decompositions initiated at both cores $(1,0,0)$ and $(0,2,0)$.
To reduce (but not eliminate) these multiple core computations, the \dfs\ method exploits the following result.

%\begin{mytheorem}\label{th:dfs}
%Problem~\ref{prob:mlcoredecomposition} is solved by \dfs\ by computing single-layer core decompositions, for the layers corresponding to \corenessvec s' non-zero components, of cores output during the previous iteration by single-layer core decompositions for a subset of layers.
%Such subset of layers starts from the whole layers set $L$ and is iteratively reduced, one layer at a time. \hfill\qed
%\end{mytheorem}

\begin{mytheorem}\label{th:dfs}
Given a multilayer graph $G = (V,E,L)$, let $[\ell_1, \ldots, \ell_{|L|}]$ be an order defined over set $L$.
Let $\mathbf{Q}_0 = \{[0]_{|L|}\}$, and, $\forall i \in [1..|L|]$,
let $\mathbf{Q}_i = \{\vec{k}' \mid C_{\vec{k}'} \in \kcorespathalg(G, C_{\vec{k}}, \vec{k}, \ell), \vec{k} \in \mathbf{Q}_{i-1}, \ell \in (\ell_i..\ell_{|L|}], k_{\ell} = 0\}$
and $\mathbf{C}_i = \{\vec{k}' \mid C_{\vec{k}'} \in \kcorespathalg(G, C_{\vec{k}}, \vec{k}, \ell), \vec{k} \in \mathbf{Q}_{i-1}, \ell \in [\ell_1..\ell_i], k_{\ell} = 0\}$.
The set $\mathbf{C} = \{C_{\vec{k}} \mid \vec{k} \in \bigcup_{i = 0}^{|L|} \mathbf{Q}_i \cup \bigcup_{i = 1}^{|L|} \mathbf{C}_i\}$ is the multilayer core decomposition of $G$.
\end{mytheorem}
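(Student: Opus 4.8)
The plan is to prove the two inclusions separately: \emph{soundness} --- every member of $\mathbf{C}$ is a non-empty multilayer core of $G$ --- and \emph{completeness} --- every non-empty core of $G$ belongs to $\mathbf{C}$. Soundness is the routine half. By induction on $i$, every vector appearing in some $\mathbf{Q}_i$ or $\mathbf{C}_i$ has a non-empty core: $\mathbf{Q}_0 = \{[0]_{|L|}\}$ with $C_{[0]_{|L|}} = V$, and every other vector is emitted by a call $\kcorespathalg(G, C_{\vec{k}}, \vec{k}, \ell)$ fired from an already-generated core $C_{\vec{k}}$, which is non-empty by the inductive hypothesis; the subroutine (Algorithm~\ref{alg:corespath}) is a peeling procedure that, for $u = 0, 1, 2, \ldots$, returns exactly the $(\vec{k} + u\vec{e}_\ell)$-core and stops before this set becomes empty, so every core it produces is genuine and non-empty. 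Since $\mathbf{C}$ is a set of vertex subsets, distinctness is automatic.

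The substance of the theorem is completeness. Fix a non-empty core $C$ and let $\vec{k} = [k_\ell]_{\ell \in L}$ be any coreness vector of it; it suffices to place this single $\vec{k}$ in $\bigcup_{i} \mathbf{Q}_i \cup \bigcup_{i} \mathbf{C}_i$, for then $C_{\vec{k}} = C \in \mathbf{C}$. Write the support of $\vec{k}$, read along the layer order $[\ell_1,\ldots,\ell_{|L|}]$, as $\{\ell_{j_1},\ldots,\ell_{j_m}\}$ with $j_1 < \cdots < j_m$ (if $\vec{k} = [0]_{|L|}$ then $\vec{k} \in \mathbf{Q}_0$ and we are done). The key observation is that the depth-first procedure reaches $\vec{k}$ along one specific path: \textbf{add the non-zero components one at a time in increasing order of layer index, deferring the $\ell_1$-component, if present, to a final ``terminal'' step}. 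Formally, set $\vec{v}_0 = [0]_{|L|} \in \mathbf{Q}_0$; if $j_1 \geq 2$ put $\vec{v}_p = \vec{v}_{p-1} + k_{j_p}\vec{e}_{\ell_{j_p}}$ for $p = 1,\ldots,m$, and if $j_1 = 1$ put $\vec{v}_p = \vec{v}_{p-1} + k_{j_{p+1}}\vec{e}_{\ell_{j_{p+1}}}$ for $p = 1,\ldots,m-1$. One shows $\vec{v}_p \in \mathbf{Q}_p$ by induction on $p$: via the definition of $\mathbf{Q}_p$ this comes down to (a) the layer added at step $p$ lying in the ``continue'' window $(\ell_p..\ell_{|L|}]$, which holds because the indices of the added layers form a strictly increasing sequence of integers $\geq 2$, so the $p$-th of them is $\geq p+1$; (b) that layer's component in $\vec{v}_{p-1}$ being still $0$ (it is not yet in the partial support); and (c) the call $\kcorespathalg(G, C_{\vec{v}_{p-1}}, \vec{v}_{p-1}, \ell)$ actually reaching the component value $k_\ell$. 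Point (c) is exactly where Fact~\ref{fact:nestedmlcores} enters: $\vec{v}_p$ is dominated componentwise by $\vec{k}$, so $C_{\vec{v}_p} \supseteq C_{\vec{k}} = C \neq \emptyset$, hence the peeling does not halt before emitting $C_{\vec{v}_p}$.

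Pushing the induction through finishes completeness. If $j_1 \geq 2$ then $\vec{v}_m = \sum_{p=1}^{m} k_{j_p}\vec{e}_{\ell_{j_p}} = \vec{k}$, and since $\ell_1$ is not in the support we have $m \leq |L|-1$, so round $m$ exists and $\vec{k} \in \mathbf{Q}_m$. If $j_1 = 1$ then $\vec{v}_{m-1} = \sum_{p=2}^{m} k_{j_p}\vec{e}_{\ell_{j_p}} \in \mathbf{Q}_{m-1}$, and in round $m$ (which exists because $m \leq j_m \leq |L|$) the ``terminal'' window $[\ell_1..\ell_m]$ contains $\ell_1$, whose component in $\vec{v}_{m-1}$ is $0$; thus $\kcorespathalg(G, C_{\vec{v}_{m-1}}, \vec{v}_{m-1}, \ell_1)$ emits $C_{\vec{v}_{m-1} + k_1\vec{e}_{\ell_1}} = C_{\vec{k}}$ and $\vec{k} \in \mathbf{C}_m$. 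Either way $\vec{k}$ lies in the union, so $C \in \mathbf{C}$; the degenerate cases $m \leq 1$ are handled by the same formulas (possibly with no intermediate $\vec{v}_p$ at all).

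The main obstacle, as I see it, is not any single inequality but the bookkeeping that ties the abstract recurrences for $\mathbf{Q}_i$ and $\mathbf{C}_i$ to an actually-executed depth-first path: one must verify simultaneously that the layer scheduled at each step lands in the ``continue'' window (and that only the deferred $\ell_1$ ever uses the ``terminal'' window), that enough of the $|L|$ rounds remain for that terminal step, and --- crucially --- that each $\kcorespathalg$ invocation is issued from a core it has already certified non-empty, so the chain $\vec{v}_0, \vec{v}_1, \ldots$ never breaks; this last point rests entirely on Fact~\ref{fact:nestedmlcores}. A subsidiary ingredient, best stated as a one-line lemma or simply cited, is the correctness of $\kcorespathalg$ --- that its $u$-th output is precisely the $(\vec{k}+u\vec{e}_\ell)$-core --- which follows from it being the classical single-layer peeling algorithm augmented with cascading removal of vertices violating the fixed constraints of $\vec{k}$ in the remaining layers.
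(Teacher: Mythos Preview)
Your proof is correct and follows essentially the same idea as the paper's: both arguments stratify coreness vectors by the size $m$ of their support and show that the DFS recurrence reaches every such vector after $m$ rounds, building up the non-zero components one layer at a time (with the special role of $\ell_1$ handled via the terminal set $\mathbf{C}_i$). Your version is considerably more explicit than the paper's sketch --- you actually construct the witnessing path $\vec{v}_0,\vec{v}_1,\ldots$, verify that the layer added at step $p$ lands in the window $(\ell_p..\ell_{|L|}]$ via the pigeonhole bound $j_p \geq p+1$, and invoke Fact~\ref{fact:nestedmlcores} to guarantee each intermediate core is non-empty --- whereas the paper essentially asserts the key correspondence $\mathbf{Q}_i \cup \mathbf{C}_i = \{\vec{k}' : |\{\ell : k'_\ell > 0\}| = i\}$ without spelling out the induction.
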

\begin{proof}
The multilayer core decomposition of $G$ is formed by the union of all non-empty and distinct cores of all paths $\mathcal{P}$ of the lattice connecting a non-leaf node to a leaf by varying the same $\ell$-th component of the corresponding \corenessvec{s}.
%
%Since some of the paths overlap, all cores of the paths $\mathcal{P}_i$, whose \corenessvec{s} $\vec{k}'=[k'_{\ell}]_{\ell \in L}$ have $i$ non-zero components, i.e., whose \corenessvec{s} $\vec{k}'$ are in $\mathbf{Q}_i \cup \mathbf{C}_i = \{\vec{k}' : |\{k'_{\ell} : \ell \in L, k'_{\ell} > 0\}| = i\}$, are derived by executing single-layer core decompositions initiated at a subset of cores of the paths $\mathcal{P}_{i-1}$, whose \corenessvec{s} $\vec{k}=[k_{\ell}]_{\ell \in L}$ have $i-1$ non-zero components.
%Such a subset of cores is represented by the set of \corenessvec{s} $\mathbf{Q}_{i-1} = \{\vec{k} :  |\{k_{\ell} : \ell \in [\ell_{2}..\ell_{|L|}], k_{\ell} > 0\}| = i-1\}$, i.e., the set of \corenessvec{s} $\vec{k}$ whose number of non-zero components corresponding to layers within $[\ell_{2}..\ell_{|L|}]$ is equal to $i-1$.
%In addition, single-layer core decompositions for the layers where $k_{\ell}\neq0$ are avoided, since it is equivalent to visit cores in $\mathcal{P}_{i-1}$.
%
%As a result, the set  $\{C_{\vec{k}} \mid \vec{k} \in \bigcup_{i = 0}^{|L|} \mathbf{Q}_i \cup \bigcup_{i = 1}^{|L|} \mathbf{C}_i\}$ correctly contains all possible \corenessvec{s} of the core lattice.

\revision{
For any $i \in [1..|L|]$, let $\mathcal{P}_i \in \mathcal{P}$ denote the subset of paths whose \corenessvec{s} $\vec{k}'=[k'_{\ell}]_{\ell \in L}$ have a number of non-zero components equal to $i$.
By definition of $\mathbf{Q}_i$ and $\mathbf{C}_i$ it holds that all \corenessvec{s} $\vec{k}'$ of the cores along the paths in $\mathcal{P}_i$ are in $\mathbf{Q}_i \cup \mathbf{C}_i = \{\vec{k}' : |\{k'_{\ell} : \ell \in L, k'_{\ell} > 0\}| = i\}$.
Also, since some of the paths may overlap, all cores along the paths $\mathcal{P}_i$ are computed by executing single-layer core decompositions initiated at a subset of cores along the paths $\mathcal{P}_{i-1}$.
Such a subset of cores is represented by the subset of \corenessvec{s} within $\mathbf{Q}_{i-1} = \{\vec{k} :  |\{k_{\ell} : \ell \in [\ell_{2}..\ell_{|L|}], k_{\ell} > 0\}| = i-1\}$.
%, i.e., the set of \corenessvec{s} $\vec{k}$ whose number of non-zero components corresponding to layers within $[\ell_{2}..\ell_{|L|}]$ is equal to $i-1$.
Moreover, note that single-layer core decompositions for the layers where $k_{\ell} \neq 0$ are discarded, as it boils down to visit cores in $\mathcal{P}_{i-1}$.
As a result, the set  $\{C_{\vec{k}} \mid \vec{k} \in \bigcup_{i = 0}^{|L|} \mathbf{Q}_i \cup \bigcup_{i = 1}^{|L|} \mathbf{C}_i\}$ correctly contains all possible \corenessvec{s} of the core lattice.
}
\end{proof}

\begin{algorithm}[t]
\caption{\revision{\kcorespathalg}} \label{alg:corespath}
\begin{algorithmic}[1]
\REQUIRE \revision{A multilayer graph $G = (V,E,L)$, a set $S \subseteq V$ of vertices, an $|L|$-dimensional integer vector $\vec{k} = [k_{\ell}]_{\ell \in L}$, and a layer $\ell \in L$.}
\ENSURE \revision{The set $\mathbf{C}_{\vec{k},\ell}$ of the multilayer cores of $G$ varying the $\ell$-th component of $\vec{k}$.}
\STATE \revision{$\mathbf{C}_{\vec{k},\ell} \leftarrow \emptyset$, $\mathcal{D} \leftarrow \emptyset$}
\FORALL{\revision{$u \in S$}}
	\STATE \revision{$\mathcal{D}(\deg_S(u, \ell)) \leftarrow \mathcal{D}(\deg_S(u, \ell)) \cup \{u\}$}
\ENDFOR
\FORALL{\revision{$k \in [0,\ldots,K_{\ell}]$}}
	\WHILE{\revision{$D(k) \neq \emptyset$}}
		\STATE \revision{remove a vertex $u$ from $\mathcal{D}(k)$}
		\STATE \revision{$S \leftarrow S \setminus \{u\}$}
		\FORALL{\revision{$v \in S: (u,v,\ell) \in E \land \deg_S(v, \ell) \geq k$}}
			\STATE \revision{$\mathcal{D}(\deg_S(v, \ell) + 1) \leftarrow \mathcal{D}(\deg_S(v, \ell) + 1) \setminus \{v\}$}
			\STATE \revision{$\mathcal{D}(\deg_S(v, \ell)) \leftarrow \mathcal{D}(\deg_S(v, \ell)) \cup \{v\}$}
		\ENDFOR
		\FORALL{\revision{$\hat{\ell} \in L \setminus \{\ell\}$}} \label{line:corespath:otherlayers}
			\FORALL{\revision{$v \in S: (u,v,\hat{\ell}) \in E \land \deg_S(v, \hat{\ell}) < k_{\hat{\ell}}$}}
				\STATE \revision{$\mathcal{D}(\deg_S(v, \ell)) \leftarrow \mathcal{D}(\deg_S(v, \ell)) \setminus \{v\}$}
				\STATE \revision{$\mathcal{D}(k) \leftarrow \mathcal{D}(k) \setminus \{v\}$}
			\ENDFOR
		\ENDFOR
	\ENDWHILE
	\STATE \revision{$\mathbf{C}_{\vec{k},\ell} \leftarrow \mathbf{C}_{\vec{k},\ell} \cup \{S\}$}
\ENDFOR
\end{algorithmic}
\end{algorithm}

Referring to  Algorithm~\ref{alg:dfs}, the result in Theorem~\ref{th:dfs} is implemented by keeping track of a subset of layers $R \subseteq L$.
At the beginning $R = L$, and, at each iteration of the main cycle, a layer $\ell$ is removed from it.
\revision{
The algorithm is guaranteed to be sound and complete regardless of the layer ordering, which may instead affect running time.
In our experiments we test several orders, such as random, or non-decreasing/non-increasing average-degree density.
All those orders resulted in comparable running times, we therefore decided to stick to the the simplest one, i.e., random.%, in our implementation.
%The various tested orders yield contradictory experimental results on different datasets.
%In our implementation, layers are removed sorted by natural order of their indexes.
}
Set $\mathbf{Q}$ keeps track of (the \corenessvec\ of) all lattice nodes where the current single-layer core-decomposition processes need to be run from.
$\mathbf{Q}'$ stores the (\corenessvec\ of) cores computed from each node in $\mathbf{Q}$ and for each layer within $R$, while also forming the basis of $\mathbf{Q}$ for the next iteration.

In summary, compared to \bfs, the \textsc{dfs} method reduces both the time complexity of computing all cores in a path $\mathcal{P}$ from a non-leaf node to a leaf of the core lattice (from $\mathcal{O}(|\mathcal{P}| \times (|E| + |V|\times|L|))$ to $\mathcal{O}(|\mathcal{P}| + |E| + |V|\times|L|)$), and the number of times a core is \emph{visited}, which may now be smaller than the number of its fathers.
On the other hand, \dfs\ comes with the aforementioned issue that some cores may be \emph{computed} multiple times (while in \bfs\ every core is computed only once).
Furthermore, cores are computed starting from larger subgraphs, as intersection among multiple fathers can not exploited.

\begin{algorithm}[t]
	\caption{\hybrid} \label{alg:hybrid}
	\begin{algorithmic}[1]
		%\small
		\REQUIRE A multilayer graph $G = (V,E,L)$.
		\ENSURE The set \coresset\ of all non-empty multilayer cores of $G$.
		
		\STATE $\mathbf{Q} \!\leftarrow\! \{[0]_{|L|}\}$, \ $\mathcal{F}([0]_{|L|}) \!\leftarrow\! \emptyset$ \COMMENT{$\mathcal{F}$ keeps track of father nodes}
		\STATE $\mathbf{Q}' \!\leftarrow\! \bigcup_{\ell \in L} \{\vec{k} \mid C_{\vec{k}} \in \kcorespathalg(G, \!V, \![0]_{|L|},\!\ell)\}$ \COMMENT{looked-ahead cores}
		\STATE $\coresset \leftarrow \{C_{\vec{k}} \mid \vec{k} \in \mathbf{Q}'\}$
		%\STATE $Q \leftarrow [(1,0,\ldots,0), (0,1,\ldots,0),\ldots,(0,0,\ldots,1)]$, \ $F \leftarrow [\emptyset, \ldots, \emptyset]$,
		
        \WHILE {$\mathbf{Q} \neq \emptyset$}
            \STATE dequeue $\vec{k} = [k_{\ell}]_{\ell \in L}$ from $\mathbf{Q}$ %\label{line:bf:pick}

            \IF [Corollary~\ref{cor:fathernumber}]{$|\{k_{\ell} : k_{\ell} > 0\}| = |\mathcal{F}(\vec{k})| \wedge \vec{k} \notin \mathbf{Q}'$}
                \STATE $F_{\cap} \leftarrow \bigcap_{F \in \mathcal{F}(\vec{k})} F$ \COMMENT{Corollary~\ref{cor:coreintersection}}
        		\STATE $C_{\vec{k}} \leftarrow \kcorealg(G,F_{\cap},\vec{k})$ \COMMENT{Algorithm~\ref{alg:core}} %\label{line:bf:core}	
        		
                \IF{$C_{\vec{k}} \neq \emptyset$} \label{line:hybrid:exists}
               		%\STATE\hspace{\algorithmicindent} \hspace{\algorithmicindent}\hspace{\algorithmicindent} $\coresset \! \leftarrow \!\coresset  \cup  \{ C_{\vec{k}} \} \setminus \{F \!\in\! \mathcal{F}(\vec{k}) \mid F \!= \!C_{\vec{k}}\}$\hfill$\{$add current core and\\
            		%\hfill \ \ remove non-distinct cores$\}$
            		\STATE $\coresset \! \leftarrow \!\coresset  \cup  \{ C_{\vec{k}} \}$
            		\STATE $\vec{d}_{\mu}(C_{\vec{k}}) \leftarrow [\mu({C_{\vec{k}}},\ell)]_{\ell \in L}$ \COMMENT{look-ahead mechanism (Corollary~\ref{cor:mindegvector})}
            		\STATE $\mathbf{Q}' \leftarrow \mathbf{Q}' \cup \{\vec{k}' \mid \vec{k} \leq \vec{k}' \leq \vec{d}_{\mu}(C_{\vec{k}})\}$
                \ENDIF
            \ENDIF

            \IF{$\vec{k} \in \mathbf{Q}'$}
                \FORALL[enqueue child nodes]{$\ell \in L$}
            		\STATE $\vec{k}' \leftarrow [k_1, \ldots, k_{\ell} + 1, \ldots, k_{|L|}]$
            		\STATE enqueue $\vec{k}$' into $\mathbf{Q}$
            		\STATE $\mathcal{F}(\vec{k}') \leftarrow \mathcal{F}(\vec{k}') \cup \{C_{\vec{k}}\}$
                \ENDFOR
            \ENDIF
        \ENDWHILE
	\end{algorithmic}
\end{algorithm}

\subsection{Hybrid algorithm}
The output of both  \bfs\ and \dfs\ correctly corresponds to all distinct cores of the input  graph and the corresponding maximal \corenessvec{s}.\footnote{Pseudocodes in Algorithms~\ref{alg:bfs} and \ref{alg:dfs} guarantee this as cores are added to a set \coresset\ that does not allow duplicates. Any real implementation can easily take care of this by checking whether a core is already in \coresset, and update it in case the corresponding \corenessvec\ contains the previously-stored one.}
Nevertheless, none of these methods is able to skip the computation of non-distinct cores.
Indeed, both methods need to compute every core $C$ as many times as the number of its \corenessvec{s}  to guarantee completeness.
To address this limitation, we devise a further method where the main peculiarities of  \bfs\ and \dfs\ are joined into a ``hybrid''  lattice-visiting strategy. This \hybrid\ method exploits the following corollary of Theorem~\ref{th:maximalcorenessvecuniqueness}, stating that the maximal \corenessvec\ of a core $C$ is given by the vector containing the minimum degree of a vertex in $C$ for each layer:

\begin{mycorollary}\label{cor:mindegvector}
	Given a multilayer graph $G = (V,E,L)$,
	the maximal \corenessvec\ of a multilayer core $C$ of $G$ corresponds to the $|L|$-dimensional integer vector $\vec{d}_{\mu}(C) = [\mu(C,\ell)]_{\ell \in L}$.
	%$\vec{k}_{min}(C)$ is referred to as the \emph{min-degree vector} of $C$.
	%\hfill\qed
\end{mycorollary}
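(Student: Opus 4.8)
The plan is to establish two facts about the vector $\vec{d}_{\mu}(C) = [\mu(C,\ell)]_{\ell \in L}$: first, that it is a \corenessvec\ of $C$ in the sense of Definition~\ref{def:mlkcores}; and second, that it componentwise dominates every other \corenessvec\ of $C$. Together with Theorem~\ref{th:maximalcorenessvecuniqueness}, these imply that $\vec{d}_{\mu}(C)$ is \emph{the} (unique) maximal \corenessvec\ of $C$ in the sense of Definition~\ref{def:maximalcorenessvec}.

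For the first fact, I would use that $C$ is, by hypothesis, a core of $G$, so there is at least one \corenessvec\ $\vec{k} = [k_{\ell}]_{\ell \in L}$ with $C = C_{\vec{k}}$; in particular $\mu(C,\ell) \geq k_{\ell}$ for all $\ell \in L$, hence $\vec{d}_{\mu}(C) \geq \vec{k}$ componentwise. It then remains to check that $C$ is a \emph{maximal} subgraph satisfying the degree constraints prescribed by $\vec{d}_{\mu}(C)$. Indeed, any vertex set $S$ with $\mu(S,\ell) \geq \mu(C,\ell)$ for all $\ell$ also satisfies $\mu(S,\ell) \geq k_{\ell}$ for all $\ell$, so by maximality of $C_{\vec{k}}$ we get $S \subseteq C$; and $C$ itself trivially satisfies $\mu(C,\ell) \geq \mu(C,\ell)$. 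Hence $C$ is exactly the $\vec{d}_{\mu}(C)$-core, i.e., $\vec{d}_{\mu}(C)$ is a \corenessvec\ of $C$.

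For the second fact, let $\vec{k}' = [k'_{\ell}]_{\ell \in L}$ be an arbitrary \corenessvec\ of $C$. By Definition~\ref{def:mlkcores}, $C = C_{\vec{k}'}$ satisfies $\mu(C,\ell) \geq k'_{\ell}$ for every $\ell \in L$, which is precisely $\vec{k}' \leq \vec{d}_{\mu}(C)$ componentwise. Thus no \corenessvec\ of $C$ can strictly dominate $\vec{d}_{\mu}(C)$ in any component, while $\vec{d}_{\mu}(C)$ is itself a \corenessvec\ of $C$ by the first fact; this is exactly the condition of Definition~\ref{def:maximalcorenessvec}, and uniqueness follows from Theorem~\ref{th:maximalcorenessvecuniqueness}.

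I expect the only delicate point to be the maximality claim in the first fact, namely verifying that raising the per-layer degree thresholds from $\vec{k}$ up to $\vec{d}_{\mu}(C)$ does not enlarge the resulting core beyond $C$. This is handled by the chain $\mu(S,\ell) \geq \mu(C,\ell) \geq k_{\ell}$ combined with the maximality of $C_{\vec{k}}$; everything else in the argument is a direct unwinding of the definitions.
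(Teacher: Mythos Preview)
Your proof is correct and follows essentially the same approach as the paper: show that $\vec{d}_{\mu}(C)$ is a \corenessvec\ of $C$, then observe that any \corenessvec\ $\vec{k}'$ of $C$ must satisfy $k'_\ell \leq \mu(C,\ell)$ by definition, so $\vec{d}_{\mu}(C)$ cannot be dominated. If anything, your treatment of the first step is more careful than the paper's, which simply asserts ``By Definition~\ref{def:mlkcores}, vector $\vec{d}_{\mu}(C)$ is a \corenessvec\ of $C$'' without explicitly verifying the maximality of $C$ with respect to the tighter thresholds.
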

\begin{proof}
By Definition~\ref{def:mlkcores}, vector $\vec{d}_{\mu}(C)$ is a \corenessvec\ of $C$.
Assume that $\vec{d}_{\mu}(C)$ is not maximal, meaning that another \corenessvec\ $\vec{k} = [k_{\ell}]_{\ell \in L}$ dominating $\vec{d}_{\mu}(C)$ exists.
This implies that $k_{\ell} \geq \mu(C,\ell)$, and $\exists \hat{\ell} \in L : k_{\hat{\ell}} > \mu(C,\hat{\ell})$.
By definition of multilayer core, all vertices in $C$ have degree larger than the minimum degree $\mu(C,\hat{\ell})$ in layer $\hat{\ell}$, which is a clear contradiction.
\end{proof}

Corollary~\ref{cor:mindegvector} gives a rule to skip the computation of non-distinct cores: given a core $C$ with \corenessvec\ $\vec{k} = [k_{\ell}]_{\ell \in L}$, all cores with \corenessvec\ $\vec{k'} = [k'_{\ell}]_{\ell \in L}$ such that $\forall \ell \in L : k_{\ell} \leq k'_{\ell} \leq \mu(C,\ell)$ are guaranteed to be equal to $C$ and do not need to be explicitly computed.
For instance, in Figure~\ref{fig:dagstr}, assume that the min-degree vector of the $(0,0,1)$-core is $(0,1,2)$. Then, cores $(0,0,2)$, $(0,1,1)$, and $(0,1,2)$ can immediately be set equal to the $(0,0,1)$-core.
The \hybrid\ algorithm we present here (Algorithm~\ref{alg:hybrid}) exploits this rule by performing a breadth-first search  equipped with a ``look-ahead'' mechanism resembling  a depth-first search.
Moreover, \hybrid\ starts with a single-layer core decomposition for each layer so as to have more fathers early-on for intersections.
Cores interested by the look-ahead rule are still \emph{visited} and stored in $\mathbf{Q}'$, as they may be needed for future core computations.
However, no further computational overhead is required for them.
\revision{
The time complexity of \hybrid\ is the same as \bfs, plus an additional $\mathcal{O}(|E|)$ time for every visited multilayer core, which is needed in the look-ahead rule to compute the min-degree vector of that core.
}

\subsection{Discussion}

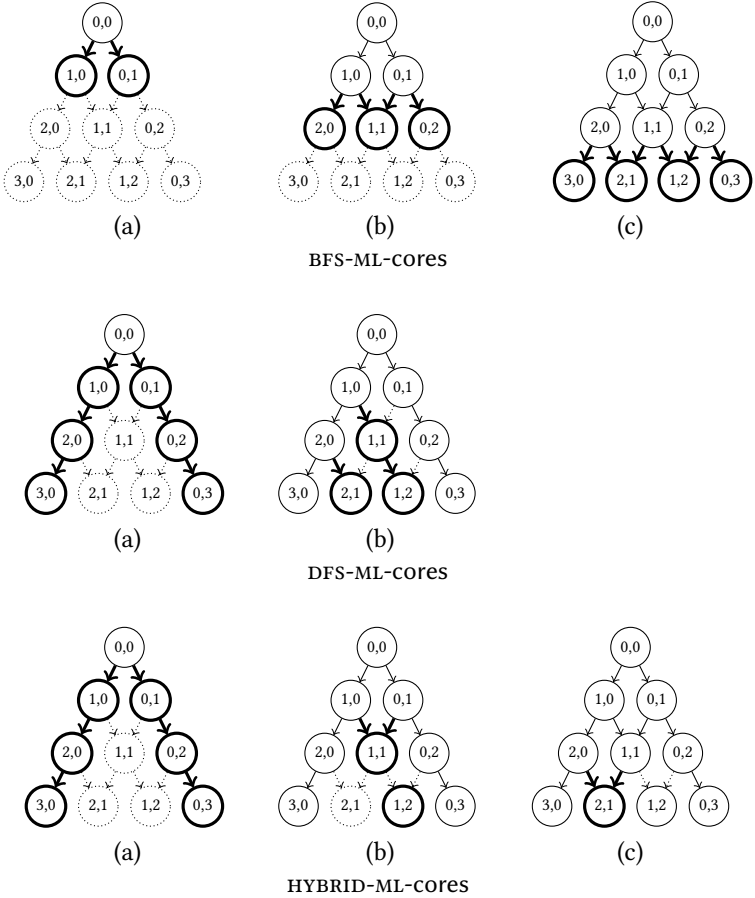
\begin{figure}[t]
\begin{tabular}{ccc}
\begin{tikzpicture} [scale=0.087, every node/.style={circle, draw, scale=0.65}]
	% \node (name) at (horizonal,vertical) {label}; --> (0,0) is the lower left corner
	\node [densely dotted] (30) at (0,0) {3,0};
	\node [densely dotted] (21) at (8,0) {2,1};
	\node [densely dotted] (12) at (16,0) {1,2};
	\node [densely dotted] (03) at (24,0) {0,3};
	
	\node [densely dotted] (20) at (4,8) {2,0};
	\node [densely dotted] (11) at (12,8) {1,1};
	\node [densely dotted] (02) at (20,8) {0,2};
	
	\node [very thick] (10) at (8,16) {1,0};
	\node [very thick] (01) at (16,16) {0,1};
	
	\node (00) at (12,24) {0,0};

	\draw [densely dotted, ->] (20) edge (30);
	\draw [densely dotted, ->] (20) edge (21);
	\draw [densely dotted, ->] (11) edge (21);
	\draw [densely dotted, ->] (11) edge (12);
	\draw [densely dotted, ->] (02) edge (12);
	\draw [densely dotted, ->] (02) edge (03);

	\draw [densely dotted, ->] (10) edge (20);
	\draw [densely dotted, ->] (10) edge (11);
	\draw [densely dotted, ->] (01) edge (02);
	\draw [densely dotted, ->] (01) edge (11);

	\draw [very thick, ->] (00) edge (10);
	\draw [very thick, ->] (00) edge (01);
\end{tikzpicture} \hspace{0.5cm} & \begin{tikzpicture} [scale=0.087, every node/.style={circle, draw, scale=0.65}]
	% \node (name) at (horizonal,vertical) {label}; --> (0,0) is the lower left corner
	\node [densely dotted] (30) at (0,0) {3,0};
	\node [densely dotted] (21) at (8,0) {2,1};
	\node [densely dotted] (12) at (16,0) {1,2};
	\node [densely dotted] (03) at (24,0) {0,3};
	
	\node [very thick] (20) at (4,8) {2,0};
	\node [very thick] (11) at (12,8) {1,1};
	\node [very thick] (02) at (20,8) {0,2};
	
	\node (10) at (8,16) {1,0};
	\node (01) at (16,16) {0,1};
	
	\node (00) at (12,24) {0,0};

	\draw [densely dotted, ->] (20) edge (30);
	\draw [densely dotted, ->] (20) edge (21);
	\draw [densely dotted, ->] (11) edge (21);
	\draw [densely dotted, ->] (11) edge (12);
	\draw [densely dotted, ->] (02) edge (12);
	\draw [densely dotted, ->] (02) edge (03);

	\draw [very thick, ->] (10) edge (20);
	\draw [very thick, ->] (10) edge (11);
	\draw [very thick, ->] (01) edge (02);
	\draw [very thick, ->] (01) edge (11);

	\draw [->] (00) edge (10);
	\draw [->] (00) edge (01);
\end{tikzpicture} & \hspace{0.5cm} \begin{tikzpicture} [scale=0.087, every node/.style={circle, draw, scale=0.65}]
	% \node (name) at (horizonal,vertical) {label}; --> (0,0) is the lower left corner
	\node [very thick] (30) at (0,0) {3,0};
	\node [very thick] (21) at (8,0) {2,1};
	\node [very thick] (12) at (16,0) {1,2};
	\node [very thick] (03) at (24,0) {0,3};
	
	\node (20) at (4,8) {2,0};
	\node (11) at (12,8) {1,1};
	\node (02) at (20,8) {0,2};
	
	\node (10) at (8,16) {1,0};
	\node (01) at (16,16) {0,1};
	
	\node (00) at (12,24) {0,0};

	\draw [very thick, ->] (20) edge (30);
	\draw [very thick, ->] (20) edge (21);
	\draw [very thick, ->] (11) edge (21);
	\draw [very thick, ->] (11) edge (12);
	\draw [very thick, ->] (02) edge (12);
	\draw [very thick, ->] (02) edge (03);

	\draw [->] (10) edge (20);
	\draw [->] (10) edge (11);
	\draw [->] (01) edge (02);
	\draw [->] (01) edge (11);

	\draw [->] (00) edge (10);
	\draw [->] (00) edge (01);
\end{tikzpicture}\\
(a) & (b) & (c)\\
\multicolumn{3}{c}{\bfs}\vspace{0.5cm}\\
\begin{tikzpicture} [scale=0.087, every node/.style={circle, draw, scale=0.65}]
	% \node (name) at (horizonal,vertical) {label}; --> (0,0) is the lower left corner
	\node [very thick] (30) at (0,0) {3,0};
	\node [densely dotted] (21) at (8,0) {2,1};
	\node [densely dotted] (12) at (16,0) {1,2};
	\node [very thick] (03) at (24,0) {0,3};
	
	\node [very thick] (20) at (4,8) {2,0};
	\node [densely dotted] (11) at (12,8) {1,1};
	\node [very thick] (02) at (20,8) {0,2};
	
	\node [very thick] (10) at (8,16) {1,0};
	\node [very thick] (01) at (16,16) {0,1};
	
	\node (00) at (12,24) {0,0};

	\draw [very thick, ->] (20) edge (30);
	\draw [densely dotted, ->] (20) edge (21);
	\draw [densely dotted, ->] (11) edge (21);
	\draw [densely dotted, ->] (11) edge (12);
	\draw [densely dotted, ->] (02) edge (12);
	\draw [very thick, ->] (02) edge (03);

	\draw [very thick, ->] (10) edge (20);
	\draw [densely dotted, ->] (10) edge (11);
	\draw [very thick, ->] (01) edge (02);
	\draw [densely dotted, ->] (01) edge (11);

	\draw [very thick, ->] (00) edge (10);
	\draw [very thick, ->] (00) edge (01);
\end{tikzpicture} & \begin{tikzpicture} [scale=0.087, every node/.style={circle, draw, scale=0.65}]
	% \node (name) at (horizonal,vertical) {label}; --> (0,0) is the lower left corner
	\node (30) at (0,0) {3,0};
	\node [very thick] (21) at (8,0) {2,1};
	\node [very thick] (12) at (16,0) {1,2};
	\node (03) at (24,0) {0,3};
	
	\node (20) at (4,8) {2,0};
	\node [very thick] (11) at (12,8) {1,1};
	\node (02) at (20,8) {0,2};
	
	\node (10) at (8,16) {1,0};
	\node (01) at (16,16) {0,1};
	
	\node (00) at (12,24) {0,0};

	\draw [ ->] (20) edge (30);
	\draw [very thick, ->] (20) edge (21);
	\draw [densely dotted, ->] (11) edge (21);
	\draw [very thick, ->] (11) edge (12);
	\draw [densely dotted, ->] (02) edge (12);
	\draw [->] (02) edge (03);

	\draw [ ->] (10) edge (20);
	\draw [very thick, ->] (10) edge (11);
	\draw [->] (01) edge (02);
	\draw [densely dotted, ->] (01) edge (11);

	\draw [->] (00) edge (10);
	\draw [->] (00) edge (01);
\end{tikzpicture} & \\
(a) & (b) & \\
\multicolumn{3}{c}{\dfs}\vspace{0.5cm}\\
\begin{tikzpicture} [scale=0.087, every node/.style={circle, draw, scale=0.65}]
	% \node (name) at (horizonal,vertical) {label}; --> (0,0) is the lower left corner
	\node [very thick] (30) at (0,0) {3,0};
	\node [densely dotted] (21) at (8,0) {2,1};
	\node [densely dotted] (12) at (16,0) {1,2};
	\node [very thick] (03) at (24,0) {0,3};
	
	\node [very thick] (20) at (4,8) {2,0};
	\node [densely dotted] (11) at (12,8) {1,1};
	\node [very thick] (02) at (20,8) {0,2};
	
	\node [very thick] (10) at (8,16) {1,0};
	\node [very thick] (01) at (16,16) {0,1};
	
	\node (00) at (12,24) {0,0};

	\draw [very thick, ->] (20) edge (30);
	\draw [densely dotted, ->] (20) edge (21);
	\draw [densely dotted, ->] (11) edge (21);
	\draw [densely dotted, ->] (11) edge (12);
	\draw [densely dotted, ->] (02) edge (12);
	\draw [very thick, ->] (02) edge (03);

	\draw [very thick, ->] (10) edge (20);
	\draw [densely dotted, ->] (10) edge (11);
	\draw [very thick, ->] (01) edge (02);
	\draw [densely dotted, ->] (01) edge (11);

	\draw [very thick, ->] (00) edge (10);
	\draw [very thick, ->] (00) edge (01);
\end{tikzpicture} & \begin{tikzpicture} [scale=0.087, every node/.style={circle, draw, scale=0.65}]
	% \node (name) at (horizonal,vertical) {label}; --> (0,0) is the lower left corner
	\node (30) at (0,0) {3,0};
	\node [densely dotted] (21) at (8,0) {2,1};
	\node [very thick] (12) at (16,0) {1,2};
	\node (03) at (24,0) {0,3};
	
	\node (20) at (4,8) {2,0};
	\node [very thick] (11) at (12,8) {1,1};
	\node (02) at (20,8) {0,2};
	
	\node (10) at (8,16) {1,0};
	\node (01) at (16,16) {0,1};
	
	\node (00) at (12,24) {0,0};

	\draw [ ->] (20) edge (30);
	\draw [densely dotted, ->] (20) edge (21);
	\draw [densely dotted, ->] (11) edge (21);
	\draw [densely dotted, ->] (11) edge (12);
	\draw [densely dotted, ->] (02) edge (12);
	\draw [->] (02) edge (03);

	\draw [ ->] (10) edge (20);
	\draw [very thick, ->] (10) edge (11);
	\draw [->] (01) edge (02);
	\draw [very thick, ->] (01) edge (11);

	\draw [->] (00) edge (10);
	\draw [->] (00) edge (01);
\end{tikzpicture} & \begin{tikzpicture} [scale=0.087, every node/.style={circle, draw, scale=0.65}]
	% \node (name) at (horizonal,vertical) {label}; --> (0,0) is the lower left corner
	\node (30) at (0,0) {3,0};
	\node [very thick] (21) at (8,0) {2,1};
	\node (12) at (16,0) {1,2};
	\node (03) at (24,0) {0,3};
	
	\node (20) at (4,8) {2,0};
	\node (11) at (12,8) {1,1};
	\node (02) at (20,8) {0,2};
	
	\node (10) at (8,16) {1,0};
	\node (01) at (16,16) {0,1};
	
	\node (00) at (12,24) {0,0};

	\draw [ ->] (20) edge (30);
	\draw [very thick, ->] (20) edge (21);
	\draw [very thick, ->] (11) edge (21);
	\draw [densely dotted, ->] (11) edge (12);
	\draw [densely dotted, ->] (02) edge (12);
	\draw [->] (02) edge (03);

	\draw [ ->] (10) edge (20);
	\draw [->] (10) edge (11);
	\draw [->] (01) edge (02);
	\draw [->] (01) edge (11);

	\draw [->] (00) edge (10);
	\draw [->] (00) edge (01);
\end{tikzpicture}\\
(a) & (b) & (c)\\
\multicolumn{3}{c}{\hybrid}\\
\end{tabular}
\caption{\label{fig:running}\revision{Running example of our algorithms for multilayer core decomposition over a core lattice of a 2-layer graph.
Nodes and links depicted by solid lines have been visited in previous steps of the algorithm, those in thick lines are visited during the current step, while the remaining in dotted lines have not been visited yet.}}
\end{figure}

\revision{
To have a concrete comparison of the characteristics of the proposed algorithms for multilayer core decomposition, we report in Figure~\ref{fig:running} a running example over a core lattice of a simple 2-layer graph.
All the algorithms start by visiting the root of the core lattice, which corresponds to the whole input multilayer graph (this preliminary step is left out from Figure~\ref{fig:running} since it is shared by all the methods).
\bfs\ visits the core lattice level by level, and exploits every containment relationship.
The execution pattern of \dfs\ is instead much different: it starts by finding those multilayer cores having a single component of the \corenessvec\ other than zero and, in a later step, visits the rest of the core lattice.
In both steps (a) and (b) \dfs\ visits cores following straight paths in the search space, i.e., from a core to a leaf.
As a result, not all the containment relationships are exploited.
For instance, the computation of the $(2,1)$-core exploits the containment from the $(2,0)$-core, but not from the $(1,1)$-core.
\hybrid\ is, as expected, a mix of the two other methods.
The first step is identical to \dfs.
At step (b), \hybrid\ starts to visit the remaining cores by a breadth-first-search strategy, while also exploiting the look-ahead mechanism.
In particular, the minimum degree vector of the $(1,1)$-core is found to be equal to $(1,2)$;
therefore, the $(1,2)$-core is not computed directly, but set equal to the $(1,1)$-core.
In the final step \hybrid\ visits the remaining core by going on with the breadth-first search.
}

We already discussed (in the respective paragraphs) the strengths and weaknesses of \bfs\ and \dfs: the best among the two is determined by the peculiarities of the specific input graph.
On the other hand, \hybrid\ profitably exploits the main nice features of both \bfs\ and \dfs, thus is expected to outperform both methods in most cases.
However, in those graphs where the number of non-distinct cores is limited, the overhead due to the look-ahead mechanism can make  the performance of \hybrid\ degrade.

In terms of space requirements, \bfs\ needs to keep in memory all those cores having at least a child in the queue, i.e., at most two levels of the lattice (the space taken by a multilayer core is $\mathcal{O}(|V|)$).
The same applies to \hybrid\, with the addition of the cores computed through single-layer core decomposition and look-ahead, until all their children have been processed.
\dfs\ instead requires to store all cores where the single-layer core-decomposition process should be started from, both in the current iteration and the next one.
%for the current and the next iterations.
Thus, we expect \dfs\ to take more space than \bfs\ and \hybrid,
as in practice the number of cores to be stored should be more than the cores belonging to two lattice levels.
%(plus an addition for \hybrid\).

%All the above conclusions find empirical confirmation in our experiments in Section~\ref{sec:experiments}.
%More details on the comparison among our algorithms are reported in our experiments in Section~\ref{sec:experiments}, where all the above conclusions find empirical confirmation.

\subsection{Experimental results}
\label{sec:experiments}
%!TEX root = tkdd_revision.tex
%%\enlargethispage*{2\baselineskip}
In this subsection we present experiments to $(i)$ compare the proposed algorithms in terms of runtime, memory consumption, and search-space exploration; $(ii)$ characterize the output core decompositions, also by comparing total number of cores and number of inner-most cores.

\spara{Datasets.}
We select publicly-available real-world multilayer networks, whose main characteristics are summarized in Table~\ref{tab:datasets}.
\textsf{Homo}\footnote{\url{http://deim.urv.cat/~manlio.dedomenico/data.php}\label{foot:deim}} and \textsf{SacchCere}$^{\ref{foot:deim}}$ are networks describing different types of genetic interactions between genes in Homo Sapiens and Saccharomyces Cerevisiae, respectively.
\textsf{ObamaInIsrael}$^{\ref{foot:deim}}$ represents different types of social interaction (e.g., \emph{re-tweeting}, \emph{mentioning}, and \emph{replying}) among Twitter users, focusing on Barack Obama's visit to Israel in 2013. Similarly, \textsf{Higgs}$^{\ref{foot:deim}}$ is built by tracking the spread of news about the discovery of the Higgs boson on Twitter, with the additional layer for the \emph{following} relation.
\textsf{Friendfeed}\footnote{\url{http://multilayer.it.uu.se/datasets.html}\label{foot:uu}} contains public interactions among users of Friendfeed collected over two months (e.g., \emph{commenting}, \emph{liking}, and \emph{following}).
\textsf{FriendfeedTwitter}$^{\ref{foot:uu}}$ is a multi-platform social network, where layers represent interactions within Friendfeed and Twitter between users registered to both platforms~\cite{DickisonMagnaniRossi2016}.
\textsf{Amazon}\footnote{\url{https://snap.stanford.edu/data/}} is a co-purchasing \emph{temporal network}, containing four snapshots between March and June 2003.
Finally, \textsf{DBLP}\footnote{\url{http://dblp.uni-trier.de/xml/}} is derived following the methodology in~\cite{bonchi2015chromatic}.
For each co-authorship relation (edge), the bag of words resulting from the titles of all papers co-authored by the two authors is collected.
Then \emph{LDA} topic modeling~\citep{blei2003latent} is applied to automatically identify a hundred topics.
Among these, ten topics that are recognized as the most relevant to the data-mining area have been hand-picked.
Every selected topic corresponds to a layer. An edge between two co-authors in a certain layer exists if the relation between those co-authors is labeled with the topic corresponding to that layer.
%An edge between two co-authors in a certain layer occurs if the topic corresponding to the layer is among the ten selected ones.

\begin{table}[t!]
%%%\vspace{-3mm}
\centering
\caption{\revision{Characteristics of the real-world datasets: number of vertices ($|V|$), number of overall edges ($|E|$), number of layers ($|L|$),
minimum, average, and maximum number of edges in a layer (min $|E_\ell|$, avg $|E_\ell|$, max $|E_\ell|$), and application domain.}}
%%\vspace{-3mm}
\label{tab:datasets}
\revision{
\begin{tabular}{c|ccccccccc}
\multicolumn{1}{c}{dataset} & $|V|$ & $|E|$ & $|L|$ & min $|E_\ell|$ & avg $|E_\ell|$ & max $|E_\ell|$ & domain\\
\hline
\textsf{Homo} & $18$k & $153$k & $7$ & $256$ & $21$k & $83$k & genetic\\
\textsf{SacchCere} & $6.5$k & $247$k & $7$ & $1.3$k & $35$k & $91$k & genetic\\
\textsf{DBLP} & $513$k & $1.0$M & $10$ &  $96$k & $101$k & $113$k & co-authorship\\
\textsf{ObamaInIsrael} & $2.2$M & $3.8$M & $3$ & $557$k & $1.2$M & $1.8$M & social\\
\textsf{Amazon} & $410$k & $8.1$M & $4$ & $899$k & $2.0$M & $2.4$M & co-purchasing\\
\textsf{FriendfeedTwitter} & $155$k & $13$M & $2$ & $5.2$M & $6.8$M & $8.3$M & social\\
\textsf{Higgs} & $456$k & $13$M & $4$ & $28$k & $3.4$M & $12$M & social\\
\textsf{Friendfeed} & $510$k & $18$M & $3$ & $226$k & $6.2$M & $18$M & social\\
\hline
\end{tabular}
%%%\vspace{-3mm}
}
\end{table}

\spara{Implementation.}
All methods are implemented in Python (v. 2.7.12) and compiled by Cython: all our code is available at \href{https://github.com/egalimberti/multilayer_core_decomposition}{github.com/egalimberti/multilayer\_core\_decomposition}.
All experiments are run on a machine equipped with Intel Xeon CPU at 2.1GHz and 128GB RAM except for Figure~\ref{fig:dblplayers}, whose results are obtained on Intel Xeon CPU at 2.7GHz with 128GB RAM.

\spara{Comparative evaluation.}
We compare the na\"{\i}ve baseline (for short \textsc{n}) and the three proposed methods \bfs\ (for short \textsc{bfs}), \dfs\ (\textsc{dfs}), \hybrid\ (\textsc{h})
in terms of running time, memory usage, and number of computed cores (as a measure of the explored search-space portion).
The results of this comparison are shown in Table~\ref{tab:results}.
As expected, \textsc{n} is the least efficient method: it is outperformed by our algorithms by 1--4 orders of magnitude.
Due to its excessive requirements, we could not run it in reasonable time (i.e., 30 days) on the \textsf{Friendfeed} dataset.
%Among the proposed methods, \textsc{h} achieves the best performance in most datasets, as expected.
%In some cases, however,  \textsc{h} is comparable to  \textsc{bfs}, thus confirming the fact that in datasets where the number of non-distinct cores is not so large the performance of the two methods gets closer.
%A similar reasoning holds between \textsc{bfs} and \textsc{dfs} (at least with a small/moderate number of the layers, see next): \textsc{bfs} is faster in most cases, but, due to the respective pros and cons discussed in Section~\ref{sec:algorithms}, it is not surprising that the two methods achieve comparable performance in a number of other cases.

\revision{
Among the proposed methods, \textsc{h} is recognized as the best method (in absolute or with performance comparable to the best one) in the first five (out of a total of eight) datasets.
In the remaining three datasets the best method is \textsc{dfs}.
This is mainly motivated by the fact that those three datasets have a relatively small number of layers, an aspect which  \textsc{dfs} takes particular advantage from (as also better testified by the experiment with varying the number of layers discussed below).
In some cases \textsc{h} is also comparable to  \textsc{bfs}, thus confirming the fact that in datasets where the number of non-distinct cores is not so large the performance of the two methods gets closer.
A similar reasoning holds between \textsc{bfs} and \textsc{dfs} (at least with a small/moderate number of the layers, see next): \textsc{bfs} is faster in most cases, but, due to the respective pros and cons discussed in Section~\ref{sec:algorithms}, it is not surprising that the two methods achieve comparable performance in a number of other cases.
}

\begin{table}[t!]
\centering
\caption{Comparative evaluation: proposed methods and baseline. Runtime differs from~\cite{galimberti2017core} since a different server was employed. For each dataset, best performances are bolded.\label{tab:results}}

\begin{tabular}{c|c|c|ccc}
\multicolumn{1}{c}{dataset} & \multicolumn{1}{c}{\#output cores} &  \multicolumn{1}{c}{method} &  \multicolumn{1}{c}{runtime (s)} &\multicolumn{1}{c}{memory (MB)}  &\multicolumn{1}{c}{\#computed cores} \\
\hline
\textsf{Homo} & $1\,845$ & \textsc{n} & $1\,145$ & $27$ & $12\,112$\\
 & & \textsc{bfs} & $13$ & $26$ & $3\,043$\\
 & & \textsc{dfs} & $27$ & $27$ & $6\,937$\\
 & & \textsc{h} & $\mathbf{12}$ & $\mathbf{25}$ & $\mathbf{2\,364}$\\
\hline
\textsf{SacchCere} & $74\,426$ & \textsc{n} & $24\,469$ & $55$ & $278\,402$\\
 & & \textsc{bfs} & $\mathbf{1\,134}$ & $\mathbf{34}$ & $89\,883$\\
 & & \textsc{dfs} & $2\,627$ & $57$ & $223\,643$\\
 & & \textsc{h} & $1\,146$ & $35$ & $\mathbf{83\,978}$\\
\hline
\textsf{DBLP} & $3\,346$ & \textsc{n} & $103\,231$ & $608$ & $34\,572$\\
 & & \textsc{bfs} & $68$ & $612$ & $6\,184$\\
 & & \textsc{dfs} & $282$ & $627$ & $38\,887$\\
 & & \textsc{h} & $\mathbf{29}$ & $\mathbf{521}$ & $\mathbf{5\,037}$\\
\hline
\textsf{Obama} & $2\,573$ &\textsc{n} & $37\,554$ & $1\,286$ & $3\,882$\\
\textsf{InIsrael} & & \textsc{bfs} & $226$ & $1\,299$ & $3\,313$\\
 & & \textsc{dfs} & $\mathbf{150}$ & $1\,384$ & $3\,596$\\
 & & \textsc{h} & $177$ & $\mathbf{1\,147}$ & $\mathbf{2\,716}$\\
\hline
\textsf{Amazon} & $1\,164$ & \textsc{n} & $11\,990$ & $\mathbf{425}$ & $1\,823$\\
 & & \textsc{bfs} & $3\,981$ & $534$ & $1\,354$\\
 & & \textsc{dfs} & $5\,278$ & $619$ & $2\,459$\\
 & & \textsc{h} & $\mathbf{3\,913}$ & $536$ & $\mathbf{1\,334}$\\
\hline
\textsf{Friendfeed} & $76\,194$ & \textsc{n} & $409\,489$ & $220$ & $80\,954$\\
\textsf{Twitter} & & \textsc{bfs} & $61\,113$ & $\mathbf{215}$ & $80\,664$\\
 & & \textsc{dfs} & $\mathbf{1\,973}$ & $267$ & $80\,745$\\
 & & \textsc{h} & $59\,520$ & $268$ & $\mathbf{76\,419}$\\
\hline
\textsf{Higgs} & $8\,077$ & \textsc{n} & $163\,398$ & $474$ & $22\,478$\\
 & & \textsc{bfs} & $2\,480$ & $\mathbf{465}$ & $12\,773$\\
 & & \textsc{dfs} & $\mathbf{640}$ & $490$ & $14\,119$\\
 & & \textsc{h} & $2\,169$ & $493$ & $\mathbf{9\,389}$\\
\hline
\textsf{Friendfeed} & $365\,666$ & \textsc{bfs} & $58\,278$ & $\mathbf{465}$ & $546\,631$\\
 & &\textsc{dfs} & $\mathbf{13\,356}$ & $591$ & $568\,107$\\
 & & \textsc{h} & $47\,179$ & $490$ & $\mathbf{389\,323}$\\
\hline
\end{tabular}

%%\vspace{3mm}
\end{table}

To test the behavior with varying the number of layers, Figure~\ref{fig:dblplayers} shows the running times of the proposed methods on different versions of the \textsf{DBLP} dataset, obtained by selecting a variable number of layers, from $2$ to $10$.
While the performance of the three methods is comparable up to six layers, beyond this threshold the execution time of \textsc{dfs} grows much faster than \textsc{bfs} and \textsc{h}.
This attests that the pruning rules of \textsc{bfs} and \textsc{h} are more effective as the layers increase.
To summarize, \textsc{dfs} is expected to have runtime comparable to (or better than) \textsc{bfs} and \textsc{h} when the number of layers is small, while \textsc{h} is faster than \textsc{bfs} when the number of non-distinct cores is large.

\begin{figure}[t!]
\centering
\includegraphics[width=0.6\columnwidth]{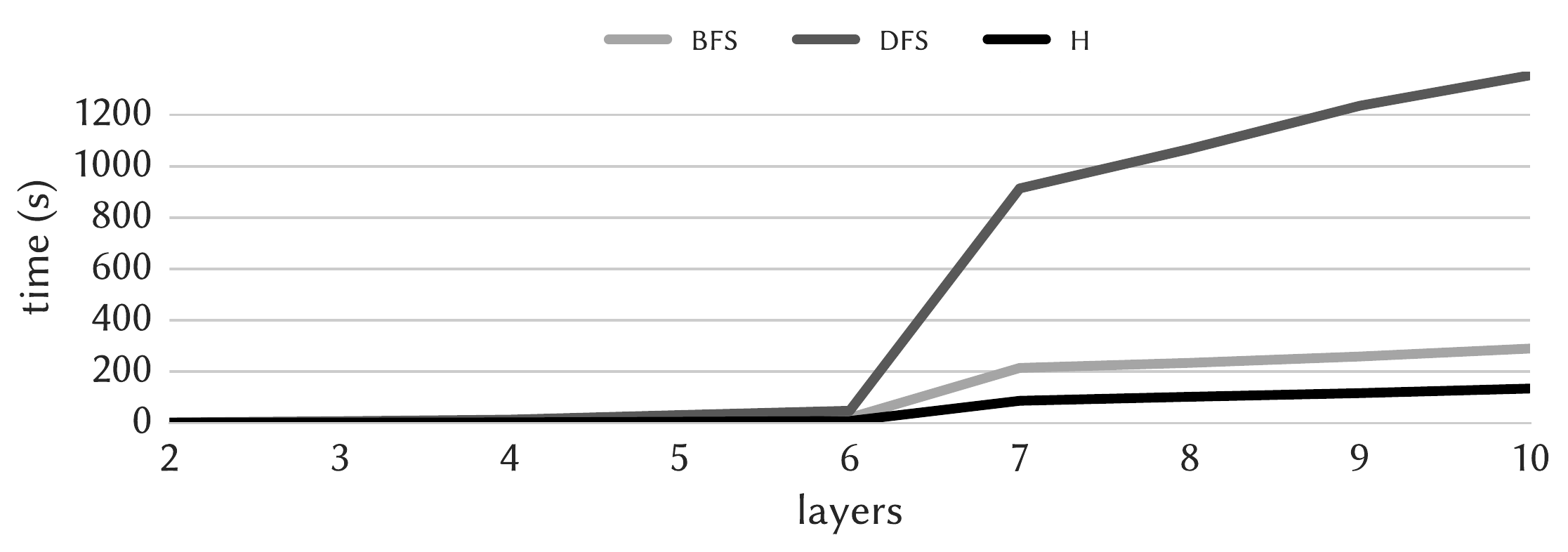}
\caption{\label{fig:dblplayers}   Runtime of the proposed methods with varying the number of layers (\textsf{DBLP} dataset).}
\end{figure}

The number of computed cores is always larger than the output cores as all methods might compute empty cores or, in the case of  \textsc{dfs}, the same core multiple times.
Table~\ref{tab:results} shows that \textsc{dfs} computes more cores than \textsc{bfs} and \textsc{h}, which conforms to its design principles.
%Clearly, it results to be comparable to the other methods when the number of computed cores are comparable.
%This is due to the additicost of computing the intersection of the fathers (Corollary~\ref{cor:coreintersection}) for \textsc{bfs} and \textsc{h}.

Finally, all methods turn out to be memory-efficient, taking no more than $1.5$GB of memory.
%\enlargethispage*{\baselineskip}

\spara{Core-decomposition characterization.}
Figure~\ref{fig:levcores} reports the distribution of number of cores, core size, and average-degree density (i.e., number of edges divided by number of vertices) of the subgraph corresponding to a core.
Distributions are shown by level of the lattice\footnote{Recall that the lattice level has been defined in Section 3.1: level $i$ contains all cores whose coreness-vector components sum to $i$.} for the \textsf{SacchCere} and \textsf{Friendfeed} datasets.
Although the two datasets have very different scales, the distributions exhibit similar trends.
Being limited by the number of layers, the number of cores in the first levels of the lattice is very small, but then it exponentially grows until reaching its maximum within the first $25-30\%$ visited levels.
The average size of the cores is close to the number of vertices in the first lattice level, when cores' degree conditions are not very strict.
Then it decreases as the number of cores gets larger, with a maximum reached when very small cores stop ``propagating'' in the lower lattice levels.
Finally, the average (average-degree) density tends to increase for higher lattice level.
However, there are a couple of exceptions: it decreases ($i$) in the first few levels of \textsf{SacchCere}'s lattice, and ($ii$) in the last levels of both \textsf{SacchCere} and \textsf{Friendfeed}, where the core size
starts getting smaller, thus implying small average-degree values.

\begin{figure}[t!]
\begin{tabular}{ccc}
\includegraphics[width=0.3\columnwidth]{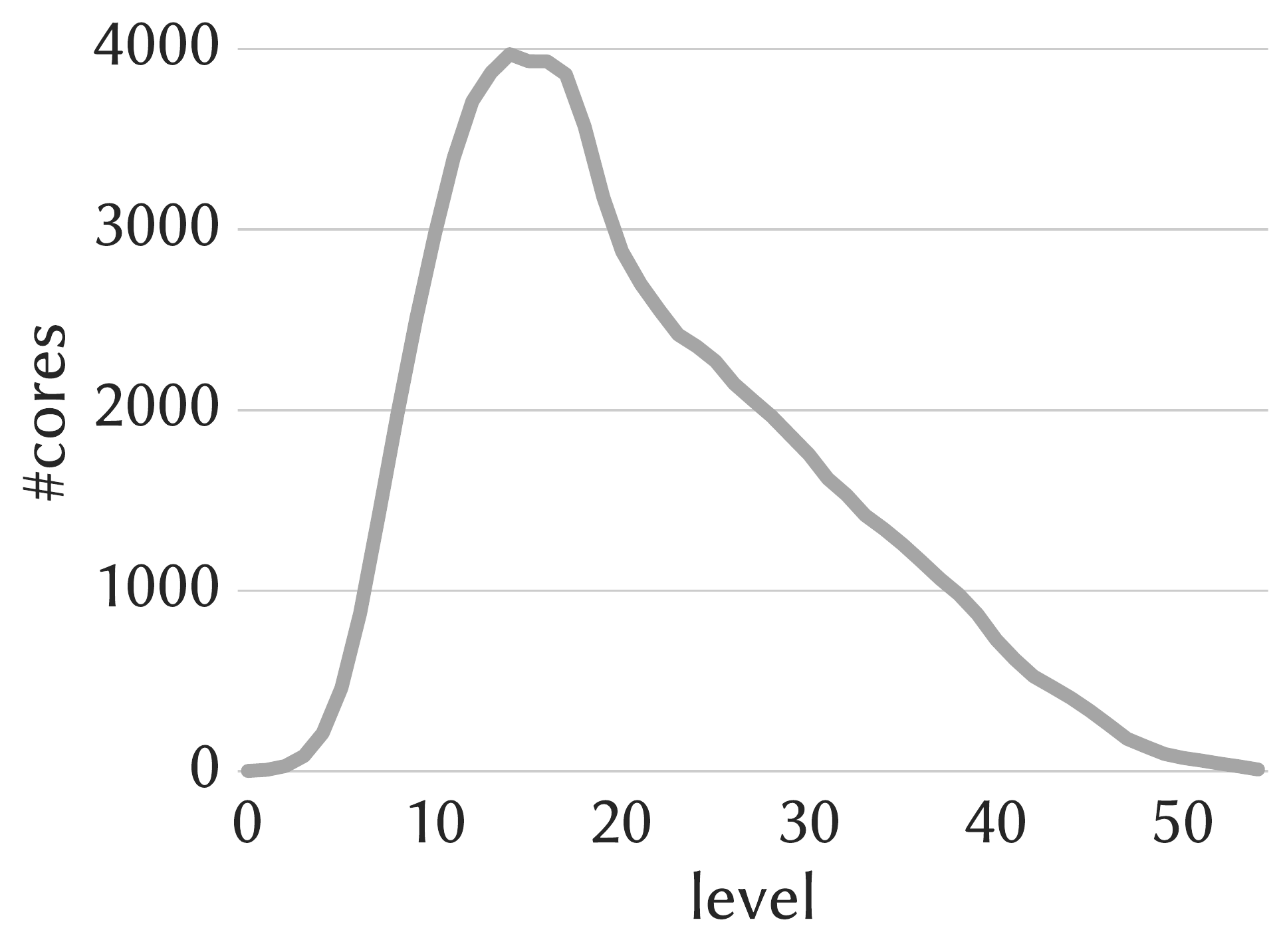} & \includegraphics[width=0.3\columnwidth]{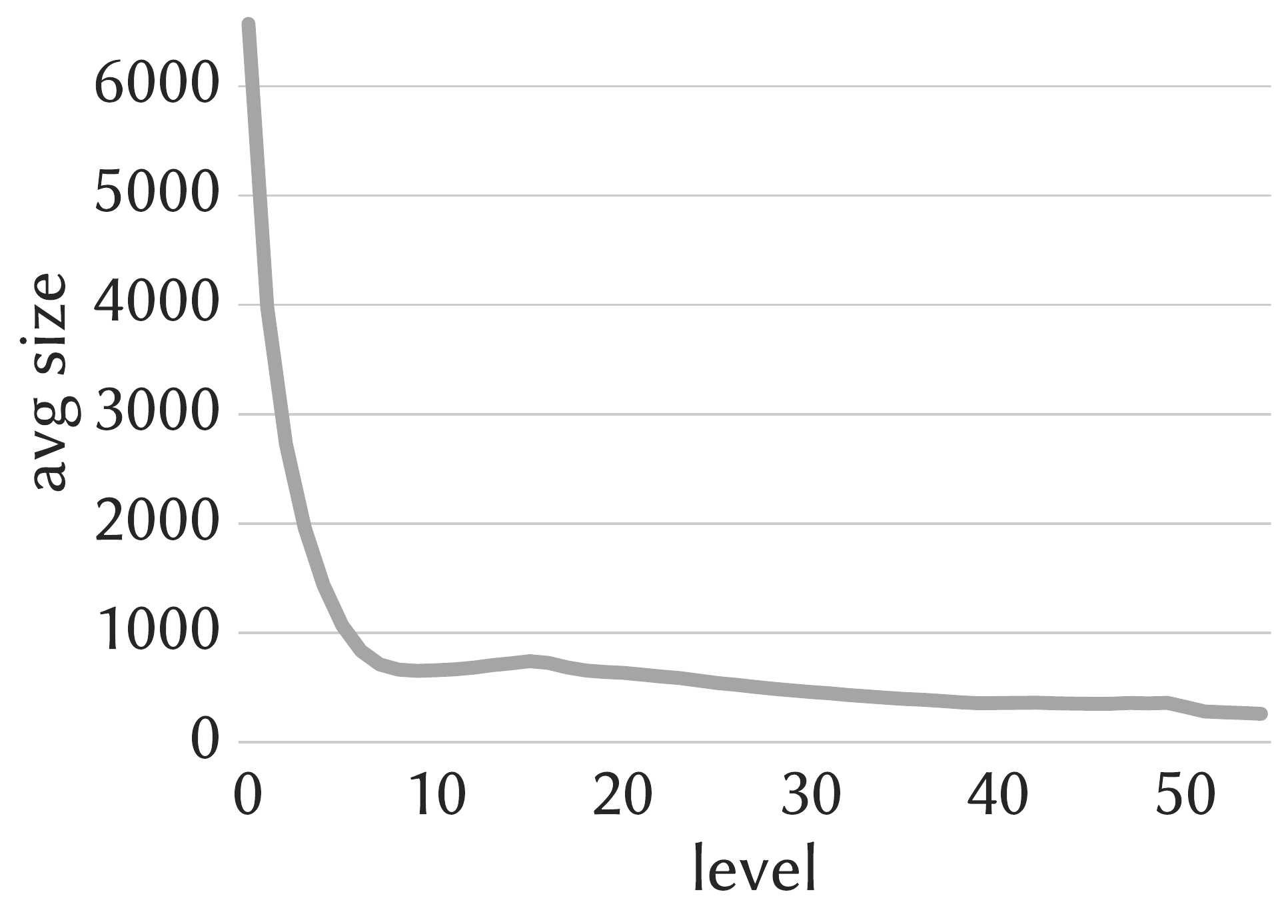}
& \includegraphics[width=0.3\columnwidth]{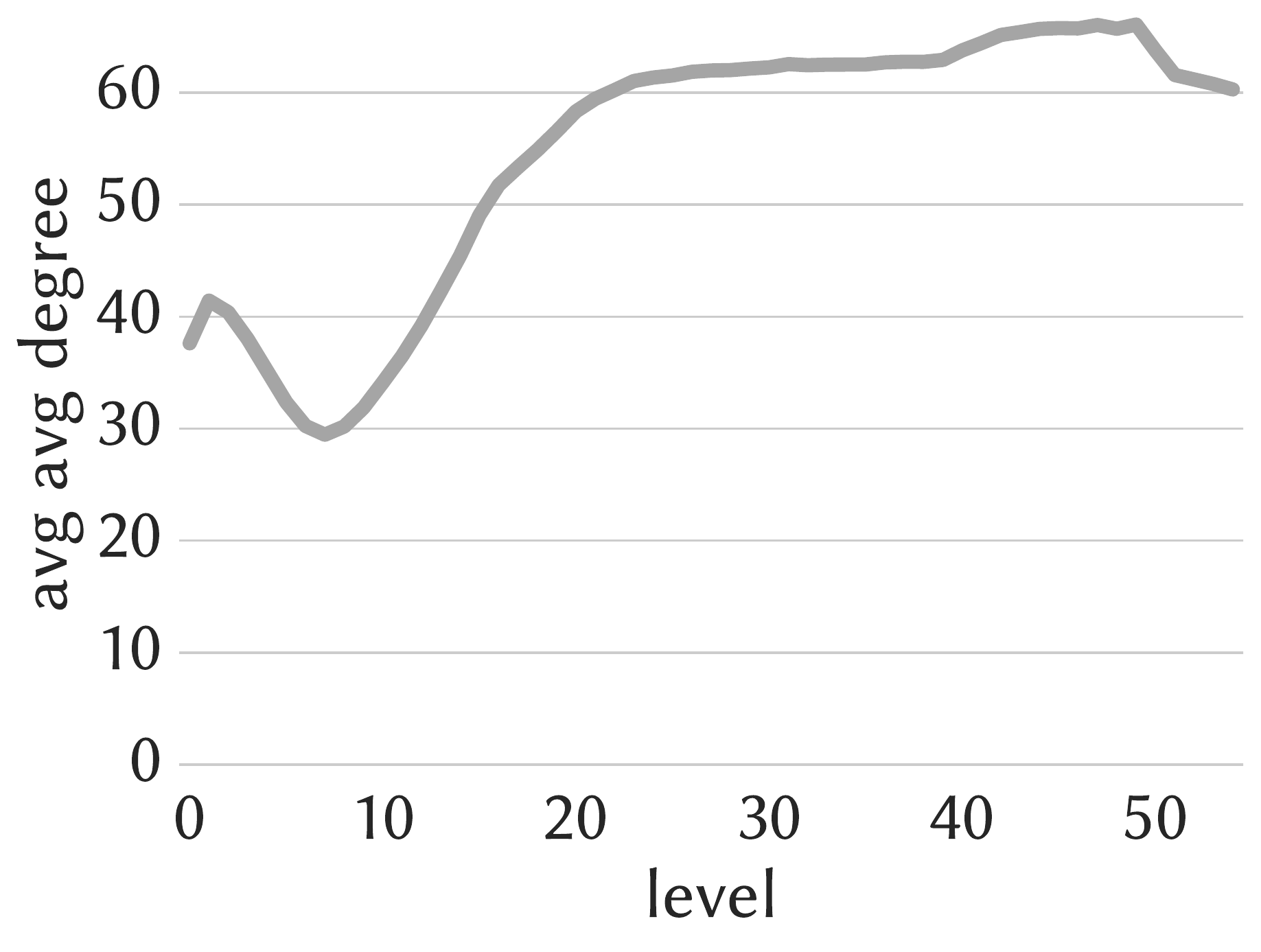}\\
\multicolumn{3}{c}{\textsf{SacchCere}}\\
\includegraphics[width=0.3\columnwidth]{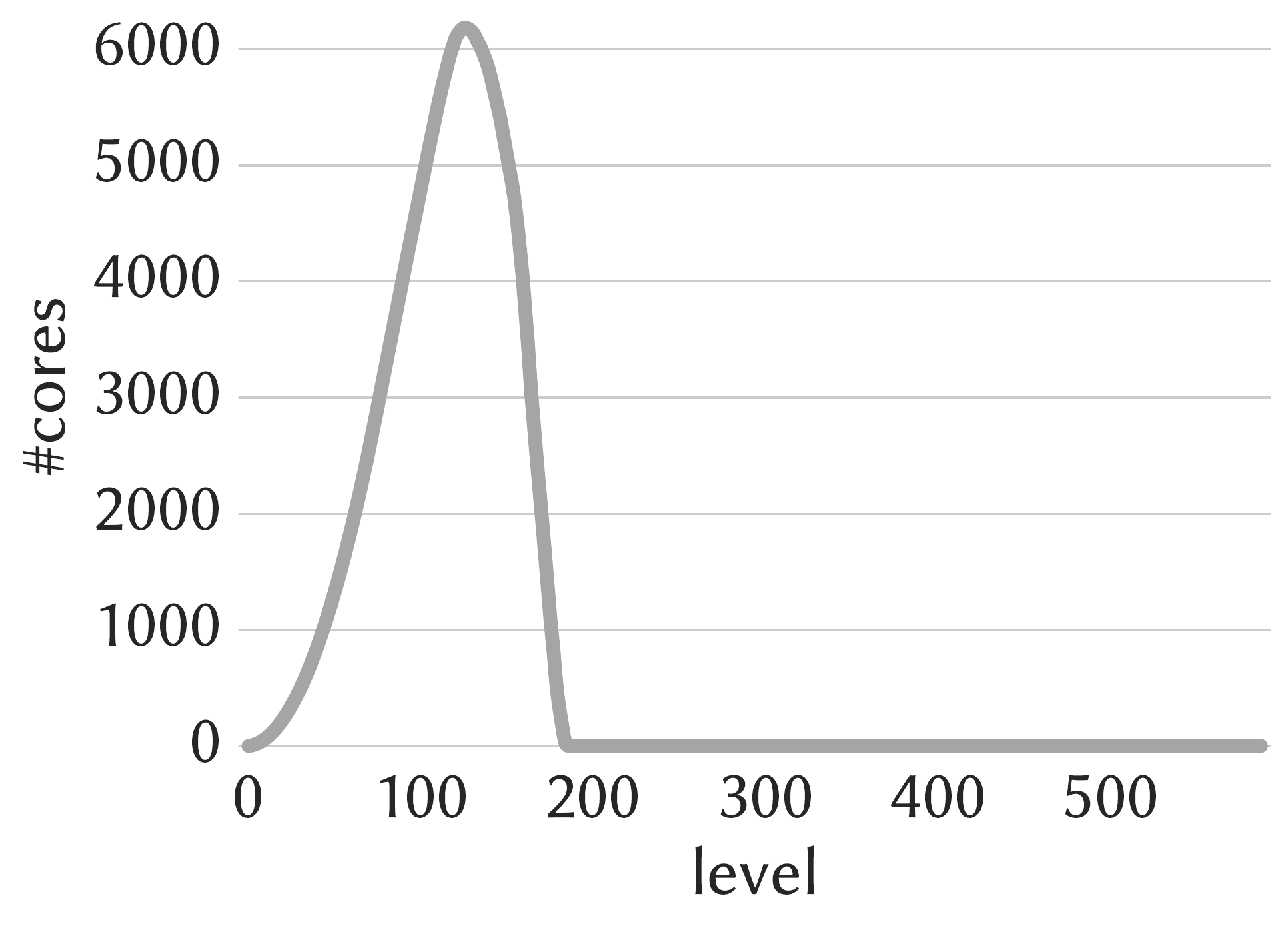} & \includegraphics[width=0.3\columnwidth]{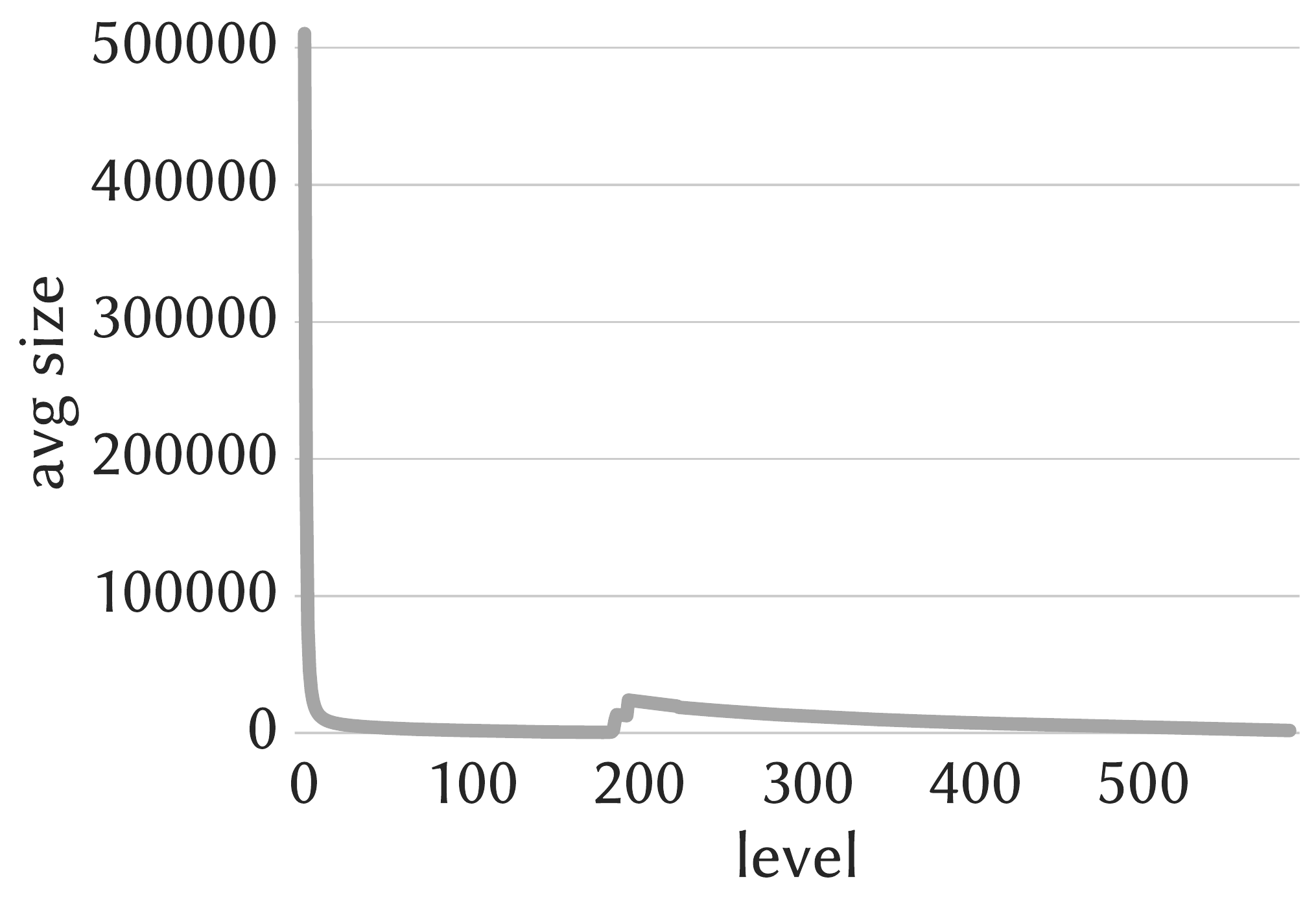}& \includegraphics[width=0.3\columnwidth]{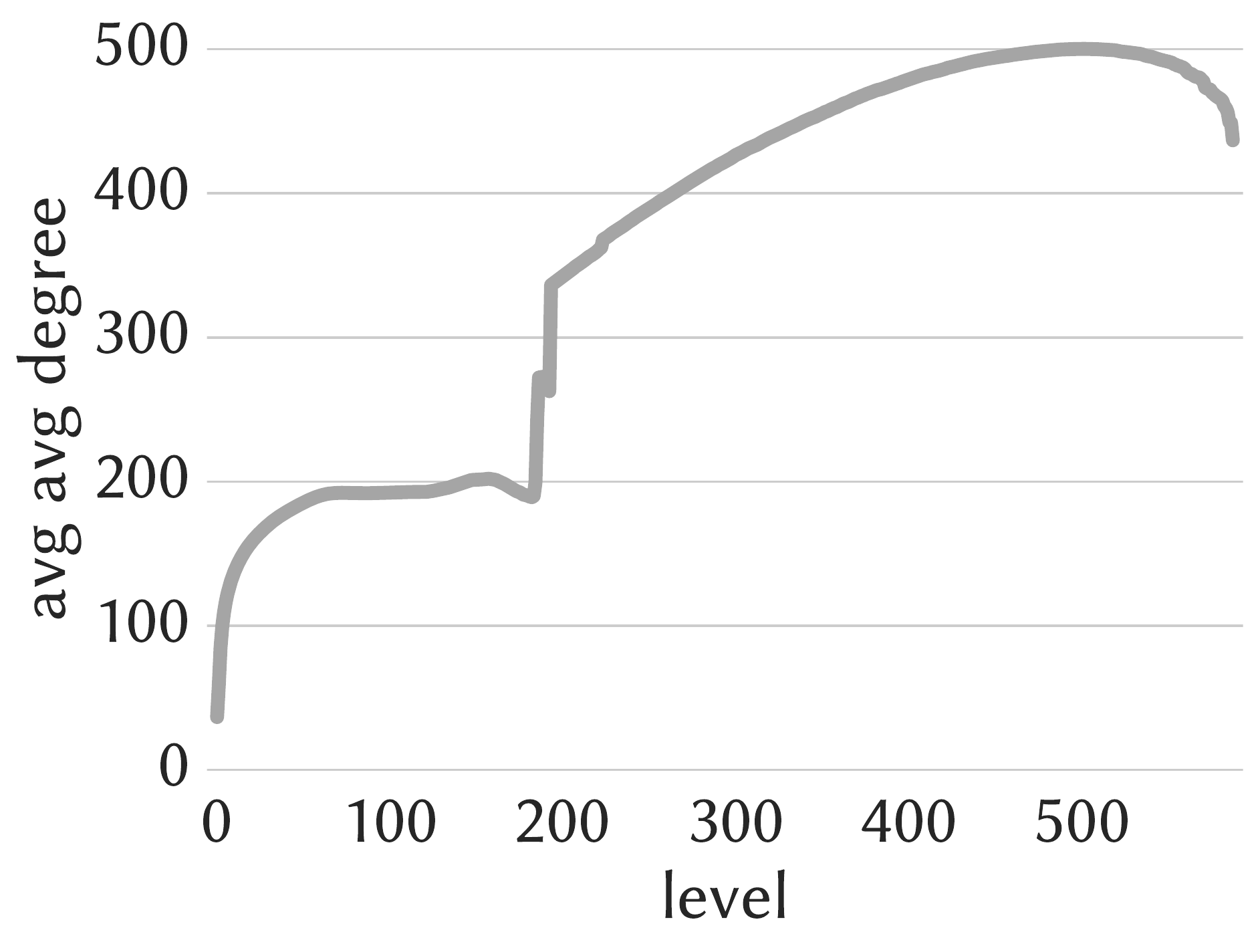}\\
 \multicolumn{3}{c}{\textsf{Friendfeed}}\\
\end{tabular}
\caption{\label{fig:levcores}  Distribution of number of cores (left), average core size (center), and average average-degree density of a core (right) to the core-lattice level, for datasets \textsf{SacchCere} (top) and \textsf{Friendfeed} (bottom).}
\end{figure}

\begin{figure}[t!]
\includegraphics[width=0.6\columnwidth]{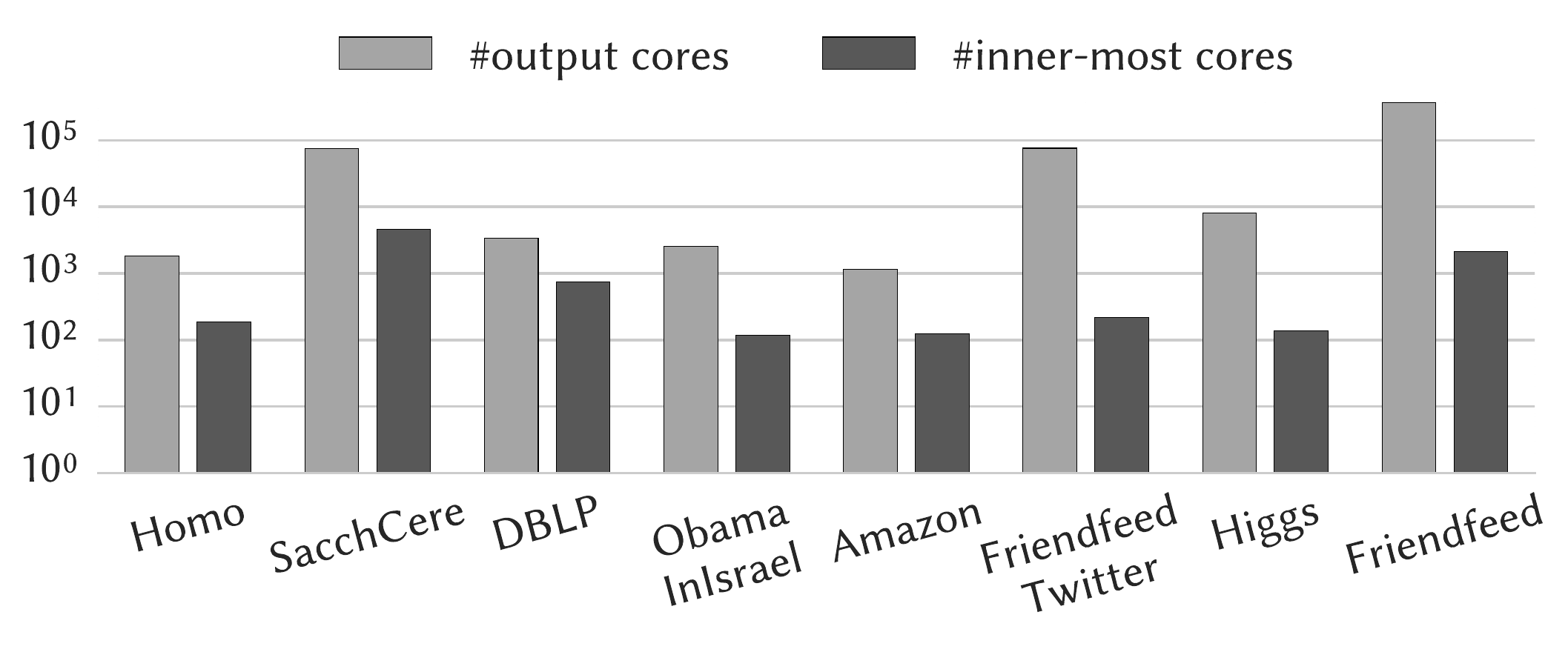}
\caption{\label{fig:numcores}   Number of output cores (total and inner-most).}
\vspace{3mm}
\end{figure}

In Figure~\ref{fig:numcores} we show the comparison between the number of all cores and inner-most cores for all the datasets.
The number of cores differs quite a lot from dataset to dataset, depending on dataset size, number of layers, and density. The fraction of inner-most cores exhibits a non-decreasing trend as the layers increase, ranging from $0.3\%$ of the total number of output cores (\textsf{FriendfeedTwitter}) to $22\%$ (\textsf{DBLP}).

\medskip

Given that the inner-most cores are \emph{per-se} interesting and typically one or more orders of magnitude fewer in number than the total cores, it would be desirable to have a method that effectively exploits the maximality property and extracts the inner-most ones directly, without computing a complete decomposition. This is presented in the next section.

\section{Algorithms for Inner-most multilayer cores}
\label{sec:innermost}
%!TEX root = tkdd_revision.tex

\revision{
In this section we focus on the problem of finding the non-empty inner-most multilayer cores of a multilayer graph (Problem~\ref{prob:innermost}).
Specifically, the main goal here is to devise a method that is  more efficient than a na\"ive one that computes the whole multilayer core decomposition and then a-posteriori filters non-inner-most cores out.
}
%After presenting the algorithm and its properties, we experimentally prove its goodness by comparing its execution time with respect to the algorithms presented in Section~\ref{sec:algorithms} with the addition of a post-processing step that filters the inner-most multilayer cores only.
%In addition, we also compare the characteristics of the inner-most multilayer cores against all multilayer cores.
%
%\subsection{Algorithm}
%
To this end, we devise a recursive algorithm, which is termed \rmax\  and whose outline is shown as Algorithm~\ref{alg:im} (and Algorithm~\ref{alg:rightim}).
We provide the details of the algorithm next.
In the remainder of this section we assume the layer set $L$ of the input multilayer graph $G=(V,E,L)$ to be an ordered list $[\ell_1, \ldots, \ell_{|L|}]$.
The specific ordering we adopt in this work is by non-decreasing average-degree density, as, among the various orderings tested, this is the one that provides the best experimental results.

The proposed  \rmax\ algorithm is based on the notion of \emph{$\ell_r$-right-inner-most multilayer cores} of a core $C_{\vec{k}}$, i.e., all those cores having \corenessvec\ $\vec{k'}$ equal to $\vec{k}$ up to layer $\ell_{r-1}$, and for which the inner-most condition holds for layers from $\ell_r$ to $\ell_{|L|}$.

\begin{mydefinition}[$\ell_r$-right-inner-most multilayer cores]
Given a multilayer graph $G=(V,E,L)$ and a layer $\ell_r \in L$, the \emph{$\ell_r$-right-inner-most multilayer cores} of a core $C_{\vec{k}}$ of $G$, where $\vec{k} = [k_{\ell}]_{\ell \in L}$, correspond to all the cores of $G$ with \corenessvec\ $\vec{k'} = [k'_{\ell}]_{\ell \in L}$ such that $\forall \ell \in [\ell_1,\ell_r): k'_{\ell} = k_{\ell}$, and there does not exist any other core with \corenessvec\  $\vec{k}'' = [k''_{\ell}]_{\ell \in L}$ such that
$\forall \ell \in [\ell_1,\ell_r): k''_{\ell} = k_{\ell}$,
$\forall \ell \in [\ell_r, \ell_{|L|}]: k''_{\ell} \geq k'_{\ell}$, and $\exists \hat{\ell} \in [\ell_r, \ell_{|L|}]: k''_{\hat{\ell}} > k'_{\hat{\ell}}$.
\end{mydefinition}

\revision{
Let $C_{[0]_{|L|}}$ be the root of the core lattice.
$C_{[0]_{|L|}}$ has a \corenessvec\ composed of zero components.
Therefore, according to the above definition, it is easy to observe that the $\ell_1$-right-inner-most multilayer cores of $C_{[0]_{|L|}}$ correspond to the desired ultimate output, i.e., to all inner-most multilayer cores of the input multilayer graph.
}

\begin{fact} \label{fact:imzero}
Given a multilayer graph $G=(V,E,L)$, let $\imcoresset_{\ell_1}$ be the set of all $\ell_1$-right-inner-most multilayer cores of core $C_{[0]_{|L|}}$.
$\imcoresset_{\ell_1}$ corresponds to all inner-most multilayer cores of $G$.
\end{fact}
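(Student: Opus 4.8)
The plan is to prove the fact by a direct instantiation of the definition of $\ell_r$-right-inner-most multilayer cores at the root $C_{[0]_{|L|}}$ of the core lattice, followed by a syntactic comparison with Definition~\ref{def:innermostcores}. Concretely, I would set $\ell_r = \ell_1$ and $C_{\vec{k}} = C_{[0]_{|L|}}$, so that $\vec{k} = [0]_{|L|}$ as noted above. With these choices the ``prefix'' requirement $\forall \ell \in [\ell_1, \ell_r) : k'_{\ell} = k_{\ell}$ ranges over the empty interval $[\ell_1, \ell_1)$ and is vacuous (hence every core of $G$ is a candidate), and the analogous prefix requirement on the putative dominating vector $\vec{k}''$ is vacuous as well. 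The two remaining requirements now range over $[\ell_r, \ell_{|L|}] = L$, so $\imcoresset_{\ell_1}$ is exactly the set of cores of $G$ carrying a \corenessvec\ $\vec{k}'$ for which there is no \emph{other} core carrying a \corenessvec\ $\vec{k}''$ with $k''_{\ell} \ge k'_{\ell}$ for all $\ell \in L$ and $k''_{\hat{\ell}} > k'_{\hat{\ell}}$ for some $\hat{\ell} \in L$. This is verbatim the non-domination condition in Definition~\ref{def:innermostcores}.

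The only remaining step is to reconcile the two phrasings: Definition~\ref{def:innermostcores} states the non-domination condition for the \emph{maximal} \corenessvec\ of a core, whereas the instantiation above permits an arbitrary \corenessvec. I would bridge this via Theorem~\ref{th:maximalcorenessvecuniqueness}, together with the observation underlying its proof that the componentwise maximum of two \corenessvec{s} of a core is again a \corenessvec\ of that core, so that a core's maximal \corenessvec\ dominates \emph{all} of its \corenessvec{s}. For the inclusion $\imcoresset_{\ell_1} \subseteq \{\text{inner-most cores}\}$: if $C \in \imcoresset_{\ell_1}$ is witnessed by some \corenessvec\ $\vec{k}'$, then its maximal \corenessvec\ $\vec{k}^{\star} \ge \vec{k}'$ witnesses it too, since any core strictly dominating $\vec{k}^{\star}$ would a fortiori strictly dominate $\vec{k}'$; hence $C$ meets Definition~\ref{def:innermostcores}. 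The reverse inclusion is immediate: the maximal \corenessvec\ of an inner-most core is itself a \corenessvec\ satisfying the non-domination requirement, so that core lies in $\imcoresset_{\ell_1}$.

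I do not expect a genuine obstacle here — the argument is pure unfolding of definitions. The one point worth stating carefully is the monotonicity used in the forward inclusion, namely that replacing an arbitrary witnessing \corenessvec\ $\vec{k}'$ by the maximal \corenessvec\ $\vec{k}^{\star} \ge \vec{k}'$ cannot create a new dominating core, because enlarging the components only makes the domination condition harder to satisfy (and no \corenessvec\ of $C$ itself can strictly dominate $\vec{k}^{\star}$, by maximality). Everything else is bookkeeping of the interval endpoints $[\ell_1,\ell_1)$ and $[\ell_1,\ell_{|L|}]$.
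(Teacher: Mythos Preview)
Your proposal is correct and follows exactly the approach the paper intends: the paper does not give a formal proof of Fact~\ref{fact:imzero} but merely remarks that ``it is easy to observe'' from the definition, i.e., the argument is just the instantiation $\ell_r=\ell_1$, $\vec{k}=[0]_{|L|}$ you carry out. Your extra care in reconciling ``some \corenessvec'' with ``maximal \corenessvec'' via Theorem~\ref{th:maximalcorenessvecuniqueness} is a detail the paper glosses over, and your handling of it is sound.
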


\revision{
The proposed \rmax\ algorithm recursively computes $\ell_r$-right-inner-most multilayer cores, starting from the root of the core lattice (Algorithm~\ref{alg:im}).
The goal is to exploit Fact~\ref{fact:imzero} and ultimately have the $\ell_1$-right-inner-most multilayer cores of core $C_{[0]_{|L|}}$ computed.
}
%The first step of the algorithm is the sorting of the layers of $L$ in non-decreasing order of average-degree density.
%This expedient has the objective of minimizing the number of recursive calls, which entails shorter execution time.
%Despite being an heuristic, it permits to obtain notable improvements in practice.
The algorithm makes use of a data structure $\mathcal{M}$ which consists of a sequence of nested maps, one for each layer but the last one (i.e., $\ell_{|L|}$).
\revision{
For every layer $\ell_r$  that has been so far processed by the recursive procedure, $\mathcal{M}$ keeps track of the minimum-degree that a core should have in layer $\ell_r$ to be recognized as an ineer-most one.
}
Specifically, given a \corenessvec\ $\vec{k}$ and a layer $\ell_r$, the instruction $\mathcal{M}(\vec{k},\ell_r)$ iteratively accesses the nested maps using  the elements of $\vec{k}$ up to layer $\ell_r$ as keys.
As an example, consider a  \corenessvec\ $\vec{k} = [k_{\ell}]_{\ell \in L}$, with $|L| = 3$.
$\mathcal{M}(\vec{k},\ell_{|L|-1})$ first queries the outer-most map with key $k_{\ell_1}$, and obtains a further map.
Then, this second map is queried with key $k_{\ell_2}$, to finally get the ultimate desired numerical value.
\revision{
Note that, if  $\ell_r < \ell_{|L|-1}$, then $\mathcal{M}(\vec{k},\ell_r)$ returns a further map.
Conversely, if  $\ell_r = \ell_{|L|-1}$, then $\mathcal{M}(\vec{k},\ell_r)$ returns a numerical value.
}
If $\vec{k}$ does not correspond to a sequence of valid keys for $\mathcal{M}$, we assume that $0$ is returned as a default value.
$\mathcal{M}$ is initialized as empty, and populated during the various recursive iterations.

\begin{algorithm}[t!]
	\caption{\rmax} \label{alg:im}
	\begin{algorithmic}[1]
		
		\REQUIRE A multilayer graph $G = (V,E,L)$.
		\ENSURE The set \imcoresset\ of all inner-most multilayer cores of $G$.

        \STATE sort $L$ by non-decreasing average-degree density
        \STATE $\mathcal{M} \leftarrow \emptyset$
        \STATE $\imcoresset \leftarrow$ \rmaxsub$(G,V,[0]_{|L|},\ell_1,\mathcal{M})$
	\end{algorithmic}
\end{algorithm}
\begin{algorithm}[t!]
	\caption{\rmaxsub} \label{alg:rightim}
	\begin{algorithmic}[1]
		
		\REQUIRE A multilayer graph $G = (V,E,L)$, a set $S \subseteq V$of vertices, a \corenessvec\ $\vec{k} = [k_{\ell}]_{\ell \in L}$, a layer $\ell_r \in L$, and a data structure $\mathcal{M}$.
		\ENSURE The set \rightimcoresset\ of all right-inner-most multilayer cores of $C_\vec{k}$ given $\ell_r$.

        \STATE $\rightimcoresset \leftarrow \emptyset$

        \IF {$\ell_r \neq \ell_{|L|}$} \label{line:rightim:mainif}

            \STATE $\mathbf{Q} \leftarrow \{\vec{k}' \mid C_{\vec{k}'} \in \kcorespathalg(G,S,\vec{k},\ell_r)\} \cup \{\vec{k}\}$ \label{line:rightim:cd}
            \STATE $\coresset \leftarrow \kcorespathalg(G,S,\vec{k},\ell_r) \cup \{S\}$ \label{line:rightim:nodesets}

            \FORALL {$\vec{k'} \in \mathbf{Q}$ in decreasing order of $k'_{\ell_r}$} \label{line:rightim:rec}
                \STATE $\mathcal{M}(\vec{k'}, \ell_r) \leftarrow \emptyset$
                \STATE $\rightimcoresset \leftarrow \rightimcoresset \ \cup$ \rmaxsub$(G,C_{\vec{k'}},\vec{k'},\ell_{r+1}, \mathcal{M})$
            \ENDFOR \label{line:rightim:endfirstblock}

        \ELSE
            \STATE $k_\mathcal{M} \leftarrow 0$ \label{line:rightim:lbstart}
            \FORALL {$\ell \in [\ell_1,\ell_{|L|})$}
                \STATE $\vec{k}^\ell = [k_{\ell_1},\dots,k_{\ell}+1,\dots,k_{\ell_{|L|}}]$
                \STATE $k_\mathcal{M} \leftarrow \max\{k_\mathcal{M},\mathcal{M}(\vec{k}^\ell,\ell_{|L|-1})\}$
            \ENDFOR

            \STATE $\vec{k'} \leftarrow [k_{\ell_1},\dots,k_{\ell_{|L|-1}},k_\mathcal{M}]$ \label{line:rightim:lbend}
            \STATE $\vec{k}^I \leftarrow $ \rmaxcore$(G,S,\vec{k'},\ell_{|L|})$ \label{line:rightim:im}
            \IF {$\vec{k}^I \neq \textsc{null}$}
                \STATE $\rightimcoresset \leftarrow \rightimcoresset \cup \vec{k}^I$
                \STATE $\mathcal{M}(\vec{k}^I,\ell_{|L|-1}) \leftarrow k^I_{\ell_{|L|}}+1$
            \ELSE
                \STATE $\mathcal{M}(\vec{k'},\ell_{|L|-1}) \leftarrow k'_{\ell_{|L|}}$
            \ENDIF \label{line:rightim:endsecondblock}
        \ENDIF
	\end{algorithmic}
\end{algorithm}

\begin{figure}[t]
\centering
\tikzstyle{every node}=[draw=black,fill=none,right,align=center]
\tikzstyle{selected}=[draw=red,fill=red!30]
\tikzstyle{optional}=[dashed,fill=gray!50]
\tikzstyle{ann} = [draw=none,fill=none,left]

\tiny
\begin{tikzpicture}[%
grow via three points={one child at (0.5,-0.975) and two children at (0.5,-0.975) and (0.5,-1.95)},
edge from parent path={(\tikzparentnode.south) |- (\tikzchildnode.west)}]
\node[label={left:1}] {\underline{\textbf{\rmaxsub}$\mathbf{(V,(0,0,0),\boldsymbol\ell_1)}$}\\
$\kcorespathalg(V,(0,0,0),\ell_1)$\\
$\mathbf{Q} \leftarrow \{(2,0,0),(1,0,0),(0,0,0)\}$}
    child {node[label={[label distance=0.5cm]left:1.1}] {\underline{\textbf{\rmaxsub}$\mathbf{(C_{(2,0,0)},(2,0,0),\boldsymbol\ell_2)}$}\\
    $\kcorespathalg(C_{(2,0,0)},(2,0,0),\ell_2)$\\
    $\mathbf{Q} \leftarrow \{(2,0,0)\}$}
        child {node[label={[label distance=0.5cm]left:1.1.1}] {\underline{\textbf{\rmaxsub}$\mathbf{(C_{(2,0,0)},(2,0,0),\boldsymbol\ell_3)}$}\\
        \rmaxcore$(C_{(2,0,0)},(2,0,0),\ell_3) \rightarrow (2,0,3)$}}
    }
    child [missing] {}
    child {node[label={[label distance=0.5cm]left:1.2}] {\underline{\textbf{\rmaxsub}$\mathbf{(C_{(1,0,0)},(1,0,0),\boldsymbol\ell_2)}$}\\
    $\kcorespathalg(C_{(1,0,0)},(1,0,0),\ell_2)$\\
    $\mathbf{Q} \leftarrow \{(1,2,0),(1,1,0),(1,0,0)\}$}
        child {node[label={[label distance=0.5cm]left:1.2.1}] {\underline{\textbf{\rmaxsub}$\mathbf{(C_{(1,2,0)},(1,2,0),\boldsymbol\ell_3)}$}\\
        \rmaxcore$(C_{(1,2,0)},(1,2,0),\ell_3) \rightarrow (1,2,4)$}}
        child {node[label={[label distance=0.5cm]left:1.2.2}] {\underline{\textbf{\rmaxsub}$\mathbf{(C_{(1,1,0)},(1,1,0),\boldsymbol\ell_3)}$}\\
        \rmaxcore$(C_{(1,1,0)},(1,1,5),\ell_3) \rightarrow \textsc{null}$}}
        child {node[label={[label distance=0.5cm]left:1.2.3}] {\underline{\textbf{\rmaxsub}$\mathbf{(C_{(1,0,0)},(1,0,0),\boldsymbol\ell_3)}$}\\
        \rmaxcore$(C_{(1,0,0)},(1,0,5),\ell_3) \rightarrow (1,0,7)$}}
    }
    child [missing] {}
    child [missing] {}
    child [missing] {}
    child {node[label={[label distance=0.5cm]left:1.3}] {\underline{\textbf{\rmaxsub}$\mathbf{(C_{(0,0,0)},(0,0,0),\boldsymbol\ell_2)}$}\\
    $\kcorespathalg(C_{(0,0,0)},(0,0,0),\ell_2)$\\
    $\mathbf{Q} \leftarrow \{(0,3,0),(0,2,0),(0,1,0),(0,0,0)\}$}
        child {node[label={[label distance=0.5cm]left:1.3.1}] {\underline{\textbf{\rmaxsub}$\mathbf{(C_{(0,3,0)},(0,3,0),\boldsymbol\ell_3)}$}\\
        \rmaxcore$(C_{(0,3,0)},(0,3,0),\ell_3) \rightarrow (0,3,1)$}}
        child {node[label={[label distance=0.5cm]left:1.3.2}] {\underline{\textbf{\rmaxsub}$\mathbf{(C_{(0,2,0)},(0,2,0),\boldsymbol\ell_3)}$}\\
        \rmaxcore$(C_{(0,2,0)},(0,2,5),\ell_3) \rightarrow \textsc{null}$}}
        child {node[label={[label distance=0.5cm]left:1.3.3}] {\underline{\textbf{\rmaxsub}$\mathbf{(C_{(0,1,0)},(0,1,0),\boldsymbol\ell_3)}$}\\
        \rmaxcore$(C_{(0,1,0)},(0,1,5),\ell_3) \rightarrow (0,1,5)$}}
        child {node[label={[label distance=0.5cm]left:1.3.4}] {\underline{\textbf{\rmaxsub}$\mathbf{(C_{(0,0,0)},(0,0,0),\boldsymbol\ell_3)}$}\\
        \rmaxcore$(C_{(0,0,0)},(0,0,8),\ell_3) \rightarrow \textsc{null}$}}
    };
\end{tikzpicture}
\caption{Execution of the \rmax\ algorithm (Algorithm~\ref{alg:im}) on a toy $3$-layer graph.\label{fig:maxexec} }
\end{figure}
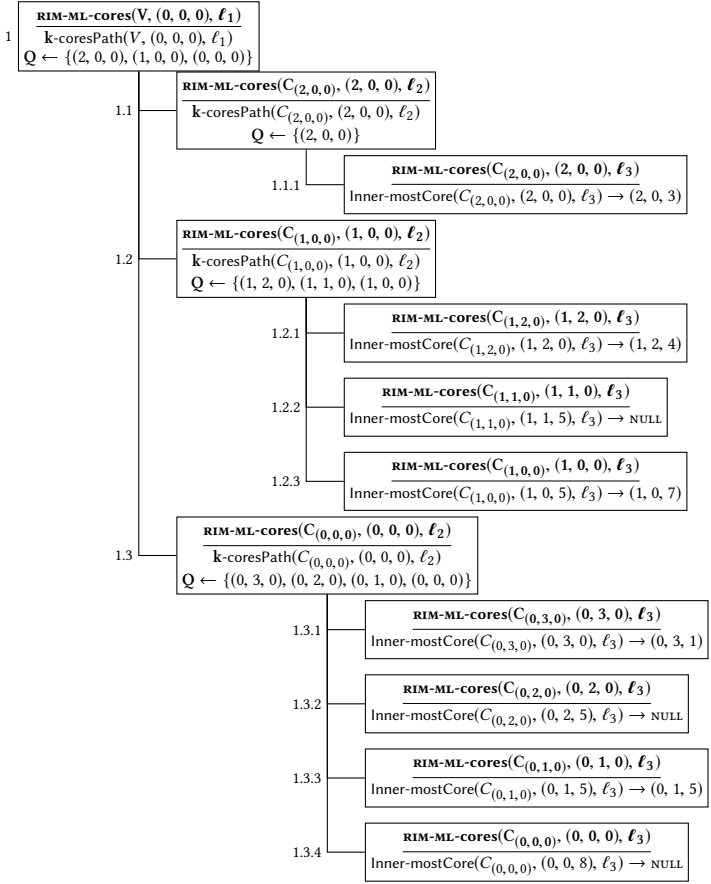

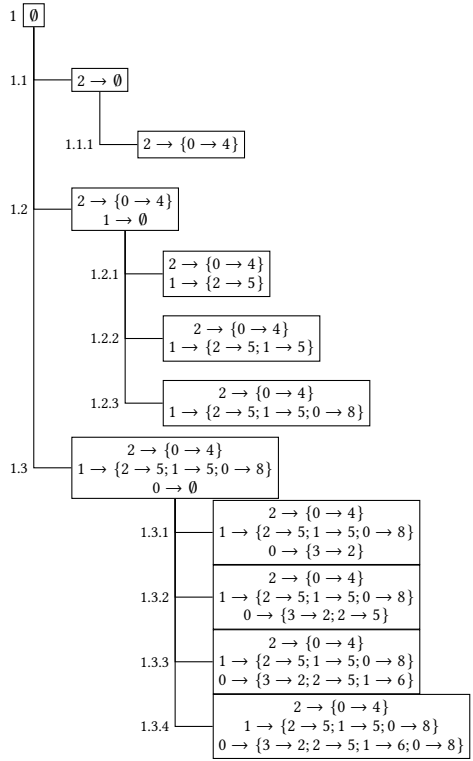
\begin{figure}[t]
\centering
\tikzstyle{every node}=[draw=black,fill=none,right,align=center]
\tikzstyle{selected}=[draw=red,fill=red!30]
\tikzstyle{optional}=[dashed,fill=gray!50]
\tikzstyle{ann} = [draw=none,fill=none,left]

\tiny
\begin{tikzpicture}[%
%grow via three points={one child at (0.5,-1.4) and two children at (0.5,-1.4) and (0.5,-2.8)},
grow via three points={one child at (0.5,-0.85) and two children at (0.5,-0.85) and (0.5,-1.7)},
edge from parent path={(\tikzparentnode.south) |- (\tikzchildnode.west)}]
\node[label={left:$1$}] {$\emptyset$}
    child {node[label={[label distance=0.5cm]left:1.1}] {$2 \rightarrow \emptyset$}
        child {node[label={[label distance=0.5cm]left:1.1.1}] {$2 \rightarrow \{0 \rightarrow 4\}$}}
    }
    child [missing] {}
    child {node[label={[label distance=0.5cm]left:1.2}] {$2 \rightarrow \{0 \rightarrow 4\}$\\
    $1 \rightarrow \emptyset$}
        child {node[label={[label distance=0.5cm]left:1.2.1}] {$2 \rightarrow \{0 \rightarrow 4\}$\\
                                                                                         $1 \rightarrow \{2 \rightarrow 5\}$}}
        child {node[label={[label distance=0.5cm]left:1.2.2}] {$2 \rightarrow \{0 \rightarrow 4\}$\\
                                                                                         $1 \rightarrow \{2 \rightarrow 5; 1 \rightarrow 5\}$}}
        child {node[label={[label distance=0.5cm]left:1.2.3}] {$2 \rightarrow \{0 \rightarrow 4\}$\\
                                                                                         $1 \rightarrow \{2 \rightarrow 5; 1 \rightarrow 5; 0 \rightarrow 8\}$}}
    }
    child [missing] {}
    child [missing] {}
    child [missing] {}
    child {node[label={[label distance=0.5cm]left:1.3}] {$2 \rightarrow \{0 \rightarrow 4\}$\\
                                                                                   $1 \rightarrow \{2 \rightarrow 5; 1 \rightarrow 5; 0 \rightarrow 8\}$\\
                                                                                   $0 \rightarrow \emptyset$}
        child {node[label={[label distance=0.5cm]left:1.3.1}] {$2 \rightarrow \{0 \rightarrow 4\}$\\
                                                                                         $1 \rightarrow \{2 \rightarrow 5; 1 \rightarrow 5; 0 \rightarrow 8\}$\\
                                                                                         $0 \rightarrow \{3 \rightarrow 2\}$}}
        child {node[label={[label distance=0.5cm]left:1.3.2}] {$2 \rightarrow \{0 \rightarrow 4\}$\\
                                                                                         $1 \rightarrow \{2 \rightarrow 5; 1 \rightarrow 5; 0 \rightarrow 8\}$\\
                                                                                         $0 \rightarrow \{3 \rightarrow 2; 2 \rightarrow 5\}$}}
        child {node[label={[label distance=0.5cm]left:1.3.3}] {$2 \rightarrow \{0 \rightarrow 4\}$\\
                                                                                         $1 \rightarrow \{2 \rightarrow 5; 1 \rightarrow 5; 0 \rightarrow 8\}$\\
                                                                                         $0 \rightarrow \{3 \rightarrow 2; 2 \rightarrow 5; 1 \rightarrow 6\}$}}
        child {node[label={[label distance=0.5cm]left:1.3.4}] {$2 \rightarrow \{0 \rightarrow 4\}$\\
                                                                                         $1 \rightarrow \{2 \rightarrow 5; 1 \rightarrow 5; 0 \rightarrow 8\}$\\
                                                                                         $0 \rightarrow \{3 \rightarrow 2; 2 \rightarrow 5; 1 \rightarrow 6; 0 \rightarrow 8\}$}}
    };
\end{tikzpicture}
\caption{Content of the $\mathcal{M}$ data structure during the execution of the \rmax\ algorithm as per the example shown in Fig.~\ref{fig:maxexec}.\label{fig:maxstruct}}
\end{figure}

\revision{
Algorithm~\ref{alg:rightim} may be logically split into two main blocks:  the first one (Lines~\ref{line:rightim:cd}~--~\ref{line:rightim:endfirstblock}) taking care of the  recursion, and the second one (Lines~\ref{line:rightim:lbstart}~--~\ref{line:rightim:endsecondblock}) computing the $\ell_r$-right-inner-most cores.
}
The first block of the algorithm is executed when the current $\ell_r$ layer is not the last one.
\revision{
In that block the \kcorespathalg subroutine (already used in Algorithm~\ref{alg:dfs} and described in Section~\ref{sec:alg:dfs}) is run on set $S$ of vertices, layer $\ell_r$, and taking into account the constraints in vector $\vec{k}$ (Lines~\ref{line:rightim:cd}~and~\ref{line:rightim:nodesets}).
Then, for each \corenessvec\ $\vec{k'}$ that has been found, a recursive call is made, where the layer of interest becomes the next layer $\ell_{r+1}$, and the data structure $\mathcal{M}$ is augmented by adding a further (empty) nested map (this new map will be populated within the upcoming recursive executions).
}
The \corenessvec s are processed in decreasing order of $k'_{\ell_r}$.
\revision{
This processing order ensures the correctness of the following: once a multilayer core has been identified as $\ell_r$-right-inner-most, it permanently becomes part of the ultimate output cores (no further recursive call will remove it from the output).
}
Note also that, for each $\vec{k'}$, 
\revision{
\rmaxsub\ can be run on $C_{\vec{k'}}$ only, i.e., the core of \corenessvec\ $\vec{k'}$.
}
This guarantees better efficiency, without affecting correctness.

\revision{
The second block of the algorithm (Lines~\ref{line:rightim:lbstart}~--~\ref{line:rightim:endsecondblock}) works as follows.
}
When the last layer has been reached, i.e., $\ell_r = \ell_{|L|}$, the current recursion ends, and an $\ell_r$-right-inner-most multilayer core is returned (if any).
First of all, the algorithm computes a  \corenessvec\ $\vec{k'}$ which is potentially $\ell_r$-right-inner-most (Lines~\ref{line:rightim:lbstart}~--~\ref{line:rightim:lbend}).
%: it is equal to $\vec{k}$ from $\ell_1$ to $\ell_{|L|-1}$, while at $\ell_{|L|}$ has the lower bound $k_\mathcal{M}$, in terms of minimum degree, that is required to exhibit on the last layer in order to be right-inner-most.
In this regard, note that the $k_\mathcal{M}$ value is derived from the information that has been stored in $\mathcal{M}$ in the earlier recursive iterations.
%: it corresponds to the maximum value saved at level $\ell_{|L|-1}$ of $\mathcal{M}$ for the \corenessvec s having the first $|L|-1$ elements equal to $\vec{k}$ except for an element augmented by one.
Finally, the algorithm computes the inner-most core in $\ell_{|L|}$ constrained by $\vec{k'}$, by means of the \rmaxcore\ subroutine.
\revision{
Such a subroutine, similarly to the \kcorespathalg one, takes as input a multilayer graph $G$, a subset $S$ of vertices, a \corenessvec\ $\vec{k}$, and a layer $\ell$. It returns the multilayer core having \corenessvec\ of highest $\ell$-th component of the vertices in $S$, considering the constraints specified in $\vec{k}$.
}
If the \rmaxcore\ procedure actually returns a multilayer core, then it is guaranteed that such a core is $\ell_r$-right-inner-most, and is therefore added to the solution (and $\mathcal{M}$ is updated accordingly).

In Figure~\ref{fig:maxexec} we show an example of the execution of the proposed \rmax\ algorithm for a simple 3-layer graph, while Figure~\ref{fig:maxstruct} reports the content of the $\mathcal{M}$ data structure for this example.
Every box corresponds to a call of Algorithm~\ref{alg:rightim}, for which we specify ($i$) the input parameters ($G$ and $\mathcal{M}$ are omitted for the sake of brevity), 
\revision{
($ii$) the calls to the \kcorespathalg or \rmaxcore\ subroutines, and ($iii$) the content of $\mathbf{Q}$ (when it is instantiated).
}
For instance, the \corenessvec\ given as input to \rmaxcore\ at box $1.3.4$ has the last element equal to the maximum between what is stored in $\mathcal{M}$ at the end of the paths $1 \rightarrow 0$ and $\ 0 \rightarrow 1$, i.e., $8$ and $5$, that have been set at boxes $1.2.3$ and $1.3.3$, respectively.

\subsection{Experimental results}
\spara{Running times.}
We asses the efficiency of \rmax\ (for short \textsc{im}) by comparing it  to the aforementioned na\"ive approach for computing inner-most multilayer cores, which consists in firstly computing all multilayer cores (by means of one of the three algorithms presented in Section~\ref{sec:algorithms}) and filtering out the non-inner-most ones.
%The runtime of the algorithms for multilayer core decomposition (i.e., \textsc{bfs}, \textsc{dfs}, and \textsc{h}), along with the time required by the execution of the filtering and by \textsc{im} are reported in Table~\ref{tab:r_results}.
The results of this experiment are reported in Table~\ref{tab:r_results}.
First of all, it can be observed that the a-posteriori filtering of the inner-most multilayer cores does not consistently affect the runtime of the algorithms for multilayer core decomposition: this means that most of the time is spent for computing the overall core decomposition.
%Their execution time is augmented from $0.4$\%, for \textsc{dfs} on the \textsf{DBLP} dataset, to the extreme case of $14$\%, for \textsc{dfs} on the \textsf{FriendfeedTwitter} dataset.
The main outcome of this experiment is that the running time of the proposed  \textsc{im} method is smaller than the time required by \textsc{bfs}, \textsc{dfs}, or \textsc{h} summed up to the time spent in the a-posteriori filtering, with considerable speed-up  from $1.3$ to an order of magnitude on the larger datasets, e.g., \textsf{FriendfeedTwitter} and \textsf{Friendfeed}.
The only exception is on the \textsf{DBLP} dataset where \textsc{bfs} and \textsc{h} run slightly faster, probably due to fact that its edges are (almost) equally distributed among the layers, which makes the effectiveness of the ordering vanish.

\begin{table}[t]
\centering
\caption{Runtime (in seconds) of the methods for multilayer core decomposition, the a-posteriori filtering of the inner-most multilayer cores, and the proposed \rmax\ method for directly computing inner-most multilayer cores.\label{tab:r_results}}

\begin{tabular}{c|ccc|c|c}
\multicolumn{1}{c}{dataset} & \multicolumn{1}{c}{\textsc{bfs}} & \multicolumn{1}{c}{\textsc{dfs}} & \multicolumn{1}{c}{\textsc{h}} & \multicolumn{1}{c}{filtering} & \multicolumn{1}{c}{\textsc{im}}\\
\hline
\textsf{Homo} & $13$ & $27$ & $12$ & $0.5$ & $5$ \\
\textsf{SacchCere} & $1\,134$ & $2\,627$ & $1\,146$ & $24$ & $336$ \\
\textsf{DBLP} & $68$ & $282$ & $29$ & $1$ & $148$ \\
\textsf{ObamaInIsrael} & $226$ & $150$ & $177$ & $7$ & $120$ \\
\textsf{Amazon} & $3\,981$ & $5\,278$ & $3\,913$ & $129$ & $2\,530$ \\
\textsf{FriendfeedTwitter} & $61\,113$ & $1\,973$ & $59\,520$ & $276$ & $1\,583$ \\
\textsf{Higgs} & $2\,480$ & $640$ & $2\,169$ & $33$ & $356$ \\
\textsf{Friendfeed} & $58\,278$ & $13\,356$ & $47\,179$ & $394$ & $2\,640$ \\
\hline
\end{tabular}

%%\vspace{3mm}
\end{table}

\spara{Characterization.}
We also show the characteristics of the inner-most multilayer cores.
Figure~\ref{fig:levcores_im} reports  the distribution of number, size, and average-degree density of all cores and inner-most cores only.
Distributions are shown in a way similar to what previously done in Figure~\ref{fig:levcores}, i.e., by level of the core lattice, and for the \textsf{SacchCere} and \textsf{Amazon} datasets.

For both datasets, there are no inner-most cores in the first levels of the lattice.
%, since all cores have at least a descendant.
As expected, the number of inner-most cores considerably increases when the number of all cores decreases.
This is due to the fact that some cores stop propagating throughout the lattice, hence they are recognized as inner-most.
In general, inner-most cores are on average smaller than all multilayer cores.
Nonetheless, for the levels 12 and 13 of the \textsf{Amazon} dataset, inner-most cores have greater size than all cores.
This behavior is consistent with our definitions: inner-most cores are cores without descendants, thus they are expected to be the smallest-sized ones, but they do not necessarily have to.
Finally, the distribution of the average-degree density exhibits a similar trend to the distribution of the size: this is expected as the two measures depend on each other.

\begin{figure}[t!]
\begin{tabular}{ccc}
& \includegraphics[width=0.3\columnwidth]{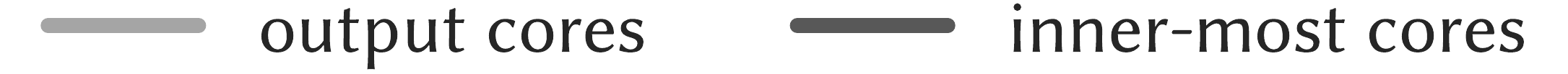} & \\
\includegraphics[width=0.3\columnwidth]{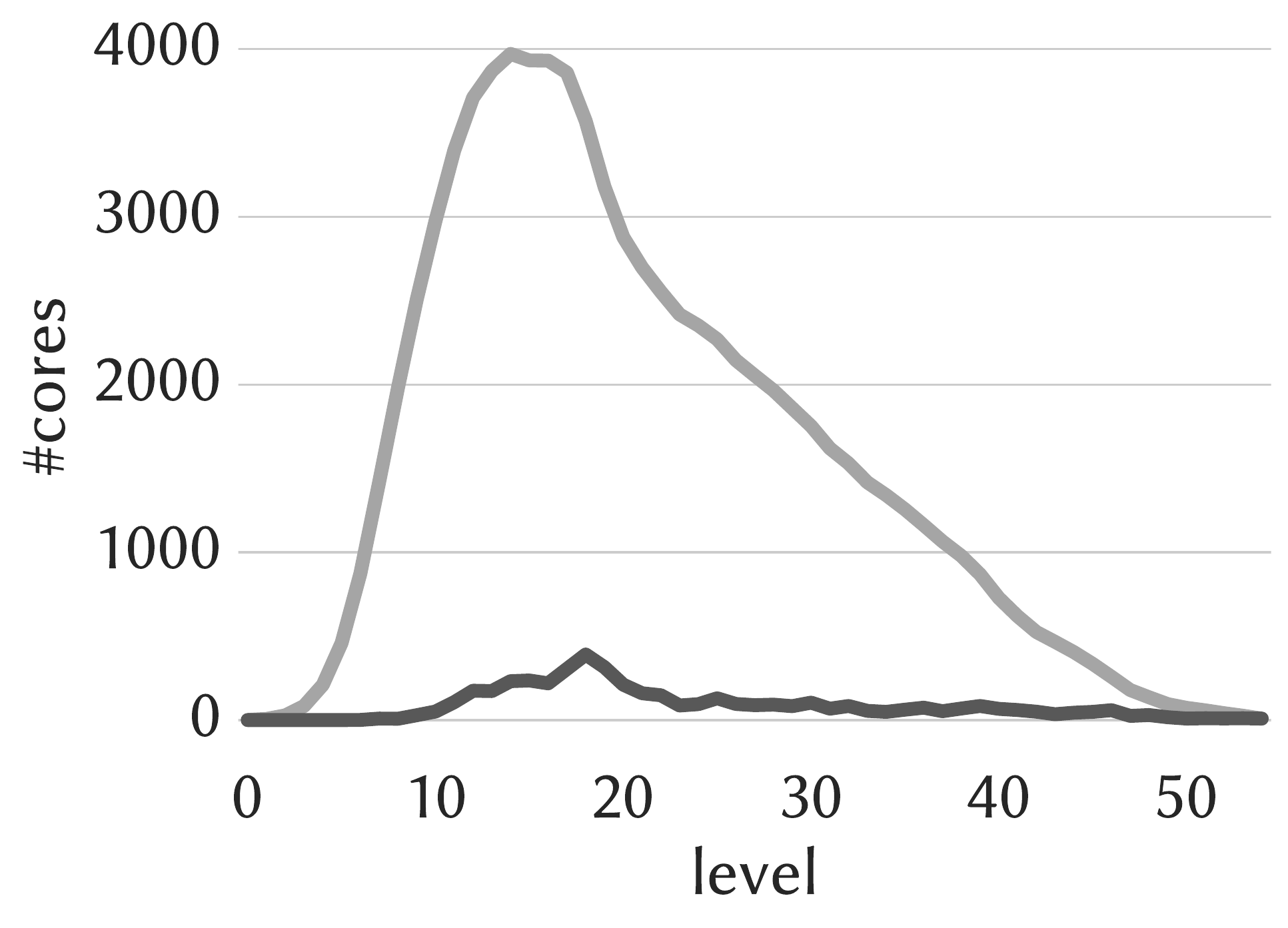} & \includegraphics[width=0.3\columnwidth]{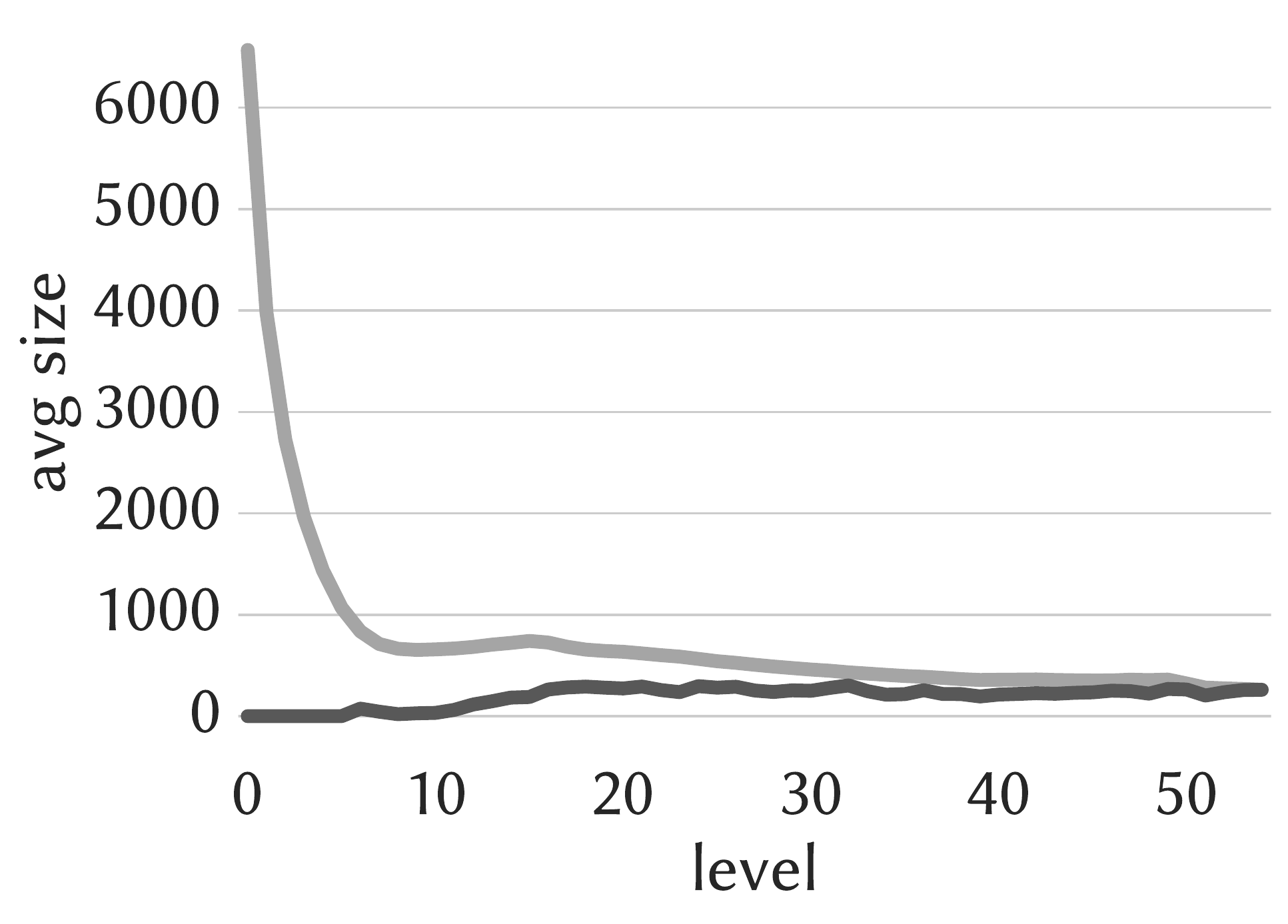}
& \includegraphics[width=0.3\columnwidth]{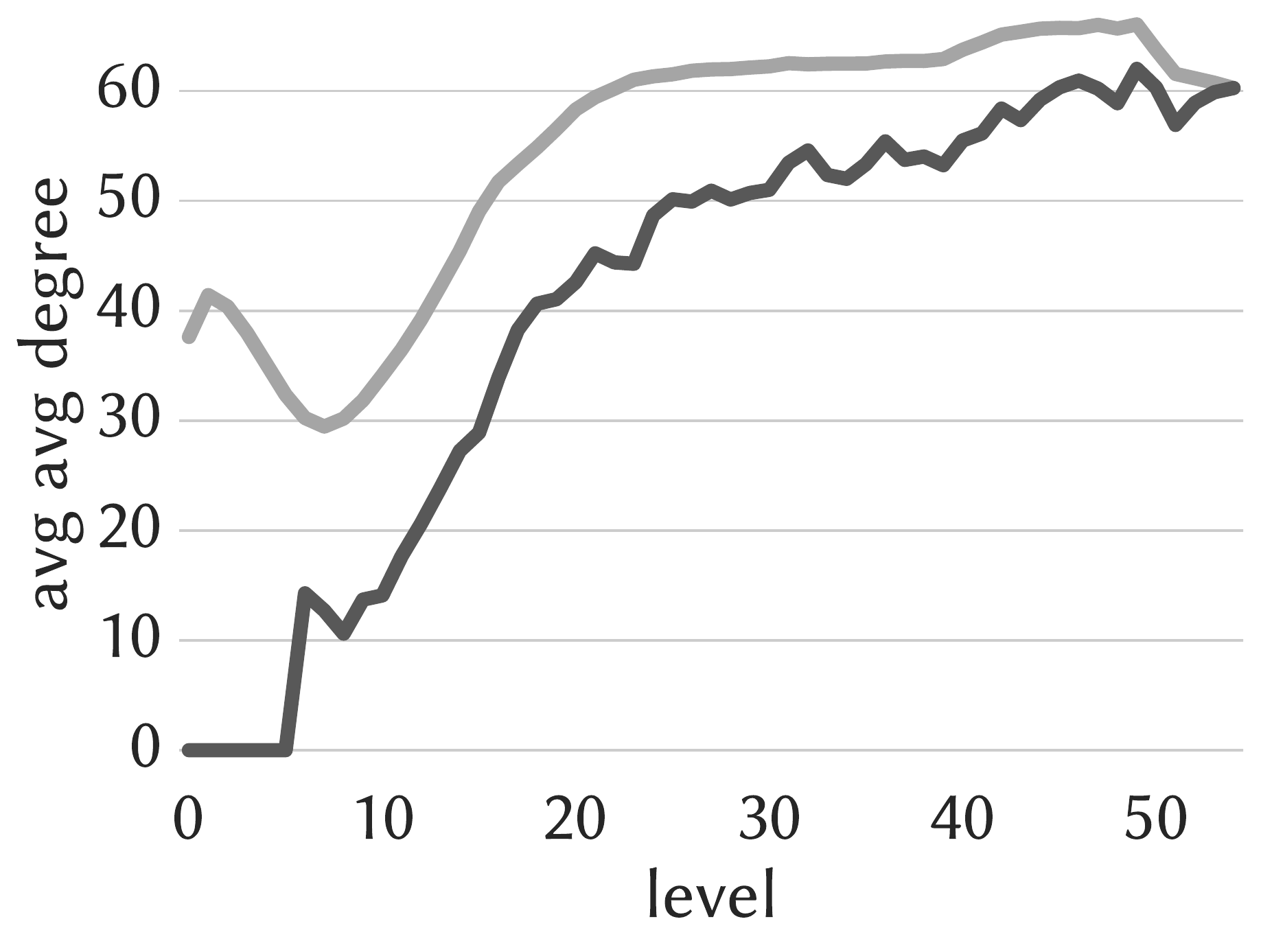}\\
\multicolumn{3}{c}{\textsf{SacchCere}}\\
& \includegraphics[width=0.3\columnwidth]{figures/im_legenda.pdf} & \\
\includegraphics[width=0.3\columnwidth]{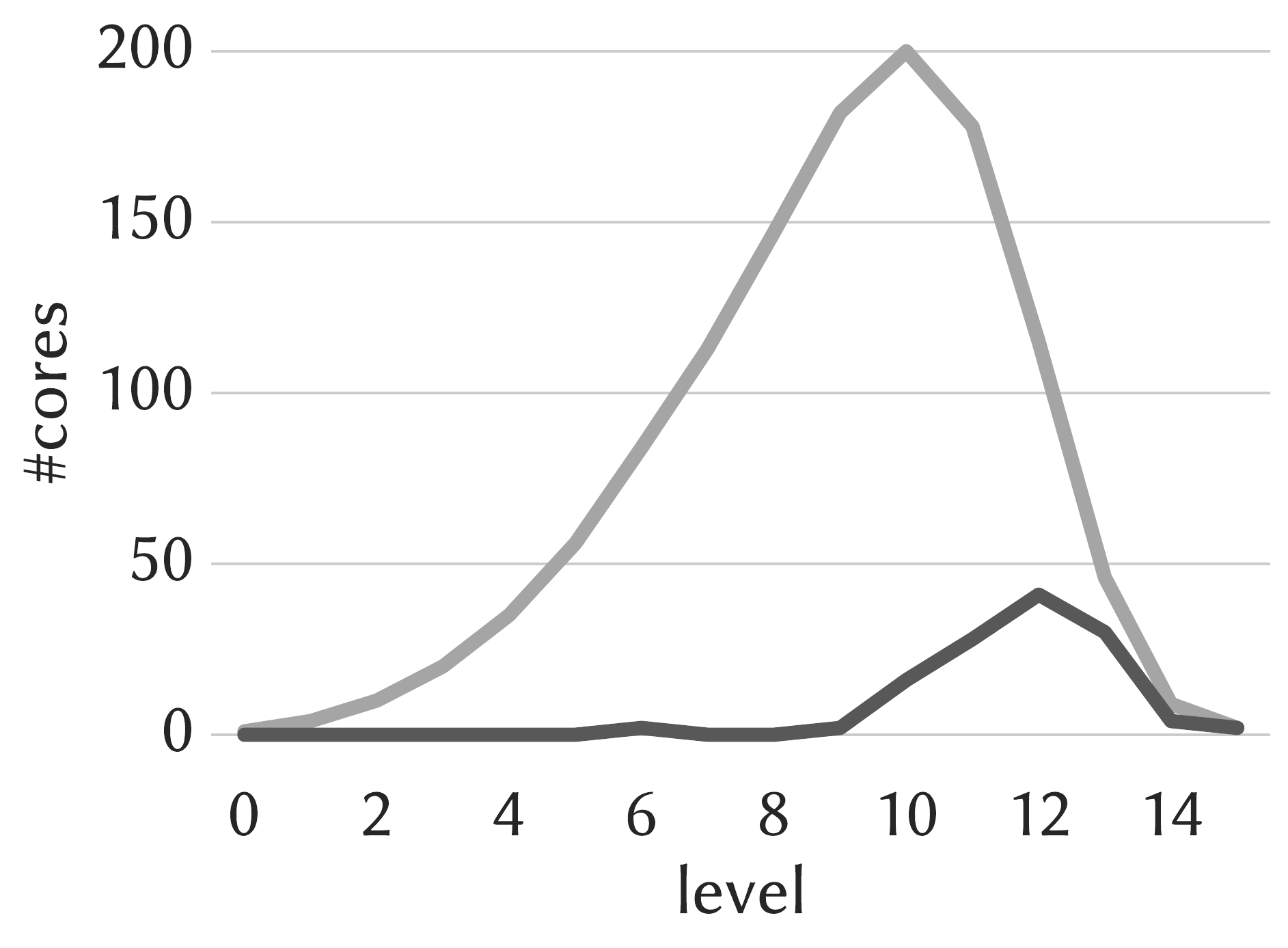} & \includegraphics[width=0.3\columnwidth]{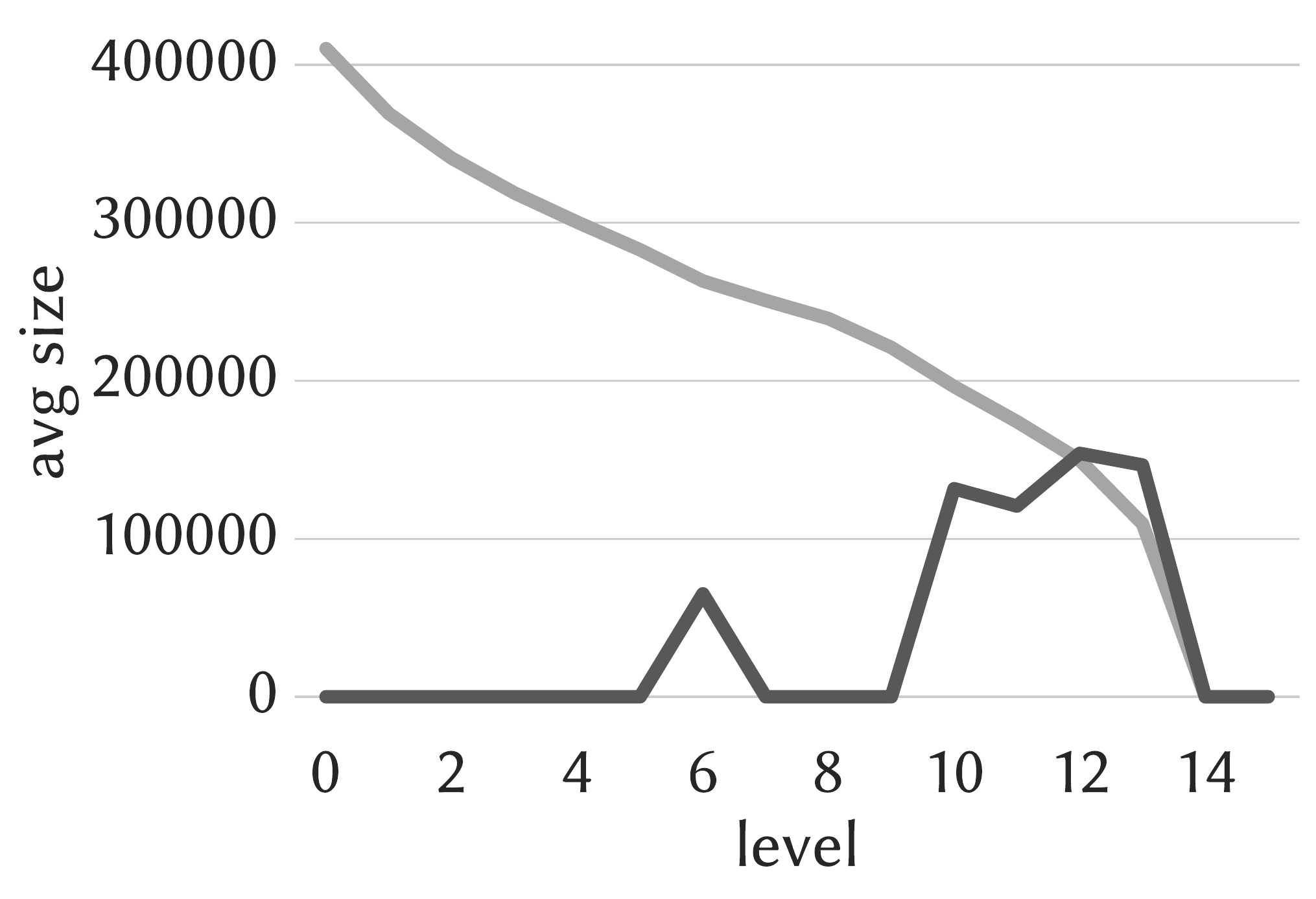}& \includegraphics[width=0.3\columnwidth]{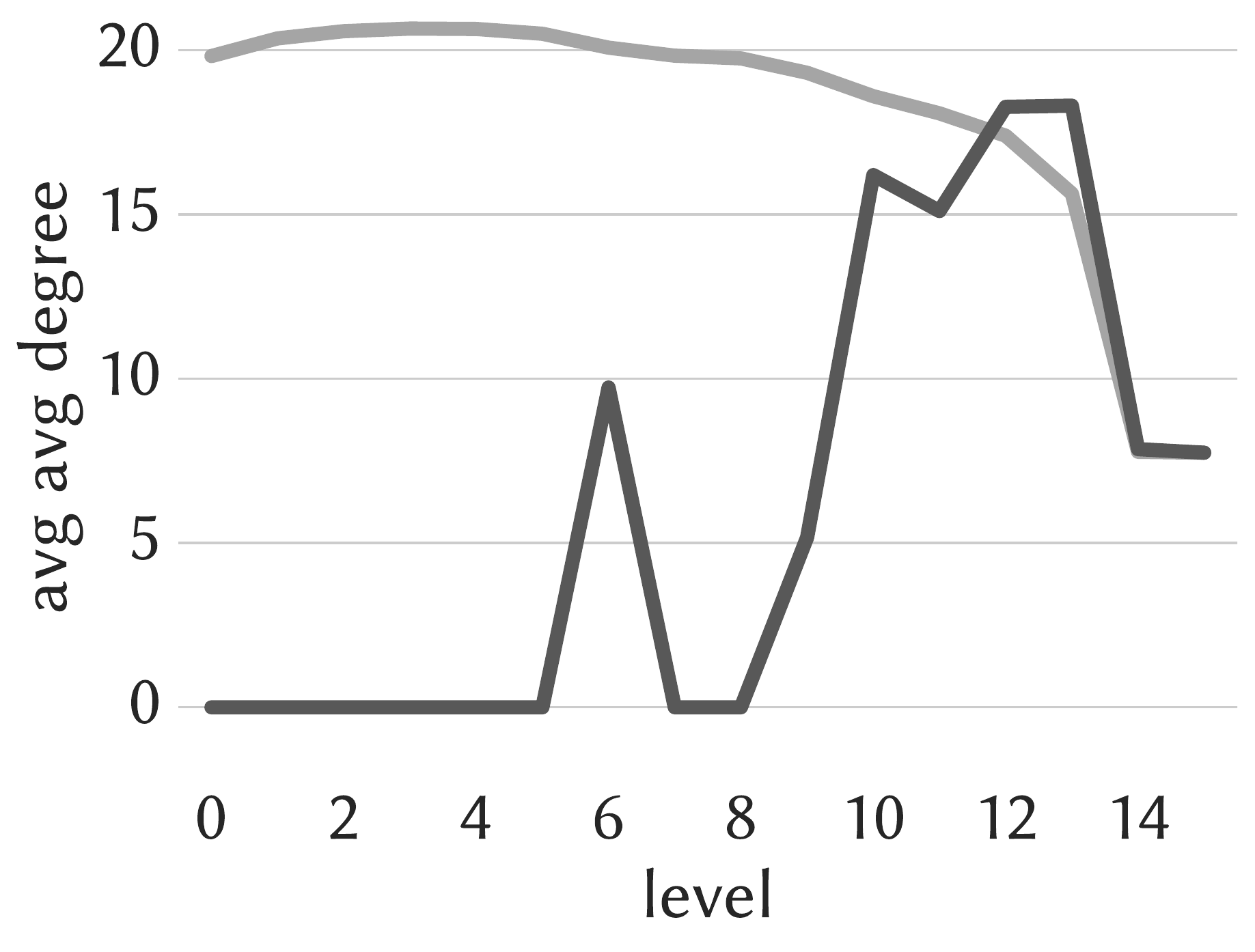}\\
 \multicolumn{3}{c}{\textsf{Amazon}}\\
\end{tabular}
\caption{\label{fig:levcores_im} Comparison of the distributions, to the core-lattice level, of number (left), average size (center), and average average-degree density (right) of multilayer cores and inner-most multilayer cores, for datasets \textsf{SacchCere} (top) and \textsf{Amazon} (bottom).}
\end{figure}

\section{Multilayer densest subgraph}
\label{sec:densest}
%!TEX root = tkdd_revision.tex
 %%\enlargethispage*{2\baselineskip}
In this section we showcase the usefulness of multilayer core-decomposition in the context of multilayer densest-subgraph discovery.
Particularly, we show how to exploit the multilayer core-decomposition to devise an algorithm with approximation guarantees for the \mldensestsubgraph\ problem introduced in Section~\ref{sec:problems} (Problem~\ref{prob:mldensestsubgraph}), thus extending to the multilayer setting the intuition at the basis of the well-known $\frac{1}{2}$-approximation algorithm~\cite{AITT00,Char00} for single-layer densest-subgraph extraction.

%%%%%%%%%%%%%%%%%%%%%%%%%%%%%%%%%
% QUESTA PARTE SPOSTATA IN SECTION 2
%%%%%%%%%%%%%%%%%%%%%%%%%%%%%%%%%
%\spara{Problem definition.}
%As discussed in Section~\ref{sec:contributions}, the densest subgraph of a multilayer graph should provide a good trade-off between large density and the number of layers where such a large density is exhibited.
%We achieve this with the following optimization problem:
%
%\begin{problem}[\mldensestsubgraph]\label{prob:mldensestsubgraph}
%Given a multilayer graph $G=(V,E,L)$, a positive real number $\beta$, and a real-valued function $\density : 2^{V} \rightarrow \mathbb{R}^+$ defined as:
%\begin{equation}\label{eq:densestfunction}
%\density(S) = \max_{\hat{L} \subseteq L} \min_{\ell \in \hat{L}} \frac{|E_{\ell}[S]|}{|S|}  |\hat{L}|^{\beta},
%\end{equation}
%%The \mldensestsubgraph\ problem requires to
%find a subset $S^* \subseteq V$ of vertices that maximizes function $\density$, i.e.,
%$$
%S^* = \arg\max_{S \subseteq V}  \density(S).
%$$
%%\hfill\qed
%\end{problem}
%
%The role of  parameter $\beta$ in Problem~\ref{prob:mldensestsubgraph} is  to control the importance of the two ingredients of the objective function $\density$, i.e., density %\enlargethispage*{2\baselineskip}
%and number of layers exhibiting such a density: the smaller $\beta$ the larger the importance to be given to the former aspect (density), and vice versa.
%Also, as a nice side effect, solving the \mldensestsubgraph\ problem allows for automatically finding  a set of layers of interest for the densest subgraph $S^*$.
%%%%%%%%%%%%%%%%%%%%%%%%%%%%%%%%%

\subsection{Hardness}
We start by formally showing that the \mldensestsubgraph\ problem (Problem~\ref{prob:mldensestsubgraph}) is \NPhard.

\begin{mytheorem}\label{th:densesthard}
Problem~\ref{prob:mldensestsubgraph} is \NPhard.
\end{mytheorem}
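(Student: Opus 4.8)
The plan is to prove \NPhard{ness} by a polynomial-time reduction from the decision version of the \emph{minimum-average densest common subgraph} problem --- the ``all-layer'' variant recalled in Section~\ref{sec:related} --- which is \NPhard\ by Charikar~\emph{et~al.}~\cite{charikar2018finding}: given a multilayer graph $G=(V,E,L)$ and a threshold $\theta>0$, decide whether some nonempty $S\subseteq V$ satisfies $\min_{\ell\in L}\frac{|E_\ell[S]|}{|S|}\geq\theta$. We may assume $|L|\geq 2$ and $0<\theta<n$ with $n=|V|$ (for one layer the source problem is just thresholded single-layer densest subgraph, solvable in polynomial time, and for $\theta\geq n$ no subgraph can reach density $\theta$, so the instance is trivially negative). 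Note first that evaluating $\density(S)$ for a \emph{fixed} $S$ is easy: sorting the per-layer densities of $S$ in non-increasing order $d_{(1)}\geq\cdots\geq d_{(|L|)}$, the optimal $\hat L$ in \eqref{eq:densestfunction} is always a prefix, so $\density(S)=\max_{1\leq j\leq|L|}d_{(j)}\,j^{\beta}$; hence the whole hardness of Problem~\ref{prob:mldensestsubgraph} is concentrated in the choice of the vertex set $S$, which is exactly what the source problem asks for.

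The reduction keeps $G$ untouched and picks $\beta$ large enough that the factor $|\hat L|^{\beta}$ makes it always preferable to include \emph{all} layers. Every per-layer density of a nonempty vertex set lies in $[0,n)$, and every \emph{positive} per-layer density is at least $\frac1n$; so it suffices to choose any rational (say integer) $\beta\geq 1$ with $(|L|-1)^{\beta}\,n<|L|^{\beta}\,\theta$, equivalently $\beta>\ln(n/\theta)\,/\,\ln(|L|/(|L|-1))$. Since $\ln(|L|/(|L|-1))\geq 1/|L|$, such a $\beta$ can be taken to be $O(|L|\log(n/\theta))$, which is polynomial in the input size and therefore has a polynomial-size encoding. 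The produced \mldensestsubgraph\ instance is $(G,\beta)$, and we ask whether its optimum is at least $T=\theta\,|L|^{\beta}$.

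The equivalence is argued in two directions, the second being the crux. If the source instance is a \textsc{yes}-instance with witness $S^{\star}$, then $\frac{|E_\ell[S^{\star}]|}{|S^{\star}|}\geq\theta$ in every layer $\ell$, so taking $\hat L=L$ in \eqref{eq:densestfunction} already gives $\density(S^{\star})\geq\theta\,|L|^{\beta}=T$. Conversely, assume the source instance is a \textsc{no}-instance and let $S\subseteq V$ be arbitrary, with sorted densities $d_{(1)}\geq\cdots\geq d_{(|L|)}$. If $d_{(|L|)}>0$, then $S$ has positive density in all layers, so $\min_{\ell}\frac{|E_\ell[S]|}{|S|}=d_{(|L|)}<\theta$, whence the $j=|L|$ term contributes $d_{(|L|)}|L|^{\beta}<T$, while for $j\leq|L|-1$ we have $d_{(j)}\,j^{\beta}<n\,(|L|-1)^{\beta}<T$ by the choice of $\beta$. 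If instead $d_{(|L|)}=0$, then only indices $j\leq|L|-1$ yield a nonzero term, and again $d_{(j)}\,j^{\beta}<n\,(|L|-1)^{\beta}<T$. In all cases $\density(S)<T$, so the optimum of the \mldensestsubgraph\ instance lies below $T$. The construction being clearly computable in polynomial time, this establishes the reduction and hence the theorem. I expect the backward direction to be the main obstacle: one must rule out that, on a \textsc{no}-instance, some vertex set exploits a \emph{proper} subset of layers $\hat L\subsetneq L$ to push $\density$ up to $T$ --- which is precisely what the lower bound imposed on $\beta$ prevents, and getting that quantitative bound right (and polynomially small) is the delicate point.
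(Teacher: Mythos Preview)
Your proof is correct and follows essentially the same route as the paper: both reduce from \textsc{Min-Avg Densest Common Subgraph} (DCS-MA) of Charikar~\emph{et~al.}~\cite{charikar2018finding} by keeping $G$ unchanged and choosing $\beta$ large enough that the $|\hat L|^\beta$ factor forces the optimal $\hat L$ to be all of $L$, thereby collapsing $\density$ to the DCS-MA objective. The paper frames this as an optimization-to-optimization reduction (with $\beta$ chosen from $|V|$ and $\dmax$ via three auxiliary lemmas bounding $\density_{\textsc{all}}$ below and $\density_{\textsc{$\neg$all}}$ above), whereas you phrase it as a decision-to-decision reduction with $\beta$ depending on the threshold $\theta$ and use the prefix observation on sorted densities; these are presentational differences, not substantive ones, and your explicit check that $\beta$ and $T$ have polynomial-size encodings is a nice addition the paper leaves implicit.
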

To prove the theorem, we introduce two variants of Problem~\ref{prob:mldensestsubgraph}'s objective function, i.e., $\density_{\textsc{all}}(\cdot)$, which considers all layers in $L$, and $\density_{\textsc{$\neg$all}}(\cdot)$, which considers all subsets of layers but the whole layer set $L$.
Specifically, for any given multilayer graph $G = (V, E, L)$ and vertex subset $S \subseteq V$, the two functions are defined as:  
\begin{equation}\label{eq:fodensestAL}
\density_{\textsc{all}}(S) = \min_{\ell \in L} \frac{|E_\ell[S]|}{|S|}|L|^{\beta},
\end{equation}
\begin{equation}\label{eq:fodensestNAL}
\density_{\textsc{$\neg$all}}(S) = \max_{\hat{L} \in 2^L \setminus\{L\}}\min_{\ell \in \hat{L}} \frac{|E_\ell[S]|}{|S|}|\hat{L}|^{\beta}.
\end{equation}
We also define $\dmax$ as the maximum degree of a vertex in a layer:
\begin{equation}\label{eq:degmax}
\dmax = \max_{\ell \in L} \max_{u \in V}deg(u, \ell),
\end{equation}
and introduce the following three auxiliary lemmas.
\begin{mylemma}\label{lemma:densesthard1}
$\density_{\textsc{all}}(S) \geq \frac{1}{|V|}|L|^\beta$, for all $S \subseteq V$ such that $\forall \ell \in L:|E_\ell[S]| > 0$.
\end{mylemma}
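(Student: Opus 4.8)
The plan is to prove the bound by a direct, element-wise estimate of the ratio $|E_\ell[S]|/|S|$ appearing inside the minimum that defines $\density_{\textsc{all}}$. First I would observe that the hypothesis $|E_\ell[S]| > 0$ for all $\ell \in L$ forces $S$ to be nonempty, and, since $|E_\ell[S]|$ is a nonnegative integer, it actually forces $|E_\ell[S]| \geq 1$ for every layer $\ell$.

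Next I would bound the denominator from above by $|S| \leq |V|$, which holds trivially since $S \subseteq V$. Combining the two observations, for each single layer $\ell \in L$ we get
$$
\frac{|E_\ell[S]|}{|S|} \;\geq\; \frac{1}{|V|}.
$$
Since this lower bound is uniform over all $\ell$, it survives the minimization: $\min_{\ell \in L} \frac{|E_\ell[S]|}{|S|} \geq \frac{1}{|V|}$. Multiplying both sides by the (positive) quantity $|L|^\beta$ yields $\density_{\textsc{all}}(S) = \min_{\ell \in L} \frac{|E_\ell[S]|}{|S|}\,|L|^\beta \geq \frac{1}{|V|}|L|^\beta$, which is exactly the claimed inequality.

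There is essentially no hard step here; the only point that requires a moment's care is the integrality argument — using that $|E_\ell[S]|$ is an integer to pass from the strict inequality $|E_\ell[S]| > 0$ in the hypothesis to the effective bound $|E_\ell[S]| \geq 1$ — together with checking that $S \neq \emptyset$ so that the ratio is well defined.
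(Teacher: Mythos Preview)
Your proof is correct and follows essentially the same approach as the paper's own proof: bound the numerator below by $1$ (via integrality of $|E_\ell[S]|$) and the denominator above by $|V|$, then multiply through by $|L|^\beta$. The paper's version is just a terser statement of the same argument.
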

\begin{proof}
For a vertex set $S$ spanning at least one edge in every layer, it holds that $\min_{\ell \in L}\frac{|E_\ell[S]|}{|S|} \geq \frac{1}{|V|}$, and, therefore, $\density_{\textsc{all}}(S) = \min_{\ell \in L}\frac{|E_\ell[S]|}{|S|}|L|^\beta \geq \frac{1}{|V|}|L|^\beta$.
\end{proof}

\begin{mylemma}\label{lemma:densesthard2}
$\density_{\textsc{$\neg$all}}(S) \leq \frac{\dmax}{2}(|L|-1)^\beta$, for all $S \subseteq V$.
\end{mylemma}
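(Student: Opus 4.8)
The plan is to bound the inner quantity $\min_{\ell \in \hat{L}} \frac{|E_{\ell}[S]|}{|S|}\,|\hat{L}|^{\beta}$ uniformly over all proper subsets $\hat{L} \in 2^L \setminus \{L\}$ and all (nonempty) $S \subseteq V$, and then observe that the bound is preserved by the outer maximum since it does not depend on $\hat{L}$. Two independent estimates feed into this. First, because $\hat{L}$ is a proper subset of $L$ we always have $|\hat{L}| \le |L|-1$; since $\beta > 0$ and $x \mapsto x^{\beta}$ is nondecreasing on $\mathbb{R}^+$, this gives $|\hat{L}|^{\beta} \le (|L|-1)^{\beta}$. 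Second, I would bound the per-layer average-degree density: for any layer $\ell$ and any nonempty $S$, the handshake identity in the subgraph induced by $S$ in layer $\ell$ yields $2\,|E_{\ell}[S]| = \sum_{u \in S} \deg_S(u,\ell) \le |S|\,\dmax$, where $\dmax$ is the maximum degree of a vertex in a layer (Eq.~\eqref{eq:degmax}). Dividing by $2|S|$ gives $\frac{|E_{\ell}[S]|}{|S|} \le \frac{\dmax}{2}$ for every layer $\ell$, hence in particular $\min_{\ell \in \hat{L}} \frac{|E_{\ell}[S]|}{|S|} \le \frac{\dmax}{2}$ for every nonempty $\hat{L}$.

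Combining the two estimates, for every proper subset $\hat{L}$ we obtain $\min_{\ell \in \hat{L}} \frac{|E_{\ell}[S]|}{|S|}\,|\hat{L}|^{\beta} \le \frac{\dmax}{2}\,(|L|-1)^{\beta}$. Since the right-hand side is independent of $\hat{L}$, taking the maximum over $\hat{L} \in 2^L \setminus \{L\}$ leaves it unchanged, which is precisely the claimed inequality $\density_{\textsc{$\neg$all}}(S) \le \frac{\dmax}{2}(|L|-1)^{\beta}$.

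I do not expect a real obstacle here: the argument is a two-line consequence of the handshake lemma and the cardinality constraint $|\hat{L}| \le |L|-1$. The only points needing a word of care are the degenerate term $\hat{L} = \emptyset$ in the maximization, whose contribution is $0$ and hence harmless for the upper bound, and the tacit assumption that $S \neq \emptyset$ so that the ratio $|E_{\ell}[S]|/|S|$ is well defined; both are routine. This bound will be combined with Lemma~\ref{lemma:densesthard1} in the subsequent hardness reduction, so it is enough that it holds with this clean closed form.
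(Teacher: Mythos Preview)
Your proof is correct and follows essentially the same approach as the paper: bound the per-layer density by $\frac{\dmax}{2}$ and the factor $|\hat{L}|^{\beta}$ by $(|L|-1)^{\beta}$, then combine. The only cosmetic difference is that the paper phrases the density bound via the density of a clique on $\dmax+1$ vertices, whereas your handshake-lemma argument is a slightly more direct route to the same inequality.
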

\begin{proof}
The maximum density of a vertex set $S$ in a layer can be at most equal to the density of the maximum clique, i.e., at most $\frac{(\dmax+1)~\dmax}{2~(\dmax+1)} = \frac{\dmax}{2}$.
At the same time, the size of a layer set $\hat{L}$ in the function $\density_{\textsc{$\neg$all}}(\cdot)$ can be at most $|L| - 1$ (as the whole layer set $L$ is not considered in $\density_{\textsc{$\neg$all}}(\cdot)$). 
This means that $\density_{\textsc{$\neg$all}}(S) = \max_{\hat{L} \in 2^L \setminus\{L\}}\min_{\ell \in \hat{L}} \frac{|E_\ell[S]|}{|S|}|\hat{L}|^{\beta} \leq \frac{\dmax}{2}(|L|-1)^\beta$.
\end{proof}

\begin{mylemma}\label{lemma:densesthard3}
$$
\beta > \frac{\log_{|L|-1}\left(\frac{|V|}{2}\dmax\right) \times \log_{|L|}(|L| -1)}{1 - \log_{|L|}(|L| - 1)} \ \ \Leftrightarrow \ \ \frac{1}{|V|}|L|^\beta  >  \frac{\dmax}{2}(|L| - 1)^\beta.
$$
\end{mylemma}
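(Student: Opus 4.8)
The plan is to prove the statement as a chain of reversible equivalences, starting from the inequality on the right and transforming it into the claimed lower bound on $\beta$; since every step will be an ``if and only if'', both directions of the iff come out at once. Throughout I will use that $|L| \geq 3$ (so that $|L|-1 \geq 2$ and both $|L|$ and $|L|-1$ are legitimate logarithm bases), that $\dmax \geq 1$ and $|V| \geq 1$ (so the arguments of all logarithms are positive), and the elementary fact that $0 < \log_{|L|}(|L|-1) < 1$, which makes the quantity $1 - \log_{|L|}(|L|-1)$ strictly positive.

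First I would isolate the $\beta$-dependent part. Dividing both sides of $\frac{1}{|V|}|L|^\beta > \frac{\dmax}{2}(|L|-1)^\beta$ by the positive quantity $\frac{1}{|V|}(|L|-1)^\beta$ turns the inequality into $\left(\frac{|L|}{|L|-1}\right)^\beta > \frac{|V|\,\dmax}{2}$. Since $\frac{|L|}{|L|-1} > 1$, the map $x \mapsto \left(\frac{|L|}{|L|-1}\right)^x$ is strictly increasing, so this is equivalent to $\beta > \log_{|L|/(|L|-1)}\!\left(\frac{|V|}{2}\dmax\right)$, i.e. to $\beta > \frac{\ln\left(\frac{|V|}{2}\dmax\right)}{\ln |L| - \ln(|L|-1)}$.

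Next I would rewrite this right-hand side in the bases appearing in the statement. Dividing numerator and denominator by $\ln|L|$ gives $\frac{\log_{|L|}\left(\frac{|V|}{2}\dmax\right)}{1 - \log_{|L|}(|L|-1)}$, and then the change-of-base identity $\log_{|L|}(x) = \log_{|L|-1}(x)\,\log_{|L|}(|L|-1)$, applied with $x = \frac{|V|}{2}\dmax$, converts the numerator into $\log_{|L|-1}\!\left(\frac{|V|}{2}\dmax\right)\log_{|L|}(|L|-1)$, which is exactly the bound in the statement. Reading the chain backwards — each step being an equivalence, and the final division by $1 - \log_{|L|}(|L|-1) > 0$ preserving the direction of the inequality — yields the converse implication, so the lemma follows.

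There is no genuine obstacle here beyond careful bookkeeping with logarithm bases and signs: the only points needing a moment's attention are verifying that $1 - \log_{|L|}(|L|-1)$ is strictly positive (so that the last multiplication/division does not flip the inequality) and that the base $\frac{|L|}{|L|-1}$ exceeds $1$ (so that exponentiating and taking logarithms are monotone in the expected direction). Both are immediate from $|L| \geq 3$, which is implicitly assumed so that the expression on the left of the iff is well-defined in the first place.
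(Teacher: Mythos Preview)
Your proof is correct and follows essentially the same approach as the paper: both arguments are a chain of reversible equivalences based on the change-of-base formula for logarithms and the fact that $1-\log_{|L|}(|L|-1)>0$. The paper runs the chain starting from the $\beta$-inequality and works toward $\frac{1}{|V|}|L|^\beta > \frac{\dmax}{2}(|L|-1)^\beta$, whereas you start from the latter and first isolate $\beta$ via $\left(\frac{|L|}{|L|-1}\right)^\beta$ before rewriting the logarithm; the content is the same, and your explicit attention to the sign conditions ($|L|\geq 3$, $1-\log_{|L|}(|L|-1)>0$) is a welcome addition that the paper leaves implicit.
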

\begin{proof}
\begin{eqnarray*}
\lefteqn{\beta \ > \ \frac{\log_{|L|-1}\left(\frac{|V|}{2}\dmax\right) \times \log_{|L|}(|L| -1)}{1 - \log_{|L|}(|L| - 1)}}\\
 & \Leftrightarrow & \left(1 - \log_{|L|}(|L| - 1)\right)\beta \ > \ \log_{|L|-1}\left({\textstyle\frac{|V|}{2}}\dmax\right) \times \log_{|L|}(|L| -1)\\
 & \Leftrightarrow & \beta \ > \ \log_{|L|-1}\left({\textstyle\frac{|V|}{2}}\dmax\right) \times \log_{|L|}(|L| -1) + \beta \log_{|L|}(|L| - 1)\\
 & \Leftrightarrow & \frac{\beta}{\log_{|L|}(|L| - 1)} \ > \ \log_{|L|-1}\left({\textstyle\frac{|V|}{2}}\dmax\right) + \beta\\
 & \Leftrightarrow & \frac{\log_{|L|}|L|^\beta}{\log_{|L|}(|L| - 1)} \ > \ \log_{|L|-1}\left({\textstyle\frac{|V|}{2}}\dmax\right) + \log_{|L| - 1}(|L| - 1)^\beta\\
 & \Leftrightarrow & \log_{|L|-1}|L|^\beta \ > \ \log_{|L| - 1}\left({\textstyle\frac{|V|}{2}}\dmax(|L| - 1)^\beta\right)\\
 & \Leftrightarrow & |L|^\beta \ > \ {\textstyle\frac{|V|}{2}}\dmax(|L| - 1)^\beta\\
 & \Leftrightarrow & \frac{1}{|V|}|L|^\beta \ > \ \frac{\dmax}{2}(|L| - 1)^\beta.
%& \Leftrightarrow & \mu(\SLdensest,\ell^*) \geq \frac{|E_{\ell^*}[\SLdensest]|}{|\SLdensest|}.\qquad\qquad\qquad\qquad\qquad
\end{eqnarray*}
\end{proof}

\noindent
With Lemmas~\ref{lemma:densesthard1}--\ref{lemma:densesthard3} in place, we are now ready to provide the ultimate proof of Theorem~\ref{th:densesthard}.

\begin{proof}
We reduce from the \textsc{Min-Avg Densest Common Subgraph} (DCS-MA) problem~\cite{jethava2015finding}, which aims at finding a subset of vertices $S \subseteq V$ from a multilayer graph $G = (V,L,S)$ maximizing $\min_{\ell \in L}\frac{E_\ell[S]}{|S|}$, and has been recently shown to be \NPhard\ in~\cite{charikar2018finding}.
We distinguish two cases. The first (trivial) one is when $G$ has a layer with no edges. In this case any vertex subset would be an optimal solution for DCS-MA (with overall objective function equal to zero), including the optimal solution to our \mldensestsubgraph\ problem run on the same $G$ (no matter which $\beta$ is used).
In the second case $G$ has at least one edge in every layer.
In this case solving our \mldensestsubgraph\ problem on $G$, with $\beta$ set to any value $> \frac{\log_{|L|-1}\left(\frac{|V|}{2}\dmax\right) \times \log_{|L|}(|L| -1)}{1 - \log_{|L|}(|L| - 1)}$, gives a solution that is optimal for DCS-MA as well.
Indeed, it can be observed that, for all $S \subseteq V$ such that $\forall \ell \in L:|E_\ell[S]| > 0$:
\begin{eqnarray*}
\density_{\textsc{all}}(S) & \geq & \frac{1}{|V|}|L|^\beta\eqncomment{\qquad\qquad\qquad}{Lemma~\ref{lemma:densesthard1}}\\
& > & \frac{\dmax}{2}(|L|-1)^\beta\eqncomment{~\qquad}{Lemma~\ref{lemma:densesthard3}}\\
& \geq & \density_{\textsc{$\neg$all}}(S).\eqncomment{\qquad\qquad\qquad}{Lemma~\ref{lemma:densesthard2}}
\end{eqnarray*}
This means that, for that particular value of $\beta$, the optimal solution of \mldensestsubgraph\ on input $G$ is given by maximizing the $\density_{\textsc{all}}(\cdot)$ function, which considers all layers and is, as such, equivalent to the objective function underlying the DCS-MA problem. This completes the proof.
\end{proof}

\subsection{Algorithms}
%It is easy to see that for $|L| = 1$ Problem~\ref{prob:mldensestsubgraph} corresponds to the traditional densest-subgraph problem in single-layer graphs~\cite{Goldberg84}.
%%, i.e., the problem of finding the subgraph that maximizes the average-degree density~\cite{Goldberg84}.
%Such a problem is solvable in polynomial time, but the time complexity of exact algorithms  is $\Omega(|V| \times |E|)$~\cite{Goldberg84}, thus unaffordable for large graphs. Hence, we cannot hope efficient exact solutions to exist for \mldensestsubgraph\ either. We then follow the intuition coming from the single-layer setting, i.e., that core decomposition can provide good quality approximations.
%
%Specifically,  we devise an approximation algorithm based on the multilayer core decomposition discussed above.
%The algorithm is very simple: 
The approximation algorithm we devise for the \mldensestsubgraph\ problem is very simple: it computes the multilayer core decomposition of the input graph, and, among all  cores, takes the one maximizing the objective function $\density$ as the output densest subgraph (Algorithm~\ref{alg:densest}).
Despite its simplicity, the algorithm achieves provable approximation guarantees proportional to the number of layers of the input graph, precisely equal to $\frac{1}{2|L|^{\beta}}$.
We next formally prove this result.
%Specifically, our main result in this regard
%is that the algorithm that selects the multilayer core exhibiting the largest value of objective function $\density$ (Algorithm~\ref{alg:densest}) achieves provable approximation guarantees proportional to the number of layers of the input multilayer graph.

\begin{algorithm}[t]
\caption{\densestalg} \label{alg:densest}
\begin{algorithmic}[1]

\REQUIRE A multilayer graph $G = (V,E,L)$ and a real number $\beta \in \mathbb{R}^+$.
\ENSURE $C^* \subseteq V$.

\STATE $\mathbf{C} \leftarrow \mbox{\textsf{MultiLayerCoreDecomposition}}(G)$\hfill\COMMENT{\revision{Any of Algorithms~\ref{alg:bfs}, \ref{alg:dfs}, \ref{alg:hybrid} can be used}}
\STATE $C^* \leftarrow \arg\max_{C \in \mathbf{C}} \density(C)$\hfill\COMMENT{Equation~(\ref{eq:densestfunction})}
\end{algorithmic}
\end{algorithm}

%\spara{Approximation guarantees.} %\enlargethispage*{2\baselineskip}
Let $\mathbf{C}$ be the core decomposition of the input multilayer graph $G = (V,E,L)$ and $C^*$ denote the core in $\mathbf{C}$ maximizing the density function $\density$, i.e., $C^* = \arg\max_{C \in \mathbf{C}} \delta(C)$.
Then, $C^*$ corresponds to the subgraph output by the proposed \densestalg\ algorithm.
Let also $\SLinnermost$ denote the subgraph maximizing the minimum degree in a single layer, i.e., $\SLinnermost = \arg\max_{S \subseteq V} f(S)$, where $f(S) = \max_{\ell \in L} \mu(S, \ell)$, while $\ell^{(\mu)} = \arg\max_{\ell \in L} \mu(\SLinnermost, \ell)$.  It is easy to see that $\SLinnermost \in \mathbf{C}$.
Finally, let $\SLdensest$ be the densest subgraph among all single-layer densest subgraphs, i.e., $\SLdensest = \arg\max_{S \subseteq V} g(S)$, where $g(S) = \max_{\ell \in L} \frac{|E_{\ell}[S]|}{|S|}$, and $\ell^*$ be the layer where $\SLdensest$ exhibits its largest density, i.e.,  $\ell^* = \arg\max_{\ell \in L} \frac{|E_{\ell}[\SLdensest]|}{|\SLdensest|}$. We start by introducing the following two lemmas that can straightforwardly be derived from the definitions of  $C^*$, $\SLinnermost$, $\SLdensest$, $\ell^{(\mu)}$, and $\ell^*$:

\begin{mylemma}\label{lemma:SLinnermost}
$\density(C^*) \geq \density(\SLinnermost)$.
%\hfill\qed
\end{mylemma}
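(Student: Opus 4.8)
The plan is to derive the inequality directly from the optimality of $C^*$ over the core decomposition $\mathbf{C}$, once we establish that $\SLinnermost$ is itself a member of $\mathbf{C}$ (this is precisely the claim ``it is easy to see that $\SLinnermost \in \mathbf{C}$'' made just before the statement). So the whole content of the proof is to justify that inclusion.

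To do so, I would let $\ell^{(\mu)} = \arg\max_{\ell \in L} \mu(\SLinnermost,\ell)$ and $k^* = \mu(\SLinnermost, \ell^{(\mu)})$, so that $f(\SLinnermost) = k^*$. Consider the coreness vector $\vec{k}$ whose $\ell^{(\mu)}$-th component equals $k^*$ and whose other components are $0$. By Definition~\ref{def:mlkcores}, $C_{\vec{k}}$ is the maximal subgraph in which every vertex has degree at least $k^*$ in layer $\ell^{(\mu)}$ (the constraints for the remaining layers being vacuous), i.e., the ordinary single-layer $k^*$-core of layer $\ell^{(\mu)}$. Since $\SLinnermost$ already satisfies $\mu(\SLinnermost, \ell^{(\mu)}) \geq k^*$, maximality gives $\SLinnermost \subseteq C_{\vec{k}}$; in particular $C_{\vec{k}} \neq \emptyset$, hence $C_{\vec{k}} \in \mathbf{C}$. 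Moreover $f(C_{\vec{k}}) \geq \mu(C_{\vec{k}}, \ell^{(\mu)}) \geq k^* = f(\SLinnermost)$, so $C_{\vec{k}}$ also attains the maximum of $f$. The $\arg\max$ defining $\SLinnermost$ need not be unique, so we may simply take $\SLinnermost = C_{\vec{k}}$, which lies in $\mathbf{C}$.

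With $\SLinnermost \in \mathbf{C}$ in hand, the lemma follows in one line: since $C^* = \arg\max_{C \in \mathbf{C}} \density(C)$, we have $\density(C^*) \geq \density(C)$ for every $C \in \mathbf{C}$, and in particular for $C = \SLinnermost$. I do not expect any genuine obstacle here; the only point requiring a moment's care is the identification of the optimal single-layer minimum-degree subgraph with a multilayer core whose coreness vector has a single non-zero component, which is immediate from the maximality clause in Definition~\ref{def:mlkcores}, together with the harmless observation that this core may be chosen as the representative of the (possibly non-unique) $\arg\max$ defining $\SLinnermost$.
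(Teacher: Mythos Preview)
Your proposal is correct and follows essentially the same approach as the paper: identify the coreness vector with a single non-zero component $k_{\ell^{(\mu)}} = \mu(\SLinnermost,\ell^{(\mu)})$, conclude that $\SLinnermost \in \mathbf{C}$, and then invoke the optimality of $C^*$. You are in fact slightly more careful than the paper, which simply asserts that $\SLinnermost$ \emph{is} the corresponding multilayer core; your observation that the $\arg\max$ defining $\SLinnermost$ need not be unique, and that one may therefore take $\SLinnermost = C_{\vec{k}}$ as the representative, makes this identification rigorous.
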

\begin{proof}
By definition, $\SLinnermost$ is a multilayer core described by (among others) the 	\corenessvec\  $\vec{k} = [k_\ell]_{\ell \in L}$ with $k_{\ell^{(\mu)}} = \max_{\ell \in L} \mu(\SLinnermost, \ell)$, and $k_{\ell} = 0$, $\forall \ell \neq \ell^{(\mu)}$.
Then  $\SLinnermost \in \mathcal{C}$.
As $C^* = \arg\max_{C \in \mathcal{C}} \density(C)$, it holds that $\density(C^*) \geq \density(\SLinnermost)$.
\end{proof}

\begin{mylemma}\label{lemma:SLdensest}
$\density(S^*) \leq \frac{\textstyle |E_{\ell^*} [\SLdensest  ]  |}{\textstyle |\SLdensest |}  |L|^{\beta}$.
%\hfill\qed
\end{mylemma}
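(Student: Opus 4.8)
The plan is to bound $\density(S^*)$ by the single-layer density $g(S^*) = \max_{\ell \in L} \frac{|E_\ell[S^*]|}{|S^*|}$ times $|L|^\beta$, and then invoke the optimality of $\SLdensest$ and the definition of $\ell^*$.

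First I would unpack the definition of $\density$. For any vertex set $S$ and any nonempty $\hat{L} \subseteq L$, and any $\ell \in \hat{L}$, we have $\min_{\ell' \in \hat{L}} \frac{|E_{\ell'}[S]|}{|S|} \leq \frac{|E_{\ell}[S]|}{|S|} \leq \max_{\ell' \in L} \frac{|E_{\ell'}[S]|}{|S|} = g(S)$; moreover $|\hat{L}|^\beta \leq |L|^\beta$ because $|\hat{L}| \leq |L|$ and $\beta \in \mathbb{R}^+$ (the case $\hat{L} = \emptyset$ contributes $0$ and is harmless). Taking the maximum over $\hat{L}$ therefore gives
$$
\density(S) = \max_{\hat{L} \subseteq L} \min_{\ell \in \hat{L}} \frac{|E_{\ell}[S]|}{|S|} |\hat{L}|^\beta \leq g(S)\,|L|^\beta .
$$

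Next I would apply this with $S = S^*$ and then chain two facts that hold by definition: since $\SLdensest = \arg\max_{S \subseteq V} g(S)$, we have $g(S^*) \leq g(\SLdensest)$; and since $\ell^* = \arg\max_{\ell \in L} \frac{|E_\ell[\SLdensest]|}{|\SLdensest|}$, we have $g(\SLdensest) = \frac{|E_{\ell^*}[\SLdensest]|}{|\SLdensest|}$. Combining, $\density(S^*) \leq g(S^*)\,|L|^\beta \leq g(\SLdensest)\,|L|^\beta = \frac{|E_{\ell^*}[\SLdensest]|}{|\SLdensest|}\,|L|^\beta$, which is exactly the claimed bound.

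There is no real obstacle here: the statement is a ``straightforwardly derived'' observation, and the only thing to be slightly careful about is the trivial handling of the empty layer subset and making explicit that dropping the inner $\min$ to a $\max$ over all layers and enlarging $|\hat{L}|$ to $|L|$ are both valid relaxations because $\beta > 0$. The substantive content is deferred to the later approximation theorem that uses this lemma together with Lemma~\ref{lemma:SLinnermost}.
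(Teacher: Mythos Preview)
Your proof is correct and essentially identical to the paper's own proof: both relax $\min_{\ell \in \hat{L}}$ to $\max_{\ell \in L}$ and $|\hat{L}|^\beta$ to $|L|^\beta$ to obtain $\density(S^*) \leq g(S^*)\,|L|^\beta$, then use the optimality of $\SLdensest$ with respect to $g$ and the definition of $\ell^*$. The paper simply presents this as a two-line chain of inequalities without the surrounding commentary.
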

\begin{proof}
\begin{eqnarray*}
	\!\!\!\!\density(S^*) & \!\!=\!\! & \max_{\hat{L} \subseteq L} \min_{\ell \in \hat{L}} \frac{|E_{\ell}[S^*]|}{|S^*|}  |\hat{L}|^{\beta} \ \leq \  \max_{\ell \in L} \frac{|E_{\ell}[S^*]|}{|S^*|}  |L|^{\beta} \ \leq  \ \frac{ |E_{\ell^*} [\SLdensest  ]  |}{ |\SLdensest |}  |L|^{\beta}.
\end{eqnarray*}
\end{proof}

The following further lemma shows a lower bound on the minimum degree of a vertex in $\SLdensest$:

\begin{mylemma}\label{lemma:LBdensestmindeg}
$\mu(\SLdensest, \ell^*) \geq \frac{\textstyle |E_{\ell^*} [\SLdensest ] |}{\textstyle |\SLdensest  |}$.
\end{mylemma}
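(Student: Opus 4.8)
The plan is to use the classical local-optimality (peeling) argument for densest subgraphs, adapted to the multilayer definition of $\SLdensest$. Recall that $\SLdensest = \arg\max_{S \subseteq V} g(S)$ where $g(S) = \max_{\ell \in L} \frac{|E_{\ell}[S]|}{|S|}$, and that $\ell^*$ is chosen precisely so that $g(\SLdensest) = \frac{|E_{\ell^*}[\SLdensest]|}{|\SLdensest|}$. If $|\SLdensest| \le 1$ then $|E_{\ell^*}[\SLdensest]| = 0$ and both sides of the claimed inequality are $0$, so from now on I would assume $|\SLdensest| \ge 2$ and abbreviate $\rho = \frac{|E_{\ell^*}[\SLdensest]|}{|\SLdensest|}$.

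First I would pick a vertex $v \in \SLdensest$ realizing the minimum layer-$\ell^*$ degree, i.e.\ $\deg_{\SLdensest}(v, \ell^*) = \mu(\SLdensest, \ell^*)$. Deleting $v$ removes exactly $\mu(\SLdensest, \ell^*)$ edges of layer $\ell^*$, hence $|E_{\ell^*}[\SLdensest \setminus \{v\}]| = |E_{\ell^*}[\SLdensest]| - \mu(\SLdensest, \ell^*)$, and therefore
$$ g(\SLdensest \setminus \{v\}) \ \ge\ \frac{|E_{\ell^*}[\SLdensest \setminus \{v\}]|}{|\SLdensest| - 1} \ =\ \frac{|E_{\ell^*}[\SLdensest]| - \mu(\SLdensest, \ell^*)}{|\SLdensest| - 1}. $$
Next I would invoke the global optimality of $\SLdensest$: since it maximizes $g$ over \emph{all} vertex subsets, in particular $g(\SLdensest \setminus \{v\}) \le g(\SLdensest) = \rho$. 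Combining the two bounds gives $\frac{|E_{\ell^*}[\SLdensest]| - \mu(\SLdensest, \ell^*)}{|\SLdensest| - 1} \le \frac{|E_{\ell^*}[\SLdensest]|}{|\SLdensest|}$; cross-multiplying (both denominators are positive since $|\SLdensest| \ge 2$) and cancelling the common term $|\SLdensest|\,|E_{\ell^*}[\SLdensest]|$ yields $|E_{\ell^*}[\SLdensest]| \le |\SLdensest|\,\mu(\SLdensest, \ell^*)$, i.e.\ $\mu(\SLdensest, \ell^*) \ge \frac{|E_{\ell^*}[\SLdensest]|}{|\SLdensest|}$, which is exactly the lemma.

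I do not expect any genuine obstacle here; the only subtlety worth flagging is that $g(\SLdensest \setminus \{v\})$ might be attained at a layer different from $\ell^*$, but this is harmless, because the argument only uses the layer-$\ell^*$ lower bound on $g(\SLdensest \setminus \{v\})$ together with the upper bound $g(\SLdensest \setminus \{v\}) \le g(\SLdensest)$ coming from the maximality of $\SLdensest$. The degenerate case $|\SLdensest|=1$ (or $\SLdensest\setminus\{v\}=\emptyset$) is handled separately at the outset as noted above.
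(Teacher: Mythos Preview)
Your proof is correct and follows essentially the same peeling argument as the paper: compare the density before and after removing a minimum-degree vertex in layer $\ell^*$, then rearrange. The only cosmetic difference is that the paper asserts directly that $\SLdensest$ maximizes the layer-$\ell^*$ density (which indeed follows from the definition of $\ell^*$ and the optimality of $\SLdensest$), whereas you route the same inequality through $g$; your handling of the degenerate case $|\SLdensest|\le 1$ is an extra bit of care the paper omits.
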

\begin{proof}
As $\SLdensest$ is the subgraph maximizing the density in layer $\ell^*$, removing the minimum-degree node from $\SLdensest$ cannot increase that density.
Thus, it holds that:
\begin{eqnarray*}
\lefteqn{\frac{|E_{\ell^*}[\SLdensest]|}{|S^*|} \geq \frac{|E_{\ell^*}[\SLdensest]| - \mu(\SLdensest, \ell^*)}{|\SLdensest| - 1}}\\
\qquad & \Leftrightarrow & \mu(\SLdensest, \ell^*) \geq |E_{\ell^*}[\SLdensest]| \frac{|\SLdensest| - 1}{|\SLdensest|} - |E_{\ell^*}[\SLdensest]|\qquad\\
 & \Leftrightarrow & \mu(\SLdensest,\ell^*) \geq \frac{|E_{\ell^*}[\SLdensest]|}{|\SLdensest|}.\qquad\qquad\qquad\qquad\qquad
\end{eqnarray*}
\end{proof}

The approximation factor of the proposed \densestalg\ algorithm is ultimately stated in the next theorem:

\begin{mytheorem}\label{th:densestapproximationfactor}
$\density(C^*) \geq \frac{\textstyle 1}{\textstyle 2|L|^{\beta}}\density(S^*)$.
\end{mytheorem}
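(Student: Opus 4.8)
The proof is a chain of inequalities that threads the three auxiliary lemmas together through the single-layer innermost core $\SLinnermost$, which serves as the bridge between the multilayer-core side ($C^*$) and the single-layer-densest side ($\SLdensest$). The plan is to bound $\density(C^*)$ from below by $\density(\SLinnermost)$ (Lemma~\ref{lemma:SLinnermost}), then lose a factor of $2$ to pass from a density in one layer to a minimum degree in that layer, then use the optimality of $\SLinnermost$ to dominate $\mu(\SLdensest,\ell^*)$, then apply Lemma~\ref{lemma:LBdensestmindeg}, and finally invert Lemma~\ref{lemma:SLdensest}.

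First I would observe that, evaluating the objective $\density$ at $\SLinnermost$ and restricting the inner maximization to the singleton layer set $\hat{L}=\{\ell^{(\mu)}\}$, we get $\density(\SLinnermost) \geq \frac{|E_{\ell^{(\mu)}}[\SLinnermost]|}{|\SLinnermost|}\,|\{\ell^{(\mu)}\}|^{\beta} = \frac{|E_{\ell^{(\mu)}}[\SLinnermost]|}{|\SLinnermost|}$. Next, by the handshake lemma in layer $\ell^{(\mu)}$ we have $|E_{\ell^{(\mu)}}[\SLinnermost]| = \tfrac12\sum_{u\in \SLinnermost}\deg_{\SLinnermost}(u,\ell^{(\mu)}) \geq \tfrac12\,|\SLinnermost|\,\mu(\SLinnermost,\ell^{(\mu)})$, hence $\frac{|E_{\ell^{(\mu)}}[\SLinnermost]|}{|\SLinnermost|}\geq \tfrac12\mu(\SLinnermost,\ell^{(\mu)})$. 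By the definition of $\SLinnermost$ and $\ell^{(\mu)}$, the quantity $\mu(\SLinnermost,\ell^{(\mu)}) = \max_{\ell\in L}\mu(\SLinnermost,\ell) = \max_{S\subseteq V}\max_{\ell\in L}\mu(S,\ell)$, which is at least $\max_{\ell\in L}\mu(\SLdensest,\ell)\geq \mu(\SLdensest,\ell^*)$. Then Lemma~\ref{lemma:LBdensestmindeg} gives $\mu(\SLdensest,\ell^*)\geq \frac{|E_{\ell^*}[\SLdensest]|}{|\SLdensest|}$.

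Stringing these together with Lemma~\ref{lemma:SLinnermost} yields
\begin{equation*}
\density(C^*) \ \geq\ \density(\SLinnermost)\ \geq\ \frac{|E_{\ell^{(\mu)}}[\SLinnermost]|}{|\SLinnermost|}\ \geq\ \tfrac12\,\mu(\SLinnermost,\ell^{(\mu)})\ \geq\ \tfrac12\,\mu(\SLdensest,\ell^*)\ \geq\ \tfrac12\,\frac{|E_{\ell^*}[\SLdensest]|}{|\SLdensest|}.
\end{equation*}
Finally, Lemma~\ref{lemma:SLdensest} rearranges to $\frac{|E_{\ell^*}[\SLdensest]|}{|\SLdensest|}\geq \frac{\density(S^*)}{|L|^{\beta}}$, so $\density(C^*)\geq \frac{1}{2|L|^{\beta}}\density(S^*)$, which is the claim. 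The one point that deserves care — and is the only place the ``$\frac12$'' truly enters — is the handshake step: I must make sure $\SLinnermost$ is genuinely a member of the core decomposition $\mathbf{C}$ (it is, as noted in the proof of Lemma~\ref{lemma:SLinnermost}, being the core of coreness vector with $\ell^{(\mu)}$-th component $\max_{\ell}\mu(\SLinnermost,\ell)$ and zeros elsewhere) and that restricting $\hat L$ to a singleton is legitimate in the $\max_{\hat L\subseteq L}$ of the definition of $\density$; everything else is a routine chaining of the lemmas.
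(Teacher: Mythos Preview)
Your proof is correct and follows essentially the same chain of inequalities as the paper's own proof: Lemma~\ref{lemma:SLinnermost}, then restrict $\hat L$ to a singleton, then use that average degree $\geq$ min degree (your handshake step), then the optimality of $\SLinnermost$ with respect to min degree, then Lemma~\ref{lemma:LBdensestmindeg}, and finally Lemma~\ref{lemma:SLdensest}. The only cosmetic difference is that the paper keeps $\max_{\ell\in L}\frac{|E_\ell[\SLinnermost]|}{|\SLinnermost|}$ in the second step rather than specializing immediately to $\ell^{(\mu)}$, but this is immaterial since the chain collapses to $\ell^{(\mu)}$ one line later anyway.
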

\begin{proof}
\begin{eqnarray*}
\!\density(C^*) \!\!\!\!& \geq & \density(\SLinnermost) \eqncomment{\qquad\qquad\qquad\qquad\qquad\qquad\qquad \quad}{Lemma~\ref{lemma:SLinnermost}}\\
& \geq &  \max_{\ell \in L} \frac{|E_{\ell}[\SLinnermost]|}{|\SLinnermost|}  1^\beta = \max_{\ell \in L} \frac{|E_{\ell}[\SLinnermost]|}{|\SLinnermost|} \eqncomment{\qquad}{Equation~(\ref{eq:densestfunction})}\\
& \geq & \frac{1}{2} \max_{\ell \in L} \mu(\SLinnermost,\ell) \eqncomment{\qquad\qquad}{as avg degree $\geq$ min degree}\\
%, so $ \frac{2|E(\tilde{S},\ell)|}{|\tilde{S}|} \geq \min_{u \in \tilde{S}} deg(u, \ell)$]
& = & \frac{1}{2} \mu(\SLinnermost, \ell^{(\mu)}) \eqncomment{\qquad\qquad\qquad\qquad}{by definition of \SLinnermost}\\
& \geq & \frac{1}{2} \mu(\SLdensest, \ell^*)\eqncomment{\qquad\quad}{optimality of \SLinnermost\ w.r.t. min degree}\\
& \geq & \frac{1}{2}\frac{|E_{\ell^*}[\SLdensest]|}{|\SLdensest|} \eqncomment{\qquad\qquad\qquad\qquad\qquad\qquad\quad \ }{Lemma~\ref{lemma:LBdensestmindeg}}\\
& \geq & \frac{1}{2|L|^\beta} \density(S^*). \eqncomment{\qquad\qquad\qquad\qquad\qquad\qquad\quad \ }{Lemma~\ref{lemma:SLdensest}}\\
& & \qquad\qquad\qquad\qquad\qquad\qquad\qquad\qquad\qquad\qquad\quad
\end{eqnarray*}
\end{proof}

The following corollary shows that the theoretical approximation guarantee stated in Theorem~\ref{th:densestapproximationfactor}  remains the same even if only the inner-most cores are considered (although, clearly, considering the whole core decomposition may lead to better accuracy in practice).

\begin{mycorollary}\label{cor:densestapproximationfactor1}
Given a multilayer graph $G = (V,E,L)$, let $\mathbf{C}_{\textsc{im}}$ be the set of all inner-most multilayer cores of $G$, and let $C^*_{\textsc{im}} = \arg\max_{C \in \mathbf{C}_{\textsc{im}}}\delta(C)$.
It holds that  $\density(C_{\textsc{im}}^*) \geq \frac{1}{2|L|^{\beta}}\density(S^*)$.
\end{mycorollary}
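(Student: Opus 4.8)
The idea is to re-run the argument behind Theorem~\ref{th:densestapproximationfactor} almost verbatim, after replacing its only step that genuinely uses the \emph{full} core decomposition. That step is the opening inequality $\density(C^*) \geq \density(\SLinnermost)$ of Lemma~\ref{lemma:SLinnermost}, which holds because $\SLinnermost \in \mathbf{C}$; however $\SLinnermost$ is in general \emph{not} an inner-most core, so Lemma~\ref{lemma:SLinnermost} cannot be invoked with $\mathbf{C}_{\textsc{im}}$ in place of $\mathbf{C}$. The fix is to observe that all subsequent inequalities in the proof of Theorem~\ref{th:densestapproximationfactor} only use the minimum degree of $\SLinnermost$ in the single layer $\ell^{(\mu)}$, and that some inner-most core inherits at least that much minimum degree in $\ell^{(\mu)}$.

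First I would establish the existence of an inner-most core $C' \in \mathbf{C}_{\textsc{im}}$ whose maximal \corenessvec\ $\vec{d}_\mu(C') = [\mu(C',\ell)]_{\ell \in L}$ (Corollary~\ref{cor:mindegvector}) dominates the maximal \corenessvec\ $\vec{d}_\mu(\SLinnermost) = [\mu(\SLinnermost,\ell)]_{\ell \in L}$ of $\SLinnermost$, i.e.\ $\mu(C',\ell) \geq \mu(\SLinnermost,\ell)$ for every $\ell \in L$. To see this, consider the set of maximal \corenessvec{s} of non-empty cores of $G$ that dominate $\vec{d}_\mu(\SLinnermost)$: it is non-empty (it contains $\vec{d}_\mu(\SLinnermost)$ itself, since $\SLinnermost$ is a core) and finite (each component is bounded by the corresponding $K_\ell$), hence it has a maximal element with respect to componentwise domination; by Definition~\ref{def:innermostcores} the core associated with that maximal element is inner-most. (If $G$ has no edges the statement is vacuous, exactly as in the proof of Theorem~\ref{th:densesthard}.)

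Next, from $\mu(C',\ell^{(\mu)}) \geq \mu(\SLinnermost,\ell^{(\mu)})$ it follows that every vertex of $C'$ has at least $\mu(\SLinnermost,\ell^{(\mu)})$ neighbours in layer $\ell^{(\mu)}$ within $C'$, so $\frac{|E_{\ell^{(\mu)}}[C']|}{|C'|} \geq \frac12 \mu(\SLinnermost,\ell^{(\mu)})$; taking $\hat L = \{\ell^{(\mu)}\}$ in Eq.~(\ref{eq:densestfunction}) then gives
\[
\density(C') \ \geq\ \frac{|E_{\ell^{(\mu)}}[C']|}{|C'|}\,1^\beta \ \geq\ \tfrac12\,\mu(\SLinnermost,\ell^{(\mu)}) \ =\ \tfrac12 \max_{\ell \in L}\mu(\SLinnermost,\ell).
\]
This is precisely the quantity reached, right after applying Lemma~\ref{lemma:SLinnermost}, in the chain of inequalities proving Theorem~\ref{th:densestapproximationfactor}. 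Continuing that chain unchanged from this point — using optimality of $\SLinnermost$ with respect to the single-layer minimum degree, then Lemma~\ref{lemma:LBdensestmindeg} and Lemma~\ref{lemma:SLdensest} — yields $\density(C') \geq \frac{1}{2|L|^\beta}\density(S^*)$. Since $C' \in \mathbf{C}_{\textsc{im}}$ and $C^*_{\textsc{im}} = \arg\max_{C \in \mathbf{C}_{\textsc{im}}}\density(C)$, we conclude $\density(C^*_{\textsc{im}}) \geq \density(C') \geq \frac{1}{2|L|^\beta}\density(S^*)$, as claimed. The only non-routine point is the first paragraph's observation: $\SLinnermost$ is not itself inner-most, and one must recognise that it suffices to pass to an inner-most core dominating its \corenessvec, because the rest of the argument never needs more than $\mu(\SLinnermost,\ell^{(\mu)})$.
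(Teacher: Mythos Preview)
Your proof is correct and follows essentially the same approach as the paper: replace $\SLinnermost$ by an inner-most core $\SLinnermost_{\textsc{im}}$ whose coreness vector dominates (in particular in the $\ell^{(\mu)}$-component) that of $\SLinnermost$, then re-run the chain of inequalities in Theorem~\ref{th:densestapproximationfactor} verbatim. Your existence argument for $C'$ via a maximal element in the domination order is in fact more explicit than the paper's, which simply asserts that such an inner-most core exists and that ``the proof of Theorem~\ref{th:densestapproximationfactor} holds as is'' after the substitution.
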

\begin{proof}
Let $\SLinnermost_{\textsc{im}} \in \mathbf{C}_{\textsc{im}}$ be an inner-most core of $G$ whose coreness vector has a component equal to $\ell^{(\mu)}$.
It is easy to see that the result in Lemma~\ref{lemma:SLinnermost} holds for $C^*_{\textsc{im}}$ and $\SLinnermost_{\textsc{im}}$ too,  i.e., becoming $\density(C^*_{\textsc{im}}) \geq \density(\SLinnermost_{\textsc{im}})$, while the proof of Theorem~\ref{th:densestapproximationfactor} holds as is, by simply replacing $C^*$ with $C^*_{\textsc{im}}$ and $\SLinnermost$ with $\SLinnermost_{\textsc{im}}$.
\end{proof}

%When $\beta=0$ and $\hat{L} = L$ (instead of being any subset of $L$)
% Problem~\ref{prob:mldensestsubgraph} corresponds to the densest-common-subgraph problem formulated by Jethava~\emph{et~al.}~\cite{jethava2015finding}, which asks for a subgraph maximizing the minimum average-degree density over all layers.
%It is straightforward to see that Theorem~\ref{th:densestapproximationfactor} carries over to this case and thus the proposed \densestalg\ algorithm achieves a $\frac{1}{2}$-approximation guarantee for the Jethava~\emph{et~al.}'s formulation.

Finally, we observe that the result in Theorem~\ref{th:densestapproximationfactor} carries over to the \textsc{Min-Avg Densest Common Subgraph} (DCS-MA) problem studied in~\cite{charikar2018finding,jethava2015finding,reinthal2016finding,semertzidis2016best} as well, as that problem can be reduced to our \mldensestsubgraph\ problem (as shown in Theorem~\ref{th:densesthard}).
%Indeed, denoting by $\density_{\textsc{dcs-ma}} = \min_{\ell \in L}\frac{|E_\ell[S]|}{|S|}$ the objective function underlying the DCS-MA problem, a simple variant of our \densestalg\ algorithm that selects the multilayer core maximizing $\density_{\textsc{dcs-ma}}$ achieves a $\frac{1}{2}$-approximation guarantee for DCS-MA.
%This can easily be checked by replacing function $\density$ with $\density_{\textsc{dcs-ma}}$ in the proof of Theorem~\ref{th:densestapproximationfactor}.

\begin{figure}[t!]
%\vspace{-4mm}
\centering
\begin{tabular}{cc}
\includegraphics[width=0.3\columnwidth]{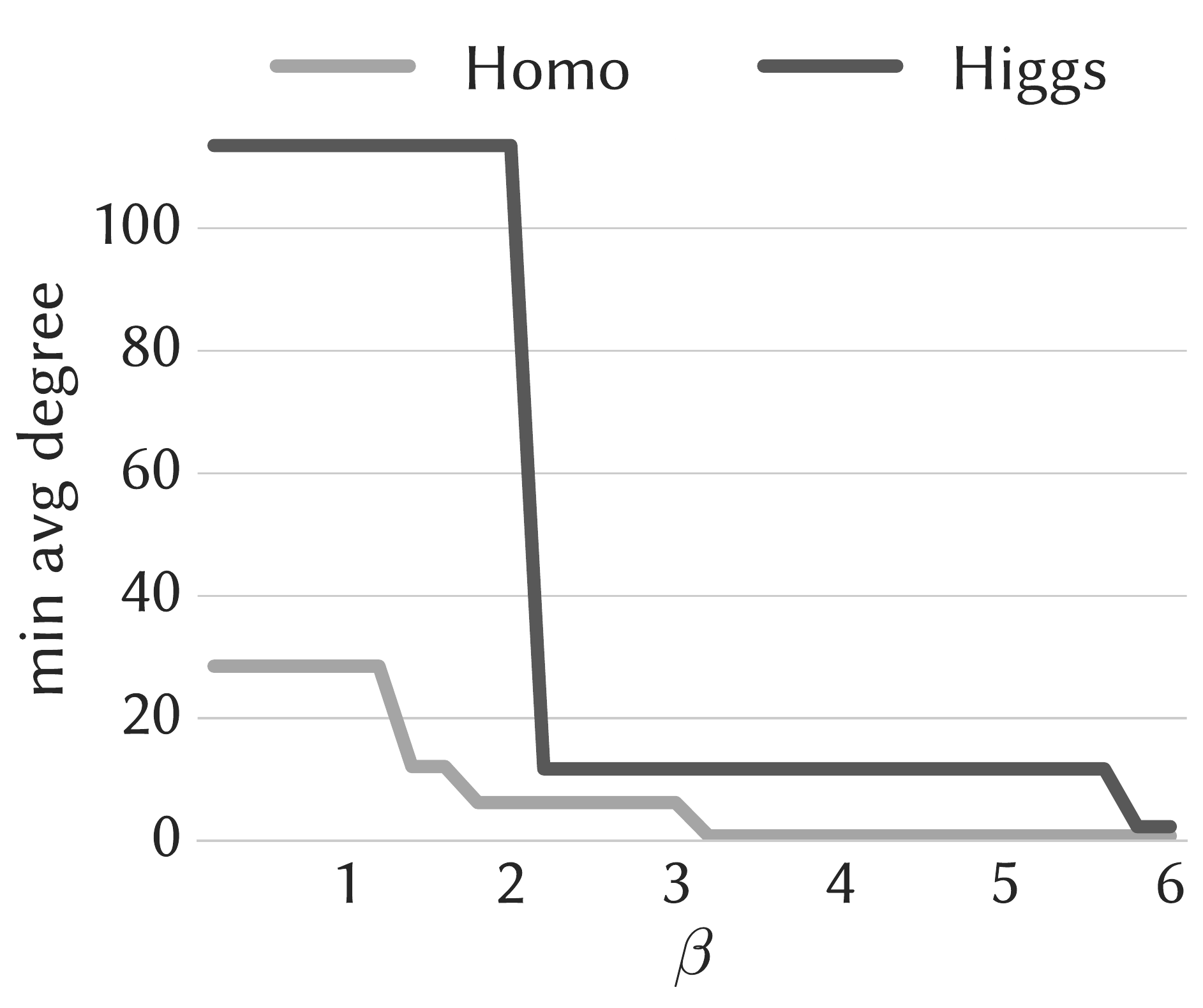} & \includegraphics[width=0.3\columnwidth]{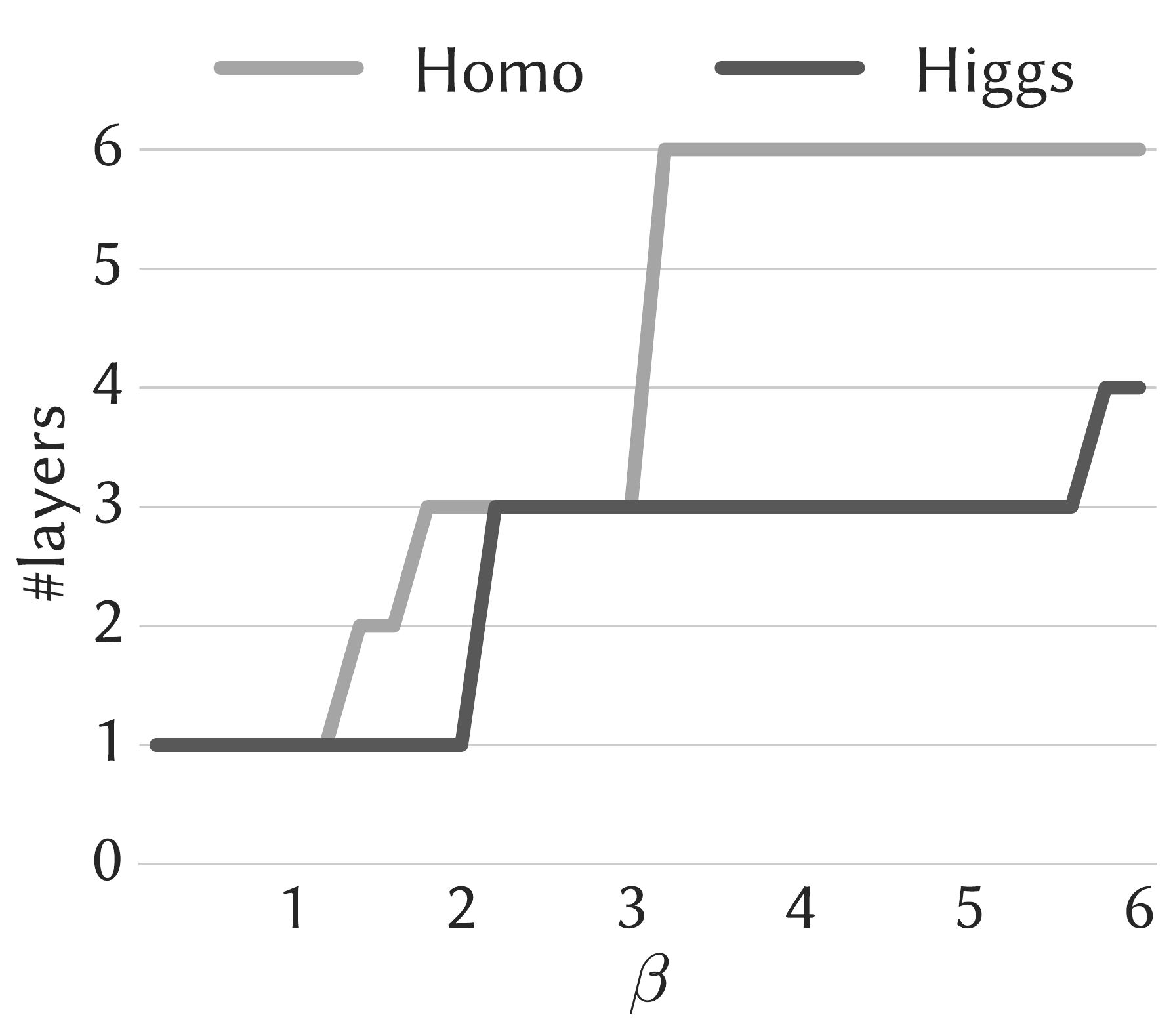}\\
\includegraphics[width=0.3\columnwidth]{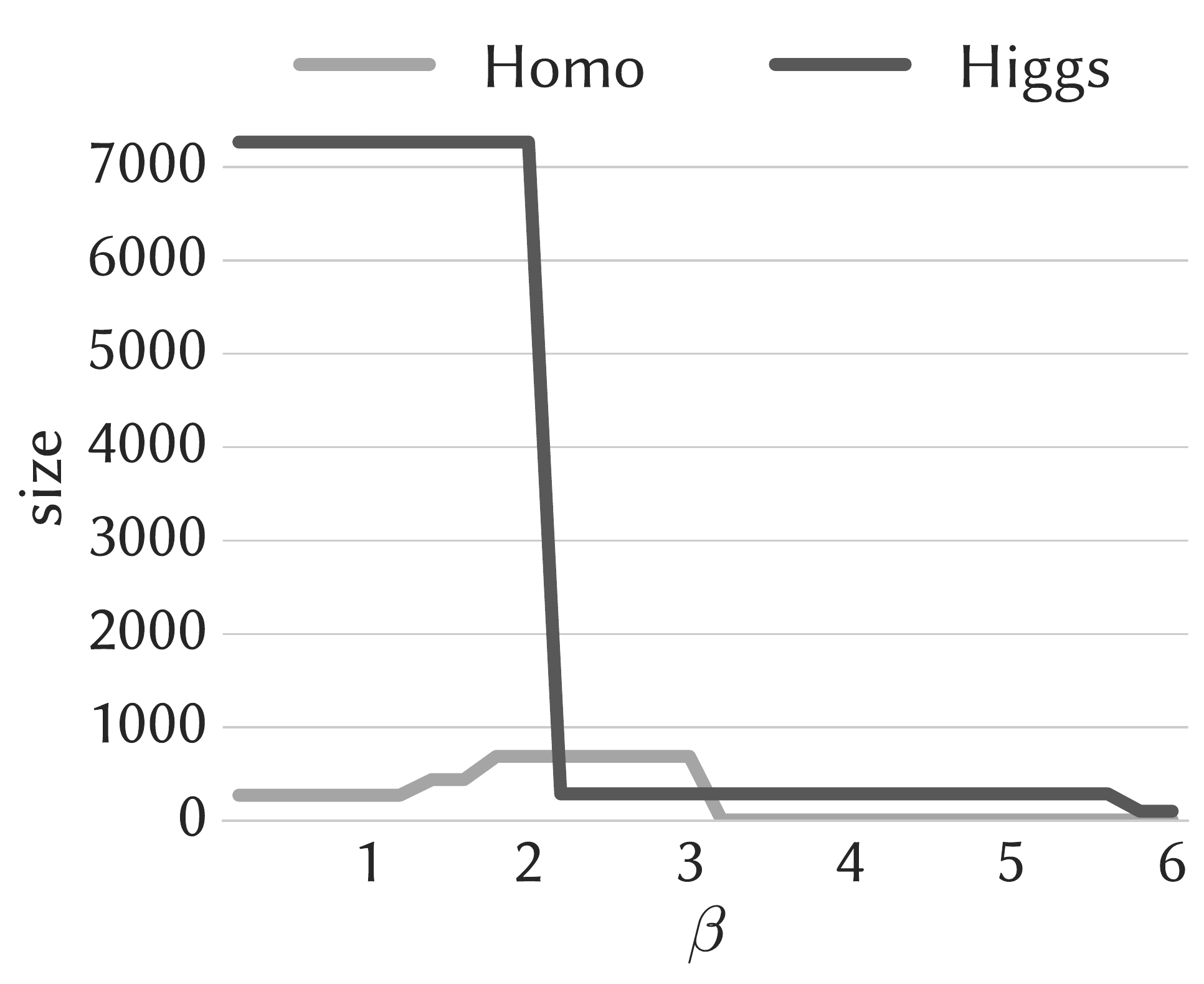} & \includegraphics[width=0.3\columnwidth]{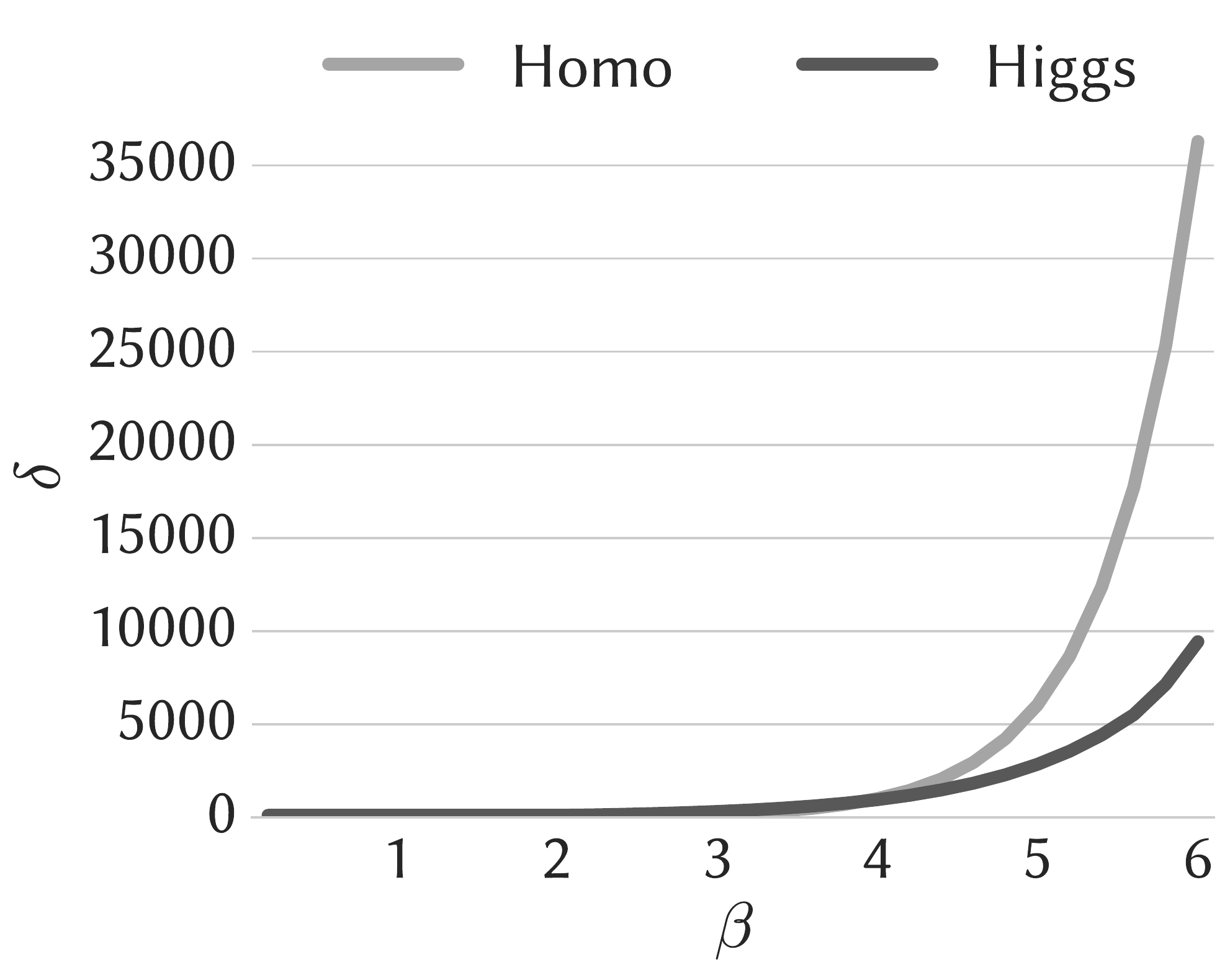}\\
\end{tabular}
%\vspace{-5mm}
\caption{\label{fig:densest}   Multilayer densest-subgraph extraction (\textsf{Homo} and \textsf{Higgs} datasets): minimum average-degree density in a layer, number of selected layers, size, and objective-function value $\density$ of the output densest subgraphs with varying $\beta$.}
%%\vspace{-3mm}
\end{figure}

%\spara{Experiments.}
\subsection{Experimental results}
We experimentally evaluate our \densestalg\ algorithm (Algorithm~\ref{alg:densest}) on the datasets in Table~\ref{tab:datasets}.
Figure~\ref{fig:densest} reports the results  -- minimum average-degree density in a layer, number of selected layers, size,  objective-function value $\delta$ --
on the \textsf{Homo} and \textsf{Higgs} datasets, with varying $\beta$.
The remaining datasets, which we omit due to space constraints, exhibit similar trends on all measures.

The trends observed in the figure conform to what expected: the smaller $\beta$, the more the objective function privileges solutions with large average-degree density in a few layers (or even just one layer, for $\beta$ close to zero).
The situation is overturned with larger values of $\beta$, where the minimum average-degree density drops significantly, while the number of selected layers stands at $6$ for \textsf{Homo} and $4$ for \textsf{Higgs}.
In-between $\beta$ values lead to a balancing of the two terms of the objective function, thus giving more interesting solutions.
Also, by definition, $\density$ as a function of $\beta$ draws exponential curves.

%The size of the solutions in function of $\beta$ shows different behaviors between the two datasets: on \textsf{Homo}, the curve reaches the maximum between $\beta = 1.8$ and $\beta = 3.2$ and next decreases to the minimum, while, on \textsf{Higgs}, the size drops with the growing of $\beta$.
%Finally, by definition, $\delta$ in function of $\beta$ draws exponential curves.

Finally, as anecdotal evidence of the output of Algorithm~\ref{alg:densest}, in Figure~\ref{fig:dblpcase} we report the densest subgraph extracted from \textsf{DBLP}.
The subgraph contains 10 vertices and $5$ layers automatically selected by the objective function $\density$.
The minimum average-degree density is encountered on the layers corresponding to topics ``\emph{graph}'' and ``\emph{algorithm}'' (green and yellow layers in the figure), and is equal to $1.2$. The objective-function value is $\density = 41.39$.
Note that the subgraph is composed of two connected components.
In fact, like the single-layer case, multilayer cores are not necessarily connected.
%The subgraph in Figure~\ref{fig:dblpcase} is composed of two connected components of $8$ and $2$ vertices, respectively.
%Immediately stands out how the vertices of the largest connected component are extremely connected in each layer, while the other connected component, even if of cardinality $2$, forms a clique in each of the $5$ layers.

\begin{figure}[t!]
%\vspace{-1mm}
\centering
\includegraphics[width=0.6\columnwidth]{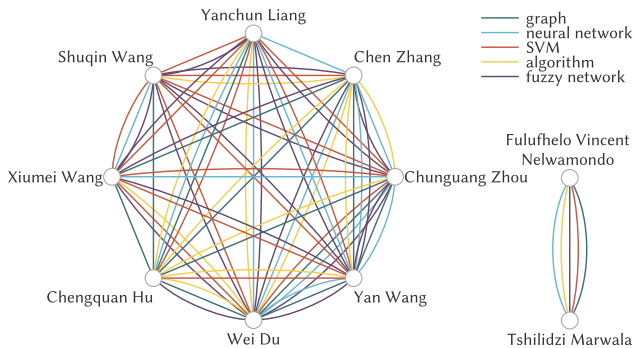}
%\vspace{-3mm}
\caption{\label{fig:dblpcase}   Multilayer densest subgraph extracted by Algorithm~\ref{alg:densest} from the \textsf{DBLP} dataset ($\beta = 2.2$).}
%\vspace{2mm}
\end{figure}

\section{Multilayer quasi-cliques}
\label{sec:quasicliques}
%!TEX root = tkdd_revision.tex
Another interesting insight into the notion of multilayer cores is about their relationship with (quasi-)cliques.
In single-layer graphs it is well-known that cores can be exploited to speed-up the problem of finding cliques, as a clique of size $k$ is guaranteed to be contained into the $(k-1)$-core.
Interestingly, a similar relationship holds in the multilayer context too.
Given a multilayer graph $G = (V,E,L)$, a layer $\ell \in L$, and a real number $\gamma \in (0,1]$, a subgraph $G[S] = (S \subseteq V, E[S], L)$ of $G$ is said to be a $\gamma$\emph{-quasi-clique} in layer $\ell$ if all its vertices have at least $\gamma (|S| -1)$ neighbors in layer $\ell$ within $S$, i.e., $\forall u \in S : deg_S(u, \ell) \geq \gamma (|S| -1)$.
Jiang~\emph{et~al.}~\cite{jiang2009mining} study the problem of extracting \emph{frequent cross-graph quasi-cliques}:\footnote{The input in~\cite{jiang2009mining} has the form of a set of graphs sharing the same vertex set, which is clearly fully equivalent to the notion of multilayer graph considered in this work.} given a multilayer graph $G = (V,E,L)$,
a function $\Gamma : L \rightarrow (0,1]$ assigning a real value to every layer in $L$, a real number $\mbox{\emph{min\_sup}} \in (0,1]$, and an integer $\mbox{\emph{min\_size}} > 1$, find all maximal subgraphs $G[S]$ of $G$ of size larger than  \emph{min\_size} such that there exist at least $\mbox{\emph{min\_sup}}\times|L|$ layers $\ell$ for which $G[S]$ is a $\Gamma(\ell)$-quasi-clique.
%\footnote{The Jiang~\emph{et~al.}'s problem actually takes as input a set of graphs defined over the same set of vertices. This is clearly equivalent to the notion of multilayer graph we consider in this work.}

The following theorem shows that a frequent cross-graph quasi-clique of size~$\geq\!\mbox{\emph{min\_size}}$  is necessarily contained into a $\vec{k}$-core described by a coreness vector $\vec{k}=[k_{\ell}]_{\ell \in L}$ such that there exists a fraction of \emph{min\_sup} layers $\ell$ where $k_{\ell} = \lceil \Gamma(\ell)(\mbox{\emph{min\_size}} - 1) \rceil$.
\begin{mytheorem}\label{th:cliques}
Given a multilayer graph  $G = (V,E,L)$, a real-valued function $\Gamma : L \rightarrow (0,1]$, a real number $\mbox{\emph{min\_sup}} \in (0,1]$, and an integer $\mbox{\emph{min\_size}} > 1$, a frequent cross-graph quasi-clique of $G$ complying with parameters $\Gamma$,  \emph{min\_sup}, and \emph{min\_size} is contained into a $\vec{k}$-core with coreness vector $\vec{k}=[k_{\ell}]_{\ell \in L}$ such that $|\{\ell \in L : k_{\ell} = \lceil \Gamma(\ell)(\mbox{\emph{min\_size}} - 1)\rceil \}| = \lceil \mbox{\emph{min\_sup}} \times |L| \rceil$.
\end{mytheorem}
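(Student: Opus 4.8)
The plan is to take a frequent cross-graph quasi-clique $G[S]$, single out the layers that witness the quasi-clique property, directly exhibit a coreness vector whose degree constraints $S$ already satisfies, and then conclude membership from the maximality clause of Definition~\ref{def:mlkcores}.

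First I would fix notation. Let $G[S]$ be a frequent cross-graph quasi-clique complying with $\Gamma$, \emph{min\_sup}, and \emph{min\_size}; write $K = |S|$, so that $K \geq \mbox{\emph{min\_size}}$, and let $\hat{L} \subseteq L$ be the set of layers $\ell$ for which $G[S]$ is a $\Gamma(\ell)$-quasi-clique. By the definition of Problem~\ref{prob:cgqc}, $|\hat{L}| \geq \mbox{\emph{min\_sup}} \times |L|$, and since $|\hat{L}|$ is an integer this gives $|\hat{L}| \geq \lceil \mbox{\emph{min\_sup}} \times |L| \rceil$ (note that $\lceil \mbox{\emph{min\_sup}}\times|L|\rceil \geq 1$ because $\mbox{\emph{min\_sup}} \in (0,1]$ and $|L|\geq 1$).

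Next I would establish the degree lower bounds. Fix $\ell \in \hat{L}$ and $u \in S$. By definition of $\Gamma(\ell)$-quasi-clique, $\deg_S(u,\ell) \geq \Gamma(\ell)(K-1) \geq \Gamma(\ell)(\mbox{\emph{min\_size}}-1)$, where the second inequality uses $K \geq \mbox{\emph{min\_size}}$ and $\Gamma(\ell) > 0$; since $\deg_S(u,\ell)$ is an integer, $\deg_S(u,\ell) \geq \lceil \Gamma(\ell)(\mbox{\emph{min\_size}}-1)\rceil$. Taking the minimum over $u \in S$, $\mu(S,\ell) \geq \lceil \Gamma(\ell)(\mbox{\emph{min\_size}}-1)\rceil$ for every $\ell \in \hat{L}$.

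Finally I would build the coreness vector and conclude. Choose any subset $\hat{L}' \subseteq \hat{L}$ with $|\hat{L}'| = \lceil \mbox{\emph{min\_sup}}\times|L|\rceil$ (possible by the cardinality bound above), and set $k_\ell = \lceil \Gamma(\ell)(\mbox{\emph{min\_size}}-1)\rceil$ for $\ell \in \hat{L}'$ and $k_\ell = 0$ for $\ell \notin \hat{L}'$. The degree bounds give $\mu(S,\ell) \geq k_\ell$ for all $\ell \in L$, so $G[S]$ meets the defining condition of the $\vec{k}$-core; by the maximality part of Definition~\ref{def:mlkcores} (equivalently, Fact~\ref{fact:nestedmlcores} applied to $S$ and $C_{\vec{k}}$), $S \subseteq C_{\vec{k}}$. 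To see that the count is exactly as claimed, note that $\mbox{\emph{min\_size}} > 1$ and $\Gamma(\ell) > 0$ force $\Gamma(\ell)(\mbox{\emph{min\_size}}-1) > 0$, hence $\lceil \Gamma(\ell)(\mbox{\emph{min\_size}}-1)\rceil \geq 1 > 0 = k_\ell$ for every $\ell \notin \hat{L}'$; therefore $\{\ell \in L : k_\ell = \lceil \Gamma(\ell)(\mbox{\emph{min\_size}}-1)\rceil\} = \hat{L}'$, which has size $\lceil \mbox{\emph{min\_sup}}\times|L|\rceil$, as required. I expect the only slightly delicate point to be pinning the count down to exactly $\lceil \mbox{\emph{min\_sup}}\times|L|\rceil$ rather than merely a lower bound: this is handled by selecting precisely that many witnessing layers for the nonzero components and by the observation that a zero component can never coincide with the strictly positive ceiling value. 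Everything else — the quasi-clique-to-degree inequalities, the integer rounding, and the containment direction via maximality — is routine.
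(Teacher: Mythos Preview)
Your proof is correct and covers the same mathematical content as the paper's, but you present it as a direct construction whereas the paper argues by contradiction: it assumes $S$ is contained in no such $\vec{k}$-core, infers the existence of a vertex $u$ with $|\{\ell : \deg_S(u,\ell) \geq \Gamma(\ell)(\mbox{\emph{min\_size}}-1)\}| < \mbox{\emph{min\_sup}}\times|L|$, and then uses $|S|\geq \mbox{\emph{min\_size}}$ to contradict the quasi-clique definition. Your direct route is arguably tighter: you explicitly exhibit the coreness vector, you justify the passage $S \subseteq C_{\vec{k}}$ via maximality, and you handle the ``exactly $\lceil \mbox{\emph{min\_sup}}\times|L|\rceil$'' count carefully (selecting a subset $\hat{L}'$ of precisely that size and checking that zero components cannot collide with the strictly positive ceilings), whereas the paper's contradiction leaves the step from ``not contained in any such core'' to the per-vertex degree failure somewhat implicit. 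Substantively the two arguments are contrapositives of one another; your version simply makes the bookkeeping more visible.
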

\begin{proof}
Assume that  a cross-graph quasi-clique $S$ of $G$ complying with parameters $\Gamma$,  \emph{min\_sup}, and \emph{min\_size}
is not contained into any $\vec{k}$-core with coreness vector $\vec{k}=[k_{\ell}]_{\ell \in L}$ such that $|\{\ell \in L : k_{\ell} = \lceil \Gamma(\ell)(\mbox{\emph{min\_size}} - 1)\rceil \}| = \lceil \mbox{\emph{min\_sup}} \times |L| \rceil$.
This means that $S$ contains a vertex $u$ such that
%$|\{\ell \in L : deg_S(u, \ell) < \Gamma(\ell)(K-1)\}| > \mbox{\emph{min\_sup}} \times |L|$, i.e.,
$|\{\ell \in L : deg_S(u, \ell) \geq \Gamma(\ell)(\mbox{\emph{min\_size}} -1)\}| < \mbox{\emph{min\_sup}} \times |L|$, which means that 
$|\{\ell \in L : deg_S(u, \ell) \geq \Gamma(\ell)(|S|-1)\}| < \mbox{\emph{min\_sup}} \times |L|$ as well, since $|S| \geq \mbox{\emph{min\_size}}$.
This violates the definition of frequent cross-graph quasi-clique.
\end{proof}

As a simple corollary, the computation of frequent cross-graph quasi-cliques can therefore be circumstantiated to the subgraph given by the union of all multilayer cores complying with the condition stated in Theorem~\ref{th:cliques}.

\begin{mycorollary}\label{cor:cliques}
Given a multilayer graph  $G = (V,E,L)$, a real-valued function $\Gamma : L \rightarrow (0,1]$, a real number $\mbox{\emph{min\_sup}} \in (0,1]$, and an integer $\mbox{\emph{min\_size}} > 1$, let $G' = (V',E',L)$ the subgraph of $G$ given by the union of all multilayer cores of $G$ complying with Theorem~\ref{th:cliques}.
It holds that all cross-graph quasi-cliques of $G$ complying with parameters $\Gamma$,  \emph{min\_sup}, and \emph{min\_size} are contained into $G'$.
\end{mycorollary}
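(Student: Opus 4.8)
The plan is to obtain Corollary~\ref{cor:cliques} as an immediate repackaging of Theorem~\ref{th:cliques}: the theorem already carries out all the combinatorial work, and the corollary only needs to re-express that containment in terms of the single subgraph $G'$.

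First I would fix an arbitrary cross-graph quasi-clique $S$ of $G$ complying with the parameters $\Gamma$, \emph{min\_sup}, and \emph{min\_size}. Applying Theorem~\ref{th:cliques} to $S$ yields a \corenessvec\ $\vec{k} = [k_{\ell}]_{\ell \in L}$ with $|\{\ell \in L : k_{\ell} = \lceil \Gamma(\ell)(\mbox{\emph{min\_size}} - 1)\rceil\}| = \lceil \mbox{\emph{min\_sup}} \times |L| \rceil$ such that $S \subseteq C_{\vec{k}}$ at the level of vertex sets. By the very wording of Theorem~\ref{th:cliques}, this particular vector $\vec{k}$ is one of the \corenessvec{s} singled out in the definition of $G' = (V',E',L)$, hence $C_{\vec{k}}$ is one of the cores whose union forms $G'$; by construction of $G'$ we therefore have $C_{\vec{k}} \subseteq V'$ and $E[C_{\vec{k}}] \subseteq E'$. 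Combining with $S \subseteq C_{\vec{k}}$ gives $S \subseteq V'$, and, since every edge of $G$ induced by $S$ in any layer is also induced by $C_{\vec{k}}$ and hence lies in $E'$, the whole induced subgraph $G[S]$ is contained in $G'$. As $S$ was arbitrary, the corollary follows.

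I do not anticipate a genuine obstacle here: the statement is essentially a corollary by name. The only point deserving a one-line check is that ``contained into $G'$'' must be read at the level of induced subgraphs (vertices together with their induced edges, across all layers) rather than merely vertex sets, and this is immediate because $E'$ is the union of the edge sets of the complying cores and $S$ lies inside one of them, namely $C_{\vec{k}}$. Should one prefer a slightly more structural phrasing, one could additionally note via Fact~\ref{fact:nestedmlcores} that enlarging a \corenessvec\ only shrinks the corresponding core, so that it suffices to take the union over complying cores with componentwise-minimal vectors; but this refinement is not needed for the correctness of the statement.
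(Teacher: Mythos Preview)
Your proposal is correct and matches the paper's approach exactly: the paper does not even give an explicit proof for this corollary, introducing it with the sentence ``As a simple corollary\ldots'' and treating it as an immediate consequence of Theorem~\ref{th:cliques}. Your argument---fix a quasi-clique $S$, invoke Theorem~\ref{th:cliques} to place it inside some complying core $C_{\vec{k}}$, and observe that $C_{\vec{k}}$ is one of the pieces of the union defining $G'$---is precisely the one-line justification the paper leaves implicit.
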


The finding in Corollary~\ref{cor:cliques} can profitably be exploited to have a more efficient extraction of frequent cross-graph quasi-cliques.
Specifically, the idea is to 
($i$) compute \emph{all}  multilayer cores of the input graph $G$ (including the non-distinct ones, as the condition stated in Theorem~\ref{th:cliques} refers to not necessarily maximal coreness vectors); 
($ii$) process all multilayer cores of $G$ one by one, retain only the ones complying with Theorem~\ref{th:cliques}, and compute the subgraph $G'$ induced by the union of all such cores;
($iii$) run any algorithm for frequent cross-graph quasi-cliques on $G'$.
Based on the above theoretical results, such a procedure is guaranteed to be sound and complete, and it is expected to provide a significant speed-up, as $G'$ is expected to be much smaller than the original graph $G$. 

%!TEX root = tkdd_revision.tex

%%%%%%%%%%%%%
%%% SacchCere %%%
%%%%%%%%%%%%%
\begin{table}
\centering
\caption{Comparison of the runtime of the efficient extraction of frequent cross-graph quasi-cliques by Corollary~\ref{cor:cliques} and of the original algorithm~\cite{jiang2009mining}, for the \textsf{SacchCere} dataset.
The evaluation is proposed varying one of the parameters, i.e., $\Gamma$, $\mbox{\emph{min\_sup}}$, and $\mbox{\emph{min\_size}}$, at a time.
The number of solution quasi-cliques and the number of vertices $|V'|$ of the subgraph $G'$ are also reported. For each run of the experiment, the smallest runtime is bolded.
\label{tab:fcgqc_sacchcere}}

\centerline{
\setlength{\tabcolsep}{2.25pt}
\begin{tabular}{p{6pt}p{6pt}p{6pt}p{6pt}p{6pt}p{6pt}p{8pt}|c|c|c|c|c|c}
\multicolumn{7}{c}{} & \multicolumn{1}{c}{} & \multicolumn{1}{c}{} & \multicolumn{1}{c}{\# solution} & \multicolumn{1}{c}{} & \multicolumn{2}{c}{runtime (s)}\\\cline{12-13}
\multicolumn{7}{c}{$\Gamma$} & \multicolumn{1}{c}{$\mbox{\emph{min\_sup}}$} & \multicolumn{1}{c}{$\mbox{\emph{min\_size}}$} & \multicolumn{1}{c}{quasi-cliques} & \multicolumn{1}{c|}{$|V'|$} & \multicolumn{1}{c}{Corollary~\ref{cor:cliques}} & \multicolumn{1}{c}{\cite{jiang2009mining}}\\
\cline{1-13}
$1$ & $1$ & $1$ & $1$ & $.2$ & $.2$ & $1$ & $0.5$ & $6$ & $2$ & $371$ & $\mathbf{3}$ & $169$\\\cline{1-7}\cline{10-13}
$.9$ & $.9$ & $.9$ & $.9$ & $.2$ & $.2$ & $.9$ &  \multicolumn{1}{c|}{} & \multicolumn{1}{c|}{} & $2$ & $371$ & $\mathbf{25}$ & $17\,561$\\\cline{1-7}\cline{10-13}
$.8$ & $.8$ & $.8$ & $.8$ & $.2$ & $.2$ & $.8$ & \multicolumn{1}{c|}{} & \multicolumn{1}{c|}{} & $6$ & $1\,196$ & $\mathbf{734}$ & $22\,932$\\\cline{1-7}\cline{10-13}
$.7$ & $.7$ & $.7$ & $.7$ & $.2$ & $.2$ & $.7$ & \multicolumn{1}{c|}{} & \multicolumn{1}{c|}{} & $6$ & $1\,196$ & $\mathbf{728}$ & $23\,376$\\\cline{1-7}\cline{10-13}
$.6$ & $.6$ & $.6$ & $.6$ & $.2$ & $.2$ & $.6$ & \multicolumn{1}{c|}{} & \multicolumn{1}{c|}{} & $59$ & $2\,300$ & $\mathbf{5\,200}$ & $28\,948$\\\cline{1-7}\cline{10-13}
$.5$ & $.5$ & $.5$ & $.5$ & $.2$ & $.2$ & $.5$ & \multicolumn{1}{c|}{} & \multicolumn{1}{c|}{} & $59$ & $2\,300$ & $\mathbf{5\,123}$ & $29\,677$\\
\cline{1-13}
\end{tabular}
}
\vspace{0.5cm}

\centerline{
\setlength{\tabcolsep}{2.25pt}
\begin{tabular}{p{6pt}p{6pt}p{6pt}p{6pt}p{6pt}p{6pt}p{8pt}|c|c|c|c|c|c}
\multicolumn{7}{c}{} & \multicolumn{1}{c}{} & \multicolumn{1}{c}{} & \multicolumn{1}{c}{\# solution} & \multicolumn{1}{c}{} & \multicolumn{2}{c}{runtime (s)}\\\cline{12-13}
\multicolumn{7}{c}{$\Gamma$} & \multicolumn{1}{c}{$\mbox{\emph{min\_sup}}$} & \multicolumn{1}{c}{$\mbox{\emph{min\_size}}$} & \multicolumn{1}{c}{quasi-cliques} & \multicolumn{1}{c|}{$|V'|$} & \multicolumn{1}{c}{Corollary~\ref{cor:cliques}} & \multicolumn{1}{c}{\cite{jiang2009mining}}\\
\cline{1-13}
$.5$ & $.5$ & $.5$ & $.5$ & $.2$ & $.2$ & $.5$ & $1$ & $3$ & $2$ & $152$ & $\mathbf{2}$ & $281$\\\cline{8-8}\cline{10-13}
\multicolumn{7}{c|}{} & $0.9$ & \multicolumn{1}{c|}{} & $2$ & $152$ & $\mathbf{2}$ & $282$\\\cline{8-8}\cline{10-13}
\multicolumn{7}{c|}{} & $0.8$ & \multicolumn{1}{c|}{} & $28$ & $940$ & $\mathbf{23}$ & $292$\\\cline{8-8}\cline{10-13}
\multicolumn{7}{c|}{} & $0.7$ & \multicolumn{1}{c|}{} & $323$ & $3\,271$ & $\mathbf{205}$ & $411$\\\cline{8-8}\cline{10-13}
\multicolumn{7}{c|}{} & $0.6$ & \multicolumn{1}{c|}{} & $323$ & $3\,271$ & $\mathbf{203}$ & $414$\\\cline{8-8}\cline{10-13}
\multicolumn{7}{c|}{} & $0.5$ & \multicolumn{1}{c|}{} & $1\,630$ & $4\,581$ & $\mathbf{2\,569}$ & $3\,075$\\
\cline{1-13}
\end{tabular}
}
\vspace{0.5cm}

\centerline{
\setlength{\tabcolsep}{2.25pt}
\begin{tabular}{p{6pt}p{6pt}p{6pt}p{6pt}p{6pt}p{6pt}p{8pt}|c|c|c|c|c|c}
\multicolumn{7}{c}{} & \multicolumn{1}{c}{} & \multicolumn{1}{c}{} & \multicolumn{1}{c}{\# solution} & \multicolumn{1}{c}{} & \multicolumn{2}{c}{runtime (s)}\\\cline{12-13}
\multicolumn{7}{c}{$\Gamma$} & \multicolumn{1}{c}{$\mbox{\emph{min\_sup}}$} & \multicolumn{1}{c}{$\mbox{\emph{min\_size}}$} & \multicolumn{1}{c}{quasi-cliques} & \multicolumn{1}{c|}{$|V'|$} & \multicolumn{1}{c}{Corollary~\ref{cor:cliques}} & \multicolumn{1}{c}{\cite{jiang2009mining}}\\
\cline{1-13}
$.5$ & $.5$ & $.5$ & $.5$ & $.2$ & $.2$ & $.5$ & $0.5$ & $7$ & $27$ & $2\,254$ & $\mathbf{5\,606}$ & $34\,904$\\\cline{9-13}
\multicolumn{7}{c|}{} & \multicolumn{1}{c|}{} & $6$ & $59$ & $2\,300$ & $\mathbf{5\,123}$ & $29\,677$\\\cline{9-13}
\multicolumn{7}{c|}{} & \multicolumn{1}{c|}{} & $5$ & $357$ & $3\,363$ & $\mathbf{4\,493}$ & $21\,206$\\\cline{9-13}
\multicolumn{7}{c|}{} & \multicolumn{1}{c|}{} & $4$ & $378$ & $3\,363$ & $\mathbf{3\,704}$ & $15\,465$\\\cline{9-13}
\multicolumn{7}{c|}{} & \multicolumn{1}{c|}{} & $3$ & $1\,630$ & $4\,581$ & $\mathbf{2\,569}$ & $3\,075$\\
\cline{1-13}
\end{tabular}
}

\end{table}

%%%%%%%%%%
%%% DBLP %%%
%%%%%%%%%%
\begin{table}
\centering
\caption{Comparison of the runtime of the efficient extraction of frequent cross-graph quasi-cliques by Corollary~\ref{cor:cliques} and of the original algorithm~\cite{jiang2009mining}, for the \textsf{DBLP} dataset.
The evaluation is proposed varying one of the parameters, i.e., $\Gamma$, $\mbox{\emph{min\_sup}}$, and $\mbox{\emph{min\_size}}$, at a time.
The number of solution quasi-cliques and the number of vertices $|V'|$ of the subgraph $G'$ are also reported.
$++$ indicates runtime longer than $259\,200$ seconds (i.e., $3$ days).
For each run of the experiment, the smallest runtime is bolded.
\label{tab:fcgqc_dblp}}

\centerline{
\setlength{\tabcolsep}{2.25pt}
\begin{tabular}{p{6pt}p{6pt}p{6pt}p{6pt}p{6pt}p{6pt}p{6pt}p{6pt}p{6pt}p{8pt}|c|c|c|c|c|c}
\multicolumn{10}{c}{} & \multicolumn{1}{c}{} & \multicolumn{1}{c}{} & \multicolumn{1}{c}{\# solution} & \multicolumn{1}{c}{} & \multicolumn{2}{c}{runtime (s)}\\\cline{15-16}
\multicolumn{10}{c}{$\Gamma$} & \multicolumn{1}{c}{$\mbox{\emph{min\_sup}}$} & \multicolumn{1}{c}{$\mbox{\emph{min\_size}}$} & \multicolumn{1}{c}{quasi-cliques} & \multicolumn{1}{c|}{$|V'|$} & \multicolumn{1}{c}{Corollary~\ref{cor:cliques}} & \multicolumn{1}{c}{\cite{jiang2009mining}}\\
\cline{1-16}
$1$ & $1$ & $1$ & $1$ & $1$ & $1$ & $1$ & $1$ & $1$ & $1$ & $0.2$ & $8$ & $2$ & $18$ & $\mathbf{0.2}$ & $26\,496$\\\cline{1-10}\cline{13-16}
$.9$ & $.9$ & $.9$ & $.9$ & $.9$ & $.9$ & $.9$ & $.9$ & $.9$ & $.9$ & \multicolumn{1}{c|}{} & \multicolumn{1}{c|}{} & $2$ & $18$ & $\mathbf{0.2}$ & $26\,112$\\\cline{1-10}\cline{13-16}
$.8$ & $.8$ & $.8$ & $.8$ & $.8$ & $.8$ & $.8$ & $.8$ & $.8$ & $.8$ & \multicolumn{1}{c|}{} & \multicolumn{1}{c|}{} & $13$ & $75$ & $\mathbf{0.3}$ & $26\,867$\\\cline{1-10}\cline{13-16}
$.7$ & $.7$ & $.7$ & $.7$ & $.7$ & $.7$ & $.7$ & $.7$ & $.7$ & $.7$ & \multicolumn{1}{c|}{} & \multicolumn{1}{c|}{} & $18$ & $196$ & $\mathbf{1}$ & $27\,387$\\\cline{1-10}\cline{13-16}
$.6$ & $.6$ & $.6$ & $.6$ & $.6$ & $.6$ & $.6$ & $.6$ & $.6$ & $.6$ & \multicolumn{1}{c|}{} & \multicolumn{1}{c|}{} & $18$ & $196$ & $\mathbf{1}$ & $27\,084$\\\cline{1-10}\cline{13-16}
$.5$ & $.5$ & $.5$ & $.5$ & $.5$ & $.5$ & $.5$ & $.5$ & $.5$ & $.5$ & \multicolumn{1}{c|}{} & \multicolumn{1}{c|}{} & $121$ & $801$ & $\mathbf{18}$ & $31\,508$\\
\cline{1-16}
\end{tabular}
}
\vspace{1.5cm}

\centerline{
\setlength{\tabcolsep}{2.25pt}
\begin{tabular}{p{6pt}p{6pt}p{6pt}p{6pt}p{6pt}p{6pt}p{6pt}p{6pt}p{6pt}p{8pt}|c|c|c|c|c|c}
\multicolumn{10}{c}{} & \multicolumn{1}{c}{} & \multicolumn{1}{c}{} & \multicolumn{1}{c}{\# solution} & \multicolumn{1}{c}{} & \multicolumn{2}{c}{runtime (s)}\\\cline{15-16}
\multicolumn{10}{c}{$\Gamma$} & \multicolumn{1}{c}{$\mbox{\emph{min\_sup}}$} & \multicolumn{1}{c}{$\mbox{\emph{min\_size}}$} & \multicolumn{1}{c}{quasi-cliques} & \multicolumn{1}{c|}{$|V'|$} & \multicolumn{1}{c}{Corollary~\ref{cor:cliques}} & \multicolumn{1}{c}{\cite{jiang2009mining}}\\
\cline{1-16}
$.5$ & $.5$ & $.5$ & $.5$ & $.5$ & $.5$ & $.5$ & $.5$ & $.5$ & $.5$ & $0.5$ & $3$ & $8$ & $182$ & $\mathbf{0.2}$ & $26\,969$\\\cline{11-11}\cline{13-16}
\multicolumn{10}{c|}{} & $0.4$ & \multicolumn{1}{c|}{} & $195$ & $2\,375$ & $\mathbf{1}$ & $26\,964$\\\cline{11-11}\cline{13-16}
\multicolumn{10}{c|}{} & $0.3$ & \multicolumn{1}{c|}{} & $3\,394$ & $22\,659$ & $\mathbf{210}$ & $32\,981$\\\cline{11-11}\cline{13-16}
\cline{1-16}
\end{tabular}
}
\vspace{1.5cm}

\centerline{
\setlength{\tabcolsep}{2.25pt}
\begin{tabular}{p{6pt}p{6pt}p{6pt}p{6pt}p{6pt}p{6pt}p{6pt}p{6pt}p{6pt}p{8pt}|c|c|c|c|c|c}
\multicolumn{10}{c}{} & \multicolumn{1}{c}{} & \multicolumn{1}{c}{} & \multicolumn{1}{c}{\# solution} & \multicolumn{1}{c}{} & \multicolumn{2}{c}{runtime (s)}\\\cline{15-16}
\multicolumn{10}{c}{$\Gamma$} & \multicolumn{1}{c}{$\mbox{\emph{min\_sup}}$} & \multicolumn{1}{c}{$\mbox{\emph{min\_size}}$} & \multicolumn{1}{c}{quasi-cliques} & \multicolumn{1}{c|}{$|V'|$} & \multicolumn{1}{c}{Corollary~\ref{cor:cliques}} & \multicolumn{1}{c}{\cite{jiang2009mining}}\\
\cline{1-16}
$.5$ & $.5$ & $.5$ & $.5$ & $.5$ & $.5$ & $.5$ & $.5$ & $.5$ & $.5$ & $0.2$ & $13$ & $1$ & $75$ & $\mathbf{0.2}$ & $26\,644$\\\cline{12-16}
\multicolumn{10}{c|}{} & \multicolumn{1}{c|}{} & $12$ & $1$ & $75$ & $\mathbf{0.2}$ & $27\,136$\\\cline{12-16}
\multicolumn{10}{c|}{} & \multicolumn{1}{c|}{} & $11$ & $8$ & $196$ & $\mathbf{0.7}$ & $26\,966$\\\cline{12-16}
\multicolumn{10}{c|}{} & \multicolumn{1}{c|}{} & $10$ & $10$ & $196$ & $\mathbf{0.7}$ & $27\,116$\\\cline{12-16}
\multicolumn{10}{c|}{} & \multicolumn{1}{c|}{} & $9$ & $116$ & $801$ & $\mathbf{18}$ & $32\,372$\\\cline{12-16}
\multicolumn{10}{c|}{} & \multicolumn{1}{c|}{} & $8$ & $121$ & $801$ & $\mathbf{18}$ & $31\,508$\\\cline{12-16}
\multicolumn{10}{c|}{} & \multicolumn{1}{c|}{} & $7$ & $1\,292$ & $3\,468$ & $\mathbf{181}$ & $113\,558$\\\cline{12-16}
\multicolumn{10}{c|}{} & \multicolumn{1}{c|}{} & $6$ & $1\,370$ & $3\,468$ & $\mathbf{198}$ & $113\,520$\\\cline{12-16}
\multicolumn{10}{c|}{} & \multicolumn{1}{c|}{} & $5$ & $7\,599$ & $15\,316$ & $\mathbf{3\,790}$ & $++$\\\cline{12-16}
\multicolumn{10}{c|}{} & \multicolumn{1}{c|}{} & $4$ & $8\,578$ & $15\,316$ & $\mathbf{3\,502}$ & $++$\\\cline{12-16}
\cline{1-16}
\end{tabular}
}

\end{table}

\subsection{Experimental results}
We show in Tables~\ref{tab:fcgqc_sacchcere}~and~\ref{tab:fcgqc_dblp} the experimental results about the comparison of the algorithm proposed by Jiang~\emph{et~al.}~\cite{jiang2009mining} and the more efficient extraction of frequent cross-graph quasi-cliques by Corollary~\ref{cor:cliques}.
Table~\ref{tab:fcgqc_sacchcere} refers to the \textsf{SacchCere} dataset, while Table~\ref{tab:fcgqc_dblp} to the \textsf{DBLP} dataset.
To evaluate the effect of the parameters, i.e., the function $\Gamma$, $\mbox{\emph{min\_sup}}$, and $\mbox{\emph{min\_size}}$, on the performance of the two approaches, we vary a parameter at a time keeping the other two fixed.
With regards to the values selected for $\Gamma$, we fix $\Gamma(\ell_5) = \Gamma(\ell_6) = 0.2$ in all the experiments involving the \textsf{SacchCere} dataset, due to the imbalance of the distribution of the edges in favor of the other five layers (i.e., layers $\ell_1,\ldots,\ell_4, \ell_7$).
Instead, given the uniformity of the edge density across the layers of the \textsf{DBLP} dataset, $\Gamma$ is modified coherently for all the layer in this latter case.
In addition to the execution times, for each configuration of the parameters, we also report the number of solution frequent cross-graph quasi-cliques and the number of vertices $|V'|$ of the subgraph $G'$ identified by Corollary~\ref{cor:cliques}.

The first thing to notice is that, in both datasets and for every configuration, our approach is faster than the algorithm by Jiang~\emph{et~al.}~\cite{jiang2009mining}.
The actual speed-up varies with the size of $|V'|$ (with respect to $|V|$) which, in turn, is affected by the mining parameters.
For the \textsf{SacchCere} dataset, we obtain the most extreme cases when varying $\mbox{\emph{min\_sup}}$ (middle table): our approach is able to prune from 30\% ($\mbox{\emph{min\_sup}} = 0.5$) up to 98\% ($\mbox{\emph{min\_sup}} = 1$) of the input multilayer graph.
For the \textsf{DBLP} dataset, the results are even stronger: in the worst case (i.e., $\Gamma(\ell) = 0.5 \ \forall \ell \in L$, $\mbox{\emph{min\_sup}} = 0.3$, and $\mbox{\emph{min\_size}} = 3$)  we prune the 95\% of the original vertex set.
The runtime of both our approach and Jiang~\emph{et~al.}'s~\cite{jiang2009mining} algorithm varies consistently according to parameters and to $|V'|$.
The speed-up that our method reaches ranges from $1.2$ to two orders of magnitude for the \textsf{SacchCere} dataset, and from one order up to six orders of magnitude for the \textsf{DBLP} dataset.

\section{Community search in multilayer networks}
\label{sec:communitysearch}
%!TEX root = tkdd_revision.tex
The idea here is very similar to that of the multilayer densest subgraph.

\setcounter{problem}{4}
\begin{problem}[Multilayer Community Search]
Given a multilayer graph $G=(V,E,L)$, a set of vertices $S \subseteq V$, and a set of layers $\hat{L} \subseteq L$, we define the minimum degree of a vertex in $S$, within the subgraph induced by $S$ and $\hat{L}$ as:
$$
\varphi(S,\hat{L}) = \min_{\ell \in \hat{L}} \min_{u \in S} deg_S(u,\ell).
$$

Given a positive real number $\beta$, we define a real-valued density function $\vartheta : 2^{V} \rightarrow \mathbb{R}^+$ as:
$$
\vartheta(S) = \max_{\hat{L} \subseteq L}  \varphi(S,\hat{L})  |\hat{L}|^{\beta}.
$$

Given a set $V_Q \subseteq V$ of query vertices, find a subgraph containing all the query vertices and maximizing the density function, i.e.,
\begin{equation}\label{eq:cssolution}
S^* = \argmax_{V_Q \subseteq S \subseteq V}  \vartheta(S).
\end{equation}
\end{problem}

Let $\coresset$ be the set of all non-empty multilayer cores of $G$.
For a core $C \in \coresset$ with \corenessvec\ $\vec{k} = [k_{\ell}]_{\ell \in L}$, we define the score
$$
\sigma(C) = \max_{\hat{L} \subseteq L} (\min_{\ell \in \hat{L}} k_{\ell}) |\hat{L}|^{\beta},
$$
and denote by $C^*$ a core that contains all query vertices in $V_Q$ and maximizes the score $\sigma$, i.e.,
\begin{equation}\label{eq:cscore}
C^* = \argmax_{C \in \mathbf{C},  V_Q \subseteq C} \sigma(C).
\end{equation}

As shown in the following theorem, $C^*$ is a (not necessarily unique) {\em exact} solution to Problem~\ref{prob:mlcs}.

\begin{mytheorem}
Given a multilayer graph $G = (V,E,L)$, and a set $V_Q \subseteq V$ of query vertices, let $S^*$ and $C^*$ be the vertex sets defined as in Equation~(\ref{eq:cssolution}) and Equation~(\ref{eq:cscore}), respectively.
It holds that $\vartheta(C^*) = \vartheta(S^*)$.
\end{mytheorem}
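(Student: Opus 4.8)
The plan is to first establish the identity $\sigma(C) = \vartheta(C)$ for every multilayer core $C$, and then derive the theorem from the two inequalities $\vartheta(C^*) \le \vartheta(S^*)$ and $\vartheta(C^*) \ge \vartheta(S^*)$. For the identity I would invoke Corollary~\ref{cor:mindegvector}: the maximal \corenessvec\ $\vec{k} = [k_\ell]_{\ell \in L}$ of $C$ satisfies $k_\ell = \mu(C,\ell)$ for every layer $\ell$, so $\sigma(C) = \max_{\hat{L} \subseteq L}\big(\min_{\ell \in \hat{L}} k_\ell\big)|\hat{L}|^\beta = \max_{\hat{L} \subseteq L}\big(\min_{\ell \in \hat{L}} \mu(C,\ell)\big)|\hat{L}|^\beta$; since $\varphi(C,\hat{L}) = \min_{\ell \in \hat{L}}\min_{u \in C}\deg_C(u,\ell) = \min_{\ell \in \hat{L}}\mu(C,\ell)$, this last expression is exactly $\vartheta(C)$.

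The easy inequality $\vartheta(C^*) \le \vartheta(S^*)$ is immediate: $C^*$ is by definition a multilayer core containing all query vertices $V_Q$, hence it is one of the feasible subsets in the optimization defining $S^*$ (Equation~(\ref{eq:cssolution})), so $\vartheta(S^*) \ge \vartheta(C^*)$ by optimality of $S^*$.

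The reverse inequality $\vartheta(C^*) \ge \vartheta(S^*)$ is the heart of the argument: I would exhibit a single multilayer core $C$ that contains $V_Q$ and already achieves density at least $\vartheta(S^*)$, so that $\vartheta(C^*) = \sigma(C^*) \ge \sigma(C) = \vartheta(C) \ge \vartheta(S^*)$, using the identity above together with the fact that $C^*$ maximizes $\sigma$ over all cores containing $V_Q$ (Equation~(\ref{eq:cscore})). To build $C$, let $\hat{L}^* \subseteq L$ attain the maximum in $\vartheta(S^*) = \varphi(S^*,\hat{L}^*)|\hat{L}^*|^\beta$, and set $d = \varphi(S^*,\hat{L}^*) = \min_{\ell \in \hat{L}^*}\mu(S^*,\ell)$; then define the vector $\vec{k}$ by $k_\ell = d$ for $\ell \in \hat{L}^*$ and $k_\ell = 0$ otherwise. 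By construction $\mu(S^*,\ell) \ge k_\ell$ for all $\ell \in L$, so $S^*$ satisfies the degree constraints of the $\vec{k}$-core; since $C_{\vec{k}}$ is the \emph{maximal} such subgraph (Definition~\ref{def:mlkcores}), we get $S^* \subseteq C_{\vec{k}}$, hence $V_Q \subseteq C_{\vec{k}}$ and $C_{\vec{k}} \neq \emptyset$, i.e., $C_{\vec{k}} \in \coresset$. Finally, again by Definition~\ref{def:mlkcores}, $\mu(C_{\vec{k}},\ell) \ge k_\ell = d$ for every $\ell \in \hat{L}^*$, so $\vartheta(C_{\vec{k}}) \ge \varphi(C_{\vec{k}},\hat{L}^*)|\hat{L}^*|^\beta = \big(\min_{\ell \in \hat{L}^*}\mu(C_{\vec{k}},\ell)\big)|\hat{L}^*|^\beta \ge d\,|\hat{L}^*|^\beta = \vartheta(S^*)$. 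Combining the two directions gives $\vartheta(C^*) = \vartheta(S^*)$.

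I expect the main obstacle to be the construction in the last paragraph: the natural candidate to compare against $S^*$ is the induced subgraph $G[S^*]$ itself, but that need not be a multilayer core, so one must instead pass to the $\vec{k}$-core determined by $S^*$'s optimal layer subset $\hat{L}^*$ and its bottleneck degree $d$, and then lean on the maximality clause of Definition~\ref{def:mlkcores} twice --- both to obtain $S^* \subseteq C_{\vec{k}}$ and to obtain $\mu(C_{\vec{k}},\ell) \ge d$ on $\hat{L}^*$. A minor cleanup is the degenerate case $\vartheta(S^*) = 0$ (or $V_Q = \emptyset$, or $\hat{L}^* = \emptyset$): the whole graph is always a multilayer core containing $V_Q$, so $C^*$ is well-defined, and $\vartheta \ge 0$ makes the reverse inequality trivial in that case.
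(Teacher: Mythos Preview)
Your proof is correct and follows essentially the same strategy as the paper: exhibit a multilayer core containing $S^*$ (hence $V_Q$) whose score matches $\vartheta(S^*)$. The paper's proof is terser and chooses the \corenessvec\ $\vec{\mu} = [\mu(S^*,\ell)]_{\ell \in L}$ directly (rather than your sparser vector supported on $\hat{L}^*$), and it does not spell out the identity $\sigma(C) = \vartheta(C)$ via Corollary~\ref{cor:mindegvector} --- a step you make explicit and which is in fact needed to pass from $\sigma(C^*)$ back to $\vartheta(C^*)$.
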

\begin{proof}
We prove the statement by contradiction, assuming that $\vartheta(C^*) < \vartheta(S^*)$.
Let $\mu_\ell = \min_{u \in S^*} deg_{S^*} (u,\ell)$, and $\vec{\mu} = [\mu_{\ell}]_{\ell \in L}$.
By definition of multilayer core, there exists a core $C \in \coresset$ of $G$ with coreness vector $\vec{\mu}$ such that $S^* \subseteq C$.
This means that 
$$
\sigma(C) =  \max_{\hat{L} \subseteq L} (\min_{\ell \in \hat{L}} \mu_{\ell}) |\hat{L}|^{\beta} = \max_{\hat{L} \subseteq L} (\min_{\ell \in \hat{L}} \min_{u \in S^*} deg_{S^*} (u,\ell)) |\hat{L}|^{\beta} = \vartheta(S^*).
$$
Thus, there exists a core $C \in \coresset$ whose $\vartheta(\cdot)$ score is equal to $\vartheta(S^*)$, which contradicts the original assumption $\vartheta(C^*) < \vartheta(S^*)$.
\end{proof}

\spara{Algorithms.}
The core $C^*$ can be straightforwardly found by running any of the proposed algorithms for multilayer core decomposition -- \bfs\ (Algorithm~\ref{alg:bfs}), \dfs\ (Algorithm~\ref{alg:dfs}), or \hybrid\ (Algorithm~\ref{alg:hybrid}) -- and taking from the overall output core set the core maximizing the $\sigma(\cdot)$ score.
However, thanks to the constraint about containment of query vertices $V_Q$, the various algorithms can be speeded up by preventively skipping the computation of cores that do not contain $V_Q$.
Specifically, this corresponds to the following simple modifications:
\begin{itemize}
\item
\bfs\ (Algorithm~\ref{alg:bfs}): replace the condition at Line~\ref{line:bfs:exists} with ``{\bf if} $V_Q \subseteq C_{\vec{k}}$ {\bf then}''.
\item
\dfs\ (Algorithm~\ref{alg:dfs}): stop the \kcorespathalg subroutine used at Lines~\ref{line:dfs:queue}~and~\ref{line:dfs:cores} as soon as a core not containing $V_Q$ is encountered and make the subroutine return only the cores containing $V_Q$.
\item
\hybrid\ (Algorithm~\ref{alg:hybrid}): replace the condition at Line~\ref{line:hybrid:exists} with ``{\bf if} $V_Q \subseteq C_{\vec{k}}$ {\bf then}''.
\end{itemize}

\begin{table}[t!]
\centering
\caption{Comparison of the average runtime (in seconds) between the original algorithms for multilayer core decomposition and modified methods for community search, with varying the number $|V_Q|$ of query vertices.
In each dataset and for each $|V_Q|$, the smallest runtime is bolded.\label{tab:cs_results}}

\centerline{
\setlength{\tabcolsep}{3.75pt}
\begin{tabular}{c|c|c|cccccccccc}
\multicolumn{1}{c}{} & \multicolumn{2}{c}{} & \multicolumn{10}{c}{$|V_Q|$}\\\cline{4-13}
\multicolumn{1}{c}{dataset} & \multicolumn{1}{c}{method} & \multicolumn{1}{c}{original} & $1$ & $2$ & $3$ & $4$ & $5$ & $6$ & $7$ & $8$ & $9$ & $10$\\
\hline
\textsf{Homo} & \textsc{bfs} & $13$ & $2$ & $1$ & $0.7$ & $0.7$ & $0.6$ & $0.6$ & $0.6$ & $0.6$ & $0.6$ & $0.6$\\
& \textsc{dfs} & $27$ & $3$ & $2$ & $1$ & $1$ & $1$ & $0.9$ & $0.9$ & $0.9$ & $0.9$ & $0.9$\\
& \textsc{h} & $12$ & $\mathbf{0.9}$ & $\mathbf{0.3}$ & $\mathbf{0.1}$ & $\mathbf{0.1}$ & $\mathbf{0.1}$ & $\mathbf{0.1}$ & $\mathbf{0.1}$ & $\mathbf{0.1}$ & $\mathbf{0.1}$ & $\mathbf{0.1}$\\
\hline
\textsf{SacchCere} & \textsc{bfs} & $1\,134$ & $\mathbf{162}$ & $\mathbf{25}$ & $6$ & $3$ & $1$ & $1$ & $0.7$ & $0.7$ & $0.5$ & $0.5$\\
& \textsc{dfs} & $2\,627$ & $390$ & $58$ & $13$ & $6$ & $2$ & $2$ & $1$ & $1$ & $0.7$ & $0.6$\\
& \textsc{h} & $1\,146$ & $166$ & $\mathbf{25}$ & $\mathbf{5}$ & $\mathbf{2}$ & $\mathbf{0.5}$ & $\mathbf{0.8}$ & $\mathbf{0.2}$ & $\mathbf{0.2}$ & $\mathbf{0.1}$ & $\mathbf{0.1}$\\
\hline
\textsf{DBLP} & \textsc{bfs} & $68$ & $35$ & $35$ & $34$ & $34$ & $34$ & $34$ & $35$ & $34$ & $35$ & $36$\\
& \textsc{dfs} & $282$ & $55$ & $42$ & $39$ & $39$ & $38$ & $38$ & $38$ & $38$ & $39$ & $39$\\
& \textsc{h} & $29$ & $\mathbf{5}$ & $\mathbf{5}$ & $\mathbf{5}$ & $\mathbf{5}$ & $\mathbf{5}$ & $\mathbf{6}$ & $\mathbf{6}$ & $\mathbf{6}$ & $\mathbf{6}$ & $\mathbf{6}$\\
\hline
\textsf{Obama} & \textsc{bfs} & $226$ & $42$ & $36$ & $34$ & $33$ & $31$ & $32$ & $32$ & $32$ & $32$ & $33$\\
\textsf{InIsrael} & \textsc{dfs} & $150$ & $51$ & $38$ & $34$ & $33$ & $31$ & $31$ & $31$ & $30$ & $31$ & $31$\\
& \textsc{h} & $177$ & $\mathbf{15}$ & $\mathbf{10}$ & $\mathbf{10}$ & $\mathbf{9}$ & $\mathbf{9}$ & $\mathbf{9}$ & $\mathbf{9}$ & $\mathbf{9}$ & $\mathbf{9}$ & $\mathbf{9}$\\
\hline
\textsf{Amazon} & \textsc{bfs} & $3\,981$ & $2\,125$ & $1\,364$ & $608$ & $582$ & $441$ & $234$ & $231$ & $192$ & $175$ & $167$\\
& \textsc{dfs} & $5\,278$ & $3\,103$ & $2\,105$ & $1\,198$ & $1\,072$ & $851$ & $523$ & $515$ & $434$ & $406$ & $371$\\
& \textsc{h} & $3\,913$ & $\mathbf{2\,109}$ & $\mathbf{1\,342}$ & $\mathbf{570}$ & $\mathbf{546}$ & $\mathbf{405}$ & $\mathbf{190}$ & $\mathbf{190}$ & $\mathbf{150}$ & $\mathbf{134}$ & $\mathbf{127}$\\
\hline
\textsf{Friendfeed} & \textsc{bfs} & $61\,113$ & $2\,464$ & $1\,004$ & $597$ & $333$ & $243$ & $185$ & $117$ & $108$ & $85$ & $59$\\
\textsf{Twitter} & \textsc{dfs} & $1\,973$ & $\mathbf{129}$ & $\mathbf{73}$ & $\mathbf{48}$ & $\mathbf{33}$ & $\mathbf{30}$ & $\mathbf{27}$ & $\mathbf{22}$ & $\mathbf{21}$ & $\mathbf{19}$ & $\mathbf{17}$\\
& \textsc{h} & $59\,520$ & $2\,340$ & $916$ & $523$ & $278$ & $193$ & $136$ & $78$ & $69$ & $49$ & $28$\\
\hline
\textsf{Higgs} & \textsc{bfs} & $2\,480$ & $351$ & $149$ & $91$ & $65$ & $62$ & $56$ & $50$ & $45$ & $40$ & $41$\\
& \textsc{dfs} & $640$ & $\mathbf{125}$ & $\mathbf{77}$ & $60$ & $52$ & $51$ & $46$ & $46$ & $42$ & $42$ & $39$\\
& \textsc{h} & $2\,169$ & $239$ & $80$ & $\mathbf{43}$ & $\mathbf{23}$ & $\mathbf{21}$ & $\mathbf{16}$ & $\mathbf{14}$ & $\mathbf{9}$ & $\mathbf{8}$ & $\mathbf{8}$\\
\hline
\textsf{Friendfeed} & \textsc{bfs} & $58\,278$ & $150$ & $51$ & $27$ & $25$ & $25$ & $24$ & $23$ & $23$ & $23$ & $23$\\
& \textsc{dfs} & $13\,356$ & $803$ & $220$ & $82$ & $68$ & $68$ & $66$ & $58$ & $58$ & $59$ & $57$\\
& \textsc{h} & $47\,179$ & $\mathbf{10}$ & $\mathbf{4}$ & $\mathbf{2}$ & $\mathbf{2}$ & $\mathbf{2}$ & $\mathbf{2}$ & $\mathbf{2}$ & $\mathbf{2}$ & $\mathbf{2}$ & $\mathbf{2}$\\
\hline
\end{tabular}
}

\end{table}

\subsection{Experimental results}
We experimentally prove the efficiency of the modifications adopted by our algorithms for multilayer community search by reporting a comparison against the original algorithms with no such modifications.
\revision{
Therefore, we consider as baselines the algorithms introduced in Section~\ref{sec:algorithms} for computing the entire multilayer core decomposition, i.e., the \bfs, \dfs, and \hybrid\ algorithms.
}
We vary the  size $|V_Q|$ of the query-vertex set from $1$ to $10$.
For every query-set size, we select -- uniformly at random -- a number of $100$ different query-vertex sets from the whole vertex set.
% and run the modified versions of \bfs, \dfs, and \hybrid.
We also vary $\beta$ from $0.1$ and $100$.
The runtime with varying $|V_Q|$ is shown in Table~\ref{tab:cs_results}.
All results are averaged over the various query-vertex sets sampled.
%; in each dataset, for each $|V_Q|$, the smallest runtime is reported in bold.

In all datasets and for all algorithms, the modifications yield considerable improvement.
For $|V_Q| = 1$, which is the most demanding scenario in terms of runtime, we achieve from one to three orders of magnitude of speedup in all the cases (with the exception of \textsf{Amazon}).
As the number of query vertices increases, the modifications become even more effective: for $|V_Q| > 2$, we obtain at least one order of magnitude of speedup, up to a maximum of four orders of magnitude on the \textsf{Friendfeed} dataset.

%A further interesting matter is the identification of the fastest method for the multilayer-community-search problem.
\revision{
As a further insight, for a number of query vertices $|V_Q| \leq 2$, the runtime of the methods for multilayer community search is strongly dependent on the underlying algorithm for multilayer core decomposition.
For example, on the \textsf{SacchCere} and \textsf{Higgs} datasets, \textsc{h} is outperformed by \textsc{bfs} and \textsc{dfs}, respectively.
The picture is instead different for $|V_Q| > 2$: \textsc{h} turns out to be the fastest algorithm in all the datasets, with the exception of \textsf{FriendfeedTwitter}, for which \textsc{dfs} achieves better performance up to $10$ query vertices.
Therefore, in general, the core-lattice visit performed by \textsc{h} results to be more effective in identifying the solution multilayer core quickly. 
In the case of \textsf{FriendfeedTwitter} instead, the gap between the original runtime of \textsc{dfs} and \textsc{h} is so marked that, even if \textsc{h} yields better speedup, it is not able to outperform \textsc{dfs}.
This behavior is mainly motivated by the small number of layers of \textsf{FriendfeedTwitter} (only $2$), which, as already observed beforehand, favors \textsc{dfs} in terms of runtime.
}

\section{Conclusions}
\label{sec:conclusions}
%!TEX root = tkdd_revision.tex
Core decomposition has been proven to be a fundamental graph-analysis tool with plenty of applications.
In this work we study core decomposition in multilayer networks, characterizing its usefulness, its relation to other problems, and its  complexity.
We then devise three efficient algorithms for computing the whole core decomposition of a multilayer network  and we show a series of non-trivial applications of the core decomposition to solve related problems.
In particular:
\begin{itemize}
\item Given the large number of multilayer cores, we devise an efficient algorithm for efficiently computing the inner-most cores only.
\item We study densest-subgraph extraction in multilayer graphs as a proper optimization problem trading off between high density and layers exhibiting high density, and show how core decomposition can be used to approximate this problem with quality guarantees.
\item We show how the multilayer core-decomposition tool can be theoretically exploited as a data-reduction, preliminary step to speed up the extraction of frequent cross-graph quasi-cliques, and experimentally prove the effectiveness of our approach with respect to the original algorithm that searches for frequent cross-graph quasi-cliques in the whole input graph.
\item We generalize the multilayer community-search problem to the multilayer case and show how to exploit multilayer core decomposition to obtain optimal solutions to this problem.
\end{itemize}

In our on-going and future investigation we plan to employ multilayer core decomposition for the analysis of multilayer brain networks in which each layer represents a patient, vertices are brain regions, and edges are co-activation interactions measured by fMRI scans.
In this scenario the multilayer core-decomposition tool may be particularly powerful in the task of identifying common patterns to patients affected by diseases or under the assumption of drugs and, also, to select features in order to discriminate diseased patients from healthy individuals.

%\bibliographystyle{ACM-Reference-Format}
%\bibliography{references}

%\begin{figure}[t!]
%%%\vspace{-6mm}
%\centering
%\includegraphics[width=.85\columnwidth]{figures/dblp_casestudy.pdf}
%%%\vspace{-3mm}
%\caption{\label{fig:dblpcase} \small Multilayer densest subgraph extracted by Algorithm~\ref{alg:densest} from the \textsf{DBLP} dataset ($\beta = 2.2$).}
%%%%\vspace{-3mm}
%\end{figure}

\end{document}